 \newtheorem{thm}{Theorem}[section]
 \newtheorem{cor}[thm]{Corollary}
 \newtheorem{lemma}[thm]{Lemma}
 \newtheorem{prop}[thm]{Proposition}
 \theoremstyle{definition}
 \newtheorem{defn}[thm]{Definition}
 \newtheorem{rem}[thm]{Remark}
 \newtheorem{ex}{Example}
 \newtheorem{assumption}[thm]{Assumption}
 \numberwithin{equation}{section}
\newcommand{\R}{\mathbb{R}}
\newcommand{\N}{\mathbb{N}}
\newcommand{\C}{\mathbb{C}}                           
\newcommand{\Z}{\mathbb{Z}}
\newcommand{ \ii}{\,\mathrm{i}\,}
\newcommand{\s}[1]{\CMcal{#1}}
\newcommand{\bb}[1]{\mathscr{#1}}
\newcommand{\rr}[1]{\mathfrak{#1}}
\newcommand{\expo}[1]{\,\mathrm{e}^{#1}\,}            
\newcommand{\ketbra}[2]{|#1\rangle\langle#2|}
\newcommand{\caA}{{\mathcal A}}
\newcommand{\caD}{{\mathcal D}}
\newcommand{\caH}{{\mathcal H}}
\newcommand{\caN}{{\mathcal N}}
\newcommand{\bbG}{{\mathbb G}}
\newcommand{\bbH}{{\mathbb H}}
\newcommand{\bbL}{{\mathbb L}}
\newcommand{\bbN}{{\mathbb N}}
\newcommand{\bbR}{{\mathbb R}}
\newcommand{\bbZ}{{\mathbb Z}}
\newcommand{\ie}{{\it i.e.\/} }
\newcommand{\iu}{\mathrm{i}}
\newcommand{\str}{^{*}}
\newcommand{\ep}[1]{\mathrm{e}^{#1}}
\newcommand{\dd}{\,\mathrm{d}}
\newcommand{\Tr}{\mathrm{Tr}}
\begin{document}
\title{Lieb-Robinson bounds in the continuum via localized frames}

\author[1]{Sven Bachmann}
\author[2]{Giuseppe De~Nittis$^\star$}

\address[1]{Department of Mathematics, University of British Columbia, Vancouver, BC V6T 1Z2, Canada}
\email{sbach@math.ubc.ca}
\address[2]{Facultad de Matem\'aticas \& Instituto de F\'{\i}sica,
  Pontificia Universidad Cat\'olica de Chile,
  Santiago, Chile.}
\email{gidenittis@mat.uc.cl}

\date{\today\\ $^\star$Corresponding author}                 
 
\begin{abstract}
We study the dynamics of interacting fermions in the continuum. Our approach uses the concept of lattice-localized frames, which we introduce here. We first prove a Lieb-Robinson bound that is valid for a general class of local interactions, which implies the existence of the dynamics at the level of the CAR algebra. We then turn to the physical situation relevant to the (fractional) quantum Hall effect, namely the quasi-free second quantized Landau Hamiltonian to which electron-electron interactions can be added.
\end{abstract}

\maketitle


\section{Introduction}   \label{sec: introduction}

The quantum dynamics of electrons in condensed matter systems in the thermodynamic limit is a phenomenologically very rich and theoretically very well understood topics. The integer quantum Hall effect is one of the most remarkable effects that have been observed and is by now mathematically well understood~\cite{TKNN,Laughlin,AvronSeiler85,AvronSeilerSimonCMP94}. In the case of interacting electrons, integer quantization of the Hall conductance can be proved for lattice systems in large but finite volume~\cite{HastingsMichalakis,OurHall} and in the infinite volume limit~\cite{kapustin2020hall}. In the fractional case where interactions are necessary, progress was made in~\cite{RationalQH,AbelianAnyons}. The key ingredient to all of these works is the Lieb-Robinson bound~\cite{Lieb:1972ts}, which is a ballistic upper bound for the quantum dynamics.

Among the first and arguably most fundamental corollary of the Lieb-Robinson bound is that it allows for the construction of the quantum dynamics in the infinite volume limit~\cite{Robinson_LR,nachtergaele2006propagation}. By this, we mean here that in a system with sufficiently local interactions, there is a one-parameter family of $\ast$-automorphisms of the algebra of observables, namely the algebra of canonical anticommutation relations over the one-particle Hilbert space, e.g. $\ell^2(\mathbb{Z}^\nu)$ for spinless fermions on the regular lattice. It is worth pointing out immediately that the question of the unitary implementation of the dynamics in a given representation of the algebra in Hilbert space is a separate question which we will not address here --- and does not play an important role in this setting. 

The situation is completely different if the lattice $\mathbb{Z}^\nu$ is replaced by the continuum $\mathbb{R}^\nu$, which is arguably physically at least as relevant. Beyond the free case known as Bogoliubov automorphisms~\cite{Lundberg,Araki}, the only classical results we are aware of are~\cite{streater1968certain,streater1970time} for fermionic field theories. Notable progress was made recently, starting with~\cite{Gebert_LR} and complemented by~\cite{Marius1}. All of these works start with the observation that the ultraviolet problem is the key hurdle for a proper definition of the dynamics and resolve it by considering suitably smeared creation and annihilation operators to define the interaction. \cite{Gebert_LR,Marius1} construct the dynamics using the method that is standard in lattice systems, namely by first proving a finite volume Lieb-Robinson bound with parameters that are independent of the volume.

In this work, we adopt a parallel point of view on the same problem which is motivated by the situation of the quantum Hall effect. There (and without any driving) the perpendicular magnetic field $B$ induces cyclotron orbits of the otherwise free electrons. The electronic wavefunctions are localized in a Gaussian way around the centre of their cyclotron orbits: This provides a natural smearing at the scale of the magnetic length $\ell_B$. The interactions can therefore be expressed as being between these smeared out electrons. The cyclotron orbits are labelled by $\gamma\in\mathbb{R}^2$ but it is well-known that the corresponding wavefunctions $\chi_\gamma$ form an overcomplete basis, in direct analogy with classical coherent states of the harmonic oscillator. We leverage on this to extract a countable number of them that are labelled by a set $\Gamma$, effectively reducing the continuum to a lattice. The remaining set $\{\chi_\gamma:\gamma\in\Gamma\}$ is a frame~\cite{boon-zak-78,bargmann-butera-girardello-klauder-71,perelomov-71}, which is still overcomplete and in particular not orthonormal. As a consequence, the corresponding creation and annihilation operators do not exactly anticommute as would lattice creation and annihilation operators. Nonetheless, any observable can be expressed in terms of the operators $\{a(\chi_\gamma),a^*(\chi_\gamma):\gamma\in\Gamma\}$ and lattice techniques apply with appropriate adaptations. 

With this `frame based second quantization', we consider a large class of local interactions and prove by a variation on the standard techniques that the corresponding finite-volume dynamics satisfies a Lieb-Robinson bound. As a corollary, we prove convergence to a proper dynamics on the infinite volume algebra. These interactions are reminiscent of Haldane's pseudo-potentials~\cite{HaldanePot}, although this here is much more general as our goal is not to relate these interactions specifically to the fractional quantum Hall effect. We emphasize that nothing is lost in going from the continuum to the lattice since the latter is a frame. In particular, any interaction written in terms of the frame creation-annihilation operators could be expressed in the standard terms of second quantization. But the use of the frame allows for a rather straightforward proof of the propagation estimate. We note that~\cite{BrunoSimoneAmanda,SimoneAmanda} use an approach that similarly conjure `lattice techniques' for the continuum, in this case to prove spectral gaps for a $\nu$= 1/3 fractional quantum Hall system.

While we highlight the example of the magnetic frames, the techniques we use here would equally apply to another physically relevant situation, namely that of Bloch electrons with a spectral gap. It is known that the subspace associated with the Fermi projection in the gap can be spanned by a frame of localized states, the so-called `generalized Wannier functions'. Such a frame is indeed over-complete if and only if the Fermi projection is not topologically trivial (non-zero Chern number) and is an orthonormal basis in the case of a trivial topology, see~\cite{cornean2019parseval,auckly2018parseval}. This localized frame fits our assumptions and results in the validity of a Lieb-Robinson bound for the dynamics of interacting Bloch electrons in the continuum in a gapped regime.

The paper is organized as follows. We review fundamental facts about frames in Section~\ref{sec: Gabor frames}, and we introduce the new notion of a lattice-localized frame, which is motivated by the Landau Hamiltonian. The proof of a Lieb-Robinson bound and its consequence follows in Section~\ref{sec: LRB}. In Section~\ref{sec: Fock quantization}, we consider the case of a free dynamics in the Fock space associated with a pure, gauge-invariant quasi-free state, expressing in particular the free Hamiltonian in terms of the frame creation-annihilation operators. We thereby build a bridge  between the $C^*$-algebraic setting based on the notion of interactions and the second quantization based on the magnetic Gabor frame. Finally, Section~\ref{sec: Laudau Hamiltonian} reviews the construction of the magnetic frame associated with the Landau Hamiltonian, and proves that it satisfies the assumptions of the previous sections. Finally, a couple of technical proofs are presented in Section~\ref{sec:add_proof}. 

\subsection*{Acknowledgements} GD's research is supported by the grant \emph{Fondecyt Regular - 1230032}. SD acknowledges financial support from NSERC of Canada.


\section{Localized frames}   \label{sec: Gabor frames}
In this section, which is mainly based on \cite{christensen-03}, we present basic facts about the theory of frames and discuss the properties of a certain class of relevant frames, which we call \emph{lattice-localized frames}.

\subsection{Basis and Riesz basis}   \label{sub: frames}

Let $\caH$ be a separable Hilbert space and $\Gamma$ a discrete and countable metric space endowed with a distance $d:\Gamma\times \Gamma\to [0,+\infty]$. We will assume that 
\begin{equation}\label{eq:bound_cond}
m_\epsilon := \sup_{\gamma\in\Gamma}\sum_{\xi\in\Gamma}\expo{-\epsilon d(\gamma,\xi)}\;<+\infty\;,
\end{equation}
for all $\epsilon>0$. Note that $m_\epsilon\geqslant1$ for every $\epsilon>0$ and the function $\epsilon\mapsto m_\epsilon$ is decreasing and $m_\epsilon\to\infty$ if $\epsilon\to 0^+$.

A sequence\footnote{Note that we use here an a priori enumeration $\{\gamma_n\}_{n\in\bbN}$ of $\Gamma$.} $\{\chi_\gamma\}_{\gamma\in\Gamma}\subset\caH$ is \emph{complete} if
\[
\overline{{\rm span}\{\chi_\gamma\}_{\gamma\in\Gamma}} = \caH
\]
and it is \emph{minimal} if, for any $\gamma'\in\Gamma$,
\[
\chi_{\gamma'}\;\notin\;\overline{{\rm span}\{\chi_\gamma\}_{\gamma\in\Gamma\setminus\{\gamma'\}}}\;.
\]
Finally, a sequence $\{\chi_\gamma\}_{\gamma\in\Gamma}\subset\caH$ is a \emph{basis}  if there exists a unique sequence $\{\eta_\gamma\}_{\gamma\in\Gamma}\subset\caH$ such that
\begin{align}
\langle\eta_\gamma,\chi_{\gamma'}\rangle &=\delta_{\gamma,\gamma'}\;,\qquad \forall \gamma,\gamma'\in\Gamma\:, \label{eq:biortho}\\
\psi &=\sum_{\gamma\in\Gamma}\langle\eta_\gamma,\psi\rangle\chi_\gamma\;,\qquad \forall \psi\in\caH\;.\label{eq:basis}
\end{align}
We shall refer to~(\ref{eq:biortho}) as the biorthogonality condition. The convergence in \eqref{eq:basis} is meant as
\[
\lim_{N\to\infty}\Big\|\psi-\sum_{n=1}^N\langle\eta_{\gamma_n},\psi\rangle\chi_{\gamma_n}\Big\| = 0\;.
\]
A basis is complete and minimal. The uniqueness implies that if 
$\psi=\sum_{\gamma\in\Gamma}c_\gamma\chi_\gamma$ then necessarily 
$c_\gamma=\langle\eta_\gamma,\psi\rangle$.


A basis $\{\epsilon_\gamma\}_{\gamma\in\Gamma}\subset\caH$ is \emph{orthonormal} if it coincides with its biorthogonal system, \ie if 
$\langle\epsilon_\gamma,\epsilon_{\gamma'}\rangle=\delta_{\gamma,\gamma'}$. A basis $\{\chi_\gamma\}_{\gamma\in\Gamma}\subset\caH$ is a \emph{Riesz basis} if there is an orthonormal basis $\{\epsilon_\gamma\}_{\gamma\in\Gamma}\subset\caH$ and an invertible operator $W\in\bbG\bbL(\caH)$ such that $\chi_\gamma=W\epsilon_\gamma$. In such a case the biorthogonal system is given by 
\[
\eta_\gamma := (W^{-1})^*\epsilon_\gamma = |W^{-1}|^2\chi_\gamma\;,
\]
or for short by $\eta_\gamma=S^{-1}\chi_\gamma$ with $S:=WW^*$. In this case, the series \eqref{eq:basis} converges unconditionally (\ie converges for all permutations of the indices) for all $\psi\in\caH$ \cite[Theorem 3.6.3]{christensen-03} and (\ref{eq:basis}) makes sense without referring to an a priori enumeration of $\Gamma$.

\subsection{Frames and Gabor frames}
A sequence $\{\chi_\gamma\}_{\gamma\in\Gamma}\subset\caH$ is a \emph{frame}  if there exist finite constants $0< A\leqslant B$ such that
\begin{equation}\label{eq:frame}
A\|\psi\|^2\;\leqslant\;\sum_{\gamma\in\Gamma}|\langle\chi_\gamma,\psi\rangle|^2\;\leqslant\;B\|\psi\|^2,\qquad \forall \psi\in\caH\;.
\end{equation}
The numbers $A$ and  $B$ (which are not unique) are called \emph{frame bounds}.
By definition, it follows that a frame is complete. 

Let $\{\chi_\gamma\}_{\gamma\in\Gamma}\subset\caH$ be a frame and let $S:\caH\to\caH$ be the operator given by
\begin{equation}\label{eq:fr_op}
S\psi := \sum_{\gamma\in\Gamma}\langle\chi_\gamma,\psi\rangle\;\chi_\gamma\;.
\end{equation}
The operator $S$ is called the \emph{frame operator}. It is bounded, invertible, self-adjoint, and positive. In fact
\begin{equation*}
A{\bf 1}\leqslant S\leqslant\;B{\bf1},
\end{equation*}
see~\cite[Lemma 5.1.5]{christensen-03}. Moreover, the series $\sum_{\gamma\in\Gamma}\langle S^{-1}\chi_\gamma,\psi\rangle\chi_\gamma$ converges unconditionally for all $\psi\in\caH$ and 
\begin{equation}\label{eq:frame_basis}
\psi = \sum_{\gamma\in\Gamma}\langle S^{-1}\chi_\gamma,\psi\rangle\chi_\gamma\;,
\end{equation}
see~\cite[Theorem 5.1.6]{christensen-03}. The number $s_\gamma(\psi):=\langle S^{-1}\chi_\gamma,\psi\rangle$ are called \emph{frame coefficients} and $\{s_\gamma(\psi)\}_{\gamma\in\Gamma}\in\ell^2(\Gamma)$.

A Riesz basis is a frame \cite[Theorem 5.4.1]{christensen-03}. A frame that is not a Riesz basis is said to be an \emph{overcomplete frame}. In the latter case, there exist coefficients $\{z_\gamma\}_{\gamma\in\Gamma}\in\ell^2(\Gamma)\setminus\{0\}$ such that $\sum_{\gamma\in\Gamma}z_\gamma\chi_\gamma=0$. In particular, in the case of an overcomplete frame, the coefficients 
$\{c_\gamma\}_{\gamma\in\Gamma}\in\ell^2(\Gamma)$ are not uniquely determined by the representation $\psi=\sum_{\gamma\in\Gamma}c_\gamma\chi_\gamma$. However,  the
frame coefficients have the minimal $\ell^2$-norm among all sequences representing $\psi$, \cite[Lemma 5.4.2]{christensen-03} \ie
\[
\sum_{\gamma\in\Gamma}|\langle S^{-1}\chi_\gamma,\psi\rangle|^2\;\leqslant\;
\sum_{\gamma\in\Gamma}|c_\gamma|^2\;.
\]

Finally, \emph{Gabor frames} are special  frames in the Hilbert space $L^2(\bbR^\nu)$, see~\cite[Section 8.2]{christensen-03}. Let $\Gamma\subset \bbR^\nu$ a discrete and countable subset and $g\in L^2(\bbR^\nu)$ a fixed function. Then a frame $\{\chi_\gamma\}_{\gamma\in\Gamma}\subset L^2(\bbR^\nu)$ with elements given by
\[
\chi_\gamma(x) :=  {\rm e}^{{\rm i 2 \pi \sigma(\gamma, x)}}g(x-\gamma)\;,
\]  
with $\sigma:\bbR^\nu\times\bbR^\nu\to\bbR$ a bilinear and anti-symmetric map, is called   the (irregular) Gabor frame defined by $\Gamma$ and $g$. When $\Gamma$ is a lattice in $\bbR^\nu$ then the associated Gabor frame is called \emph{regular}.

\subsection{Example: The magnetic Gabor frame}\label{sec:MagGabor}
 Let us introduce a recurring example that will be relevant for the applications of the quantum dynamics in a constant magnetic field discussed in detail in Section \ref{sec: Laudau Hamiltonian}.  
Consider the closed subspace $\caH_0\subset L^2(\bbR^2)$ defined by
\begin{equation}\label{eq:descr_H_0}
\caH_0=\Big\{f(x){\rm e}^{-\omega^2 |x|^2}
\;\Big|\; f(x) = \sum_{m\in\bbN_0}a_m(x_1-{\rm i} x_2)^m \Big\}\;
\end{equation}
where $\bbN_0:=\bbN\cup\{0\}$ and $x=(x_1,x_2)\in\bbR^2$. In physical applications,  the parameter $\omega>0$ is related to the concept of \emph{magnetic length}  through the relation $\ell_B:=(2\omega)^{-1}$. We will refer to $\caH_0$ as the \emph{first Landau level} for reasons that will be clarified in Section \ref{sec: Laudau Hamiltonian}.
Let $\omega>0$ be fixed and let $g_{\omega}$ be the normalized function
\[
g_{\omega}(x) := \omega\sqrt{\frac{2}{\pi}}{\rm e}^{-\omega^2 |x|^2}\;.
\]
For any $\alpha,\beta>0$, let $\Gamma_{(\alpha,\beta)}:=\alpha \bbZ\times \beta \bbZ\subset\bbR^2$. For all $\gamma\in \Gamma_{(\alpha,\beta)}$, we define
\[
\chi_\gamma(x) := {\rm e}^{-{\rm i}2\omega^2\left(\gamma\wedge x\right)}g_{\omega}
\left(x-\gamma\right)\;,\qquad 
\]
where $(\gamma_1,\gamma_2)\wedge(x_1,x_2):=\gamma_1x_2-\gamma_2x_1$.
These functions are not orthogonal but are asymptotically orthogonal in the sense that
\begin{equation}\label{eq:gaus_orth}
\langle\chi_\gamma,\chi_{\gamma'}\rangle = {\rm e}^{-\omega^2 |\gamma-\gamma'|^2}\;,
\end{equation}
see~Lemma~\ref{prop:01}. We will show in Section \ref{sec: Laudau Hamiltonian} that the sequence $\{\chi_\gamma\}_{\gamma\in\Gamma_{(\alpha,\beta)}}$ may or may not be a frame for $\caH_0$ depending on the value of the product $\alpha\beta$. There are three distinct cases:
\begin{itemize}
\item[(1)] If $\alpha\beta<\frac{\pi}{2\omega^2}$ the family $\{\chi_\gamma\}_{\gamma\in\Gamma_{(\alpha,\beta)}}$ is overcomplete in the sense that any vector of the family is contained in the closed linear span of the other vectors. In particular, it is a regular Gabor frame
and this property is preserved if one removes from the family any finite number of vectors.

\item[(2)] If $\alpha\beta=\frac{\pi}{2\omega^2}$ the family $\{\chi_\gamma\}_{\gamma\in\Gamma_{(\alpha,\beta)}}$ is complete but it is not a frame. It remains complete if one single vector is removed but becomes incomplete if any two vectors are removed.

\item[(3)] If $\alpha\beta>\frac{\pi}{2\omega^2}$ the family $\{\chi_\gamma\}_{\gamma\in\Gamma_{(\alpha,\beta)}}$ is incomplete.
\end{itemize}
We will be mainly interested in case (1) and we shall refer to the family $\{\chi_\gamma\}_{\gamma\in\Gamma_{(\alpha,\beta)}}$ as the \emph{magnetic Gabor frame}. The threshold case (2) is known in the literature as a von Neumann lattice. In this case, by removing from the family 
$\{\chi_\gamma\}_{\gamma\in\Gamma_{(\alpha,\beta)}}$ a single vector one obtains a minimal and complete system. The discussion of this case, although interesting, is outside the scope of this work. Finally, case (3) will not be interesting for us in view of the lack of completeness.

\subsection{\bf Localized  frames} 
Let $\{\chi_\gamma\}_{\gamma\in\Gamma}\subset\s{H}$ be a {frame} 
  labelled by the 
discrete and countable set $\Gamma$  with a distance $d:\Gamma\times \Gamma\to[0,+\infty)$ satisfying
property \eqref{eq:bound_cond}.
\begin{defn}[Localized frame]\label{def:loc_fr}
A frame $\{\chi_\gamma\}_{\gamma\in\Gamma}\subset\s{H}$ is called (exponentially) localized if 
\begin{equation}\label{eq:eq:ineq_asum}
|\langle\chi_\gamma,\chi_{\gamma'}\rangle|\;\leqslant\;G \expo{-\lambda d(\gamma,\gamma')}\;,\qquad\forall\; \gamma,\gamma'\in\Gamma
\end{equation}
for some $G\geqslant 1$ and $\lambda>0$. For convenience we will assume the normalization of the frame \ie $\|\chi_\gamma\|=1$ for every $\gamma\in\Gamma$.
\end{defn}

\begin{ex}\label{Rk_exp_gab}
The magnetic Gabor frame described in Section \ref{sec:MagGabor} provides an example of a localized frame. More generally,
let $\Gamma=\alpha_1\Z\times\ldots\times\alpha_\nu\Z$, with $\alpha_j>0$ for all $j=1,\ldots,\nu$, and let $\alpha_\ast:=\min\{\alpha_1,\ldots,\alpha_\nu\}$. Assume that
\begin{equation}\label{eq:mic_mol}
|\langle\chi_\gamma,\chi_{\gamma'}\rangle|\;\leqslant\;\expo{-\varpi |\gamma-\gamma'|^2}\;,\qquad \forall\;\gamma,\gamma'\in\Gamma\;
\end{equation}
for a given $\varpi>0$. By observing that for every $\gamma=(\alpha_1 n_1,\ldots,\alpha_\nu n_\nu)$
\[
|\gamma|^2 = \sum_{j=1}^\nu\alpha_j^2|n_j|^2\;\geqslant\;\sum_{j=1}^\nu\alpha_j^2|n_j|\;\geqslant\;\alpha_\ast\sum_{j=1}^\nu| \alpha_j n_j| = \alpha_\ast |\gamma|_1
\]
and introducing the distance $d(\gamma,\gamma') := \alpha_\ast |\gamma-\gamma'|_1$, we conclude from \eqref{eq:mic_mol} that
\begin{equation*}
|\langle\chi_\gamma,\chi_{\gamma'}\rangle|\;\leqslant\;\expo{-\lambda d(\gamma,\gamma')}\;,\qquad \forall\;\gamma,\gamma'\in\Gamma\;
\end{equation*}
with $\lambda:=\varpi\alpha_\ast$. Since the distance is invariant under translation one gets
\[
\begin{aligned}
\sum_{\xi\in\Gamma}\expo{-\epsilon d(\gamma,\xi)} 
&= \sum_{\xi\in\Gamma}\expo{-\epsilon d(0,\xi)} = \prod_{j=1}^\nu\sum_{n\in\Z}\Big(\expo{-\epsilon\alpha_\ast\alpha_j}\Big)^{|n|}
\leqslant2^\nu\prod_{j=1}^\nu\sum_{j=0}^{+\infty}\Big(\expo{-\epsilon\alpha_\ast\alpha_j}\Big)^{n}\\
&\leqslant2^\nu\Big(\sum_{j=0}^{+\infty}\Big(\expo{-\epsilon\alpha_\ast^2}\Big)^{n}\Big)^\nu = 
\Big(\frac{2}{1-\expo{-\epsilon\alpha_\ast^2}}\Big)^\nu,
\end{aligned}
\]
which yields an explicit upper bound on $m_\epsilon$ in~\eqref{eq:bound_cond}. \hfill $\blacktriangleleft$
\end{ex}

\begin{ex}\label{Rk_exp_gab_2}
For the aims of this work, it is relevant to consider a slight modification of the above frame. Let us consider the discrete set $\widetilde{\Gamma}:=\N_0\times\Gamma$ and we denote $\widetilde{\gamma}:=(r,\gamma)$ a generic point of $\widetilde{\Gamma}$. The extended frame $\{\chi_{\widetilde{\gamma}}\}_{\widetilde{\gamma}\in\widetilde{\Gamma}}$ satisfies
\begin{equation*}
|\langle\chi_{\widetilde{\gamma}},\chi_{\widetilde{\gamma}'}\rangle|\;\leqslant\;\delta_{r,r'}\expo{-\varpi |\gamma-\gamma'|^2}\;,\qquad \forall\;\widetilde{\gamma},\widetilde{\gamma}'\in\widetilde{\Gamma}\;
\end{equation*}
for a given $\varpi>0$. As in the above example, we obtain
\[
|\langle\chi_{\widetilde{\gamma}},\chi_{\widetilde{\gamma}'}\rangle|\;\leqslant\;
\expo{-\lambda \widetilde{d}(\widetilde{\gamma},\widetilde{\gamma}')}\;,\qquad \forall\;\widetilde{\gamma},\widetilde{\gamma}'\in\widetilde{\Gamma}\;,
\]
where $\widetilde{d}(\widetilde{\gamma},\widetilde{\gamma}') := |r-r'|+d(\gamma,\gamma')$, namely $\{\chi_{\widetilde{\gamma}}\}_{\widetilde{\gamma}\in\widetilde{\Gamma}}$ is again a localized frame, for the same constant $m_\epsilon$. The definition of the vectors $\widetilde{\gamma}$ and the physical relevance of this example will be clarified in Section~\ref{sec: Laudau Hamiltonian}, where we will also show that $L^2(\R^2)=\bigoplus_{r\in\N_0}\s{H}_{r}$, where
\[
\caH_ r := \overline{{\rm span}\{\chi_{(r,\gamma)}\}_{\gamma\in\Gamma}}
\]
is called the $r$-th \emph{Landau level}.
 \hfill $\blacktriangleleft$
\end{ex}

The representation~(\ref{eq:frame_basis}) underlines the importance of the operator $S^{-1}$. It turns out that the localization of the frame carries over to the matrix elements of $S^{-1}$ and its positive powers.
\begin{prop}\label{Prop: localization of S^-p}
Let $\{\chi_\gamma\}_{\gamma\in\Gamma}\subset\s{H}$ be a localized frame with localization rate $\lambda>0$, see Definition~\ref{def:loc_fr}, and let $S$ be the associated frame operator. Then, for every $p\in\N$ there exist constants $0<\lambda_p<\lambda$ and $a_p>0$ such that
\begin{equation}\label{matrix elements of S-p}
\left|\langle\chi_\gamma,S^{-p}\chi_{\gamma'}\rangle\right|\;\leqslant\; a_p \expo{-\lambda_p d(\gamma,\gamma')}
\end{equation}
for all $\gamma,\gamma'\in\Gamma$, where $S^{-p}:=(S^{-1})^p$.
\end{prop}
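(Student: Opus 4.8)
The plan is to reduce the statement to an off-diagonal decay estimate for a function of the Gram matrix on $\ell^2(\Gamma)$, and then to establish that decay by a Combes--Thomas argument. I introduce the synthesis operator $T\colon\ell^2(\Gamma)\to\caH$, $Tc=\sum_{\gamma\in\Gamma}c_\gamma\chi_\gamma$ (bounded by the upper frame bound $B$), its adjoint $(T^*\psi)_\gamma=\langle\chi_\gamma,\psi\rangle$, and the two associated positive operators $S=TT^*$ on $\caH$ and the Gram matrix $\caG:=T^*T$ on $\ell^2(\Gamma)$, whose entries are exactly $\caG_{\gamma\gamma'}=\langle\chi_\gamma,\chi_{\gamma'}\rangle$. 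By the localization hypothesis~\eqref{eq:eq:ineq_asum}, $\caG$ has off-diagonal exponential decay with rate $\lambda$. Writing $\chi_\gamma=Te_\gamma$ for the standard basis $\{e_\gamma\}$ and using the polar decomposition $T=V\caG^{1/2}$ (so that $V^*V$ is the projection onto $(\Ker{\caG})^\perp$, while $VV^*=\id$ because $S$ is invertible), a direct computation gives the intertwining identity
\[
\langle\chi_\gamma,S^{-p}\chi_{\gamma'}\rangle=\langle e_\gamma,\,T^*(TT^*)^{-p}T\,e_{\gamma'}\rangle=\langle e_\gamma,\,h(\caG)\,e_{\gamma'}\rangle,
\]
where $h(x)=x^{-(p-1)}$ on $[A,B]$ and $h(0)=0$ (for $p=1$ this is the projection onto $\overline{\Ran{\caG}}$). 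Thus it suffices to prove that $h(\caG)$ has off-diagonal exponential decay.

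Since $S$ is invertible, $\spec(\caG)\subseteq\{0\}\cup[A,B]$, so $0$ is separated from the rest of the spectrum by the gap $(0,A)$. I therefore represent $h(\caG)$ through the holomorphic functional calculus, choosing a contour $\caC$ that encircles $[A,B]$ once but leaves $0$ outside, so that the contribution of the spectral point $0$ vanishes (consistent with $h(0)=0$) and
\[
h(\caG)=\frac{1}{2\pi\iu}\oint_{\caC} z^{-(p-1)}\,(z-\caG)^{-1}\,\dd z .
\]
Setting $\delta_0:=\dist(\caC,\spec(\caG))>0$, the crux is to bound the resolvent entries $[(z-\caG)^{-1}]_{\gamma\gamma'}$ uniformly for $z\in\caC$.

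For the Combes--Thomas step I fix $\gamma'$, pick the $1$-Lipschitz weight $v(\xi)=\min\{d(\xi,\gamma'),R\}$ (truncated so the diagonal operator $W=\mathrm{diag}(\mathrm{e}^{\mu v(\xi)})$ is bounded), and form the conjugate $\caG_\mu:=W\caG W^{-1}$, with entries $\mathrm{e}^{\mu(v(\gamma)-v(\xi))}\caG_{\gamma\xi}$. Using $|v(\gamma)-v(\xi)|\le d(\gamma,\xi)$ together with $|\mathrm{e}^{\mu t}-1|\le\mu|t|\mathrm{e}^{\mu|t|}$ and the elementary bound $t\,\mathrm{e}^{-(\lambda-\mu)t}\le C_\kappa\,\mathrm{e}^{-\kappa t}$ for $0<\kappa<\lambda-\mu$, a Schur test controlled by the summability condition~\eqref{eq:bound_cond} yields $\|\caG_\mu-\caG\|\le \mu\,G\,C_\kappa\,m_\kappa$, which tends to $0$ as $\mu\to0^+$. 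Hence there is $\mu=\lambda_p\in(0,\lambda)$, depending only on $\lambda$, $G$, $m_\kappa$ and $\delta_0$ (thus on $A$ and $B$), such that $\|(z-\caG)^{-1}(\caG_\mu-\caG)\|\le\tfrac12$ for all $z\in\caC$; a Neumann series then gives $\|(z-\caG_\mu)^{-1}\|\le 2/\delta_0$ uniformly on $\caC$. Reading off the $(\gamma,\gamma')$ entry of $(z-\caG)^{-1}=W^{-1}(z-\caG_\mu)^{-1}W$ with $R\ge d(\gamma,\gamma')$ gives $|[(z-\caG)^{-1}]_{\gamma\gamma'}|\le (2/\delta_0)\,\mathrm{e}^{-\lambda_p d(\gamma,\gamma')}$. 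Inserting this into the contour integral and estimating by the length of $\caC$ and $\max_{z\in\caC}|z|^{-(p-1)}$ produces the claimed bound~\eqref{matrix elements of S-p} with $\lambda_p<\lambda$ and an explicit constant $a_p$.

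The main obstacle I anticipate is the Combes--Thomas estimate itself: one must verify that $\mu$ can be chosen uniformly over the compact contour and quantify how much it must shrink relative to $\lambda$ to keep the Schur sums finite (this is exactly where~\eqref{eq:bound_cond} enters), and one must handle the boundedness of the conjugating weight through the truncation $R$, checking that all bounds are uniform in $R$ before letting $R\to\infty$. The only genuinely structural point beyond routine estimates is that overcompleteness forces $0\in\spec(\caG)$; this is harmless precisely because of the spectral gap $(0,A)$, which lets $\caC$ avoid $0$ and makes $h$ the restriction of a function holomorphic near $\spec(\caG)$.
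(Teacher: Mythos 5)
Your proposal is correct, but it takes a genuinely different route from the paper. The paper's proof (Section~\ref{sec:add_proof}) is Jaffard-style and entirely elementary: it first shows that the matrices $\bb{S}^{(p)}$ of $S^p$ lie in the algebra $\rr{M}_\lambda$ of exponentially localized matrices (the algebra property resting on the summability hypothesis~\eqref{eq:bound_cond}), then writes $S^{-p}=\|S\|^{-p}\sum_{k\geqslant 0}R_p^k$ with $R_p=\idtyty-S^p\|S\|^{-p}$, $\|R_p\|=r_p<1$ by the frame bounds, and finally balances the two competing estimates $|\langle\chi_\gamma,R_p^k\chi_{\gamma'}\rangle|\leqslant d_{p,\epsilon}^k\expo{-(\lambda-\delta)d(\gamma,\gamma')}$ and $|\langle\chi_\gamma,R_p^k\chi_{\gamma'}\rangle|\leqslant r_p^k$ by truncating the Neumann series at $k_\ast\sim E_{p,\theta}\,d(\gamma,\gamma')$. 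You instead transfer the problem to the Gram matrix $\caG=T^*T$ on $\ell^2(\Gamma)$ via the intertwining $T^*(TT^*)^{-p}T=h(\caG)$ with $h(x)=x^{-(p-1)}$ on $[A,B]$ and $h(0)=0$ (which I checked is legitimate, including the treatment of the kernel of $\caG$ in the overcomplete case), and then combine the Riesz holomorphic functional calculus on a contour separating $[A,B]$ from $0$ with a Combes--Thomas conjugation; the Schur test for $\|\caG_\mu-\caG\|$ is controlled by exactly the same hypothesis~\eqref{eq:bound_cond}, and the spectral gap $(0,A)$ --- i.e.\ the lower frame bound --- plays the role that $r_p<1$ plays in the paper. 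Your route buys a decay rate $\lambda_p=\mu$ that can be taken \emph{independent of} $p$ (the $p$-dependence enters only through the scalar $z^{-(p-1)}$ in the prefactor $a_p$) and makes the structural role of the lower frame bound transparent; the paper's route avoids complex analysis and resolvent estimates altogether and yields the fully explicit constants worked out in Example~\ref{Rk_exp_gab_exp_cos}. The two technical points you flag (uniformity of $\mu$ over the compact contour, and uniformity in the truncation parameter $R$ of the weight before letting $R\to\infty$) are indeed the only places requiring care, and both go through as you describe.
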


The proof of this result is quite technical and we postpone it until Section~\ref{sec:add_proof}.
The result above has an immediate consequence that will be used in the following part of this work. 

\begin{cor}\label{Cor_1}
Under the hypotheses of Proposition~\ref{Prop: localization of S^-p}, let $H = H^*$ be a self-adjoint operator with dense domain $\s{D}(H)\subset\s{H}$ such that:
\begin{enumerate}
\item $H\chi_\gamma=E(\gamma)\chi_\gamma$ with $E(\gamma)\in\R$ 
for every $\gamma\in\Gamma$; 
\item $S\s{D}(H)\subseteq \s{D}(H)$.
\end{enumerate}
Then,
\[
\left|\langle\chi_\gamma,S^{-1}HS^{-1}\chi_{\gamma'}\rangle\right|\;\leqslant\; \delta_{E(\gamma),E(\gamma')} \; a_2  E(\gamma) \expo{-\lambda_2 d(\gamma,\gamma')}\;,\qquad \;.
\]
for all $\gamma,\gamma'\in\Gamma$.
\end{cor}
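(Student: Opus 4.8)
The plan is to express the matrix element $\langle\chi_\gamma, S^{-1}HS^{-1}\chi_{\gamma'}\rangle$ in a form where the eigenvalue property (i) can be exploited. Since $H\chi_{\gamma'}=E(\gamma')\chi_{\gamma'}$, the natural first move is to push $H$ onto the eigenvector $\chi_{\gamma'}$. The subtlety is that $S^{-1}\chi_{\gamma'}$ need not lie in $\s{D}(H)$, so I cannot naively write $HS^{-1}\chi_{\gamma'}$. This is where hypothesis (ii), $S\s{D}(H)\subseteq\s{D}(H)$, enters: I would first record that since $S$ is bounded and invertible with $S\s{D}(H)\subseteq\s{D}(H)$, one also needs $S^{-1}$ to preserve a suitable domain, or more carefully, I would work with the resolved identity. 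The clean way is to use the frame reconstruction to insert resolutions of the identity and reduce everything to the known localization estimate \eqref{matrix elements of S-p} with $p=2$.

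Here is the concrete route I would take. First I would establish the key algebraic identity
\[
\langle\chi_\gamma, S^{-1}HS^{-1}\chi_{\gamma'}\rangle = E(\gamma')\,\langle\chi_\gamma, S^{-2}\chi_{\gamma'}\rangle,
\]
and symmetrically, using self-adjointness of $H$ and of $S^{-1}$, also
\[
\langle\chi_\gamma, S^{-1}HS^{-1}\chi_{\gamma'}\rangle = E(\gamma)\,\langle\chi_\gamma, S^{-2}\chi_{\gamma'}\rangle.
\]
To justify the first, I would argue that $S^{-1}\chi_{\gamma'}$ is in $\s{D}(H)$: indeed $\chi_{\gamma'}\in\s{D}(H)$ by (i), and I would show that hypothesis (ii) together with boundedness of $S^{-1}$ forces $S^{-1}$ to map $\s{D}(H)$ into itself as well. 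Concretely, on the eigenvectors themselves one can verify this directly since $S^{-1}$ is a norm-limit of polynomials in $S$ (via the spectral theorem, $A\mathbf{1}\leqslant S\leqslant B\mathbf{1}$ gives a uniformly convergent power series for $S^{-1}$), and each such polynomial preserves $\s{D}(H)$ by (ii). Once $S^{-1}\chi_{\gamma'}\in\s{D}(H)$, I can legitimately apply $H$, but $HS^{-1}\chi_{\gamma'}$ is not simply proportional to $\chi_{\gamma'}$; instead I would instead move $H$ the other way, writing $\langle\chi_\gamma,S^{-1}HS^{-1}\chi_{\gamma'}\rangle = \langle HS^{-1}\chi_\gamma, S^{-1}\chi_{\gamma'}\rangle$ and, since $\chi_\gamma$ is an eigenvector and $S^{-1}$ commutes appropriately, reduce to $E(\gamma)\langle S^{-1}\chi_\gamma,S^{-1}\chi_{\gamma'}\rangle = E(\gamma)\langle\chi_\gamma,S^{-2}\chi_{\gamma'}\rangle$.

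The second ingredient is the selection rule $\delta_{E(\gamma),E(\gamma')}$. Having the two expressions $E(\gamma)\langle\chi_\gamma,S^{-2}\chi_{\gamma'}\rangle$ and $E(\gamma')\langle\chi_\gamma,S^{-2}\chi_{\gamma'}\rangle$ both equal to the same quantity, subtracting gives $(E(\gamma)-E(\gamma'))\langle\chi_\gamma,S^{-2}\chi_{\gamma'}\rangle=0$. Hence whenever $E(\gamma)\neq E(\gamma')$ the matrix element $\langle\chi_\gamma,S^{-2}\chi_{\gamma'}\rangle$ vanishes, which produces exactly the Kronecker factor $\delta_{E(\gamma),E(\gamma')}$. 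Finally I would invoke Proposition~\ref{Prop: localization of S^-p} with $p=2$ to bound $|\langle\chi_\gamma,S^{-2}\chi_{\gamma'}\rangle|\leqslant a_2\expo{-\lambda_2 d(\gamma,\gamma')}$, and combine this with the prefactor $E(\gamma)$ (interchangeable with $E(\gamma')$ on the support of the Kronecker delta) to obtain the claimed inequality.

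The main obstacle I anticipate is the domain bookkeeping in the first step: making rigorous that $S^{-1}$ preserves $\s{D}(H)$ and that the formal manipulation $\langle\chi_\gamma,S^{-1}HS^{-1}\chi_{\gamma'}\rangle=E(\gamma)\langle\chi_\gamma,S^{-2}\chi_{\gamma'}\rangle$ is valid despite $H$ being unbounded. Everything else is essentially formal once the eigenvalue relation can be moved across the bounded operator $S^{-1}$. I would therefore spend the care on justifying, via the spectral-calculus expansion of $S^{-1}$ as a uniformly convergent series in $S$ and the domain-invariance hypothesis (ii), that all the inner products are well defined and the eigenvalue can be extracted on either side.
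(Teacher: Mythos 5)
Your overall architecture matches the paper's: commute $H$ past $S^{-1}$, extract the eigenvalue on either side of the inner product, obtain the selection rule $\delta_{E(\gamma),E(\gamma')}$ by subtracting the two resulting expressions $E(\gamma)\langle\chi_\gamma,S^{-2}\chi_{\gamma'}\rangle$ and $E(\gamma')\langle\chi_\gamma,S^{-2}\chi_{\gamma'}\rangle$, and finish with Proposition~\ref{Prop: localization of S^-p} at $p=2$. The subtraction argument for the Kronecker delta and the final estimate are fine.

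The gap is in the one step that carries all the content: you never prove that $S$ (hence $S^{-1}$) commutes with $H$; you invoke it as ``$S^{-1}$ commutes appropriately.'' This is not a domain technicality resolvable by spectral calculus alone --- it is where hypothesis (i) enters. The paper establishes it by a one-line computation with the frame operator: for $\psi\in\s{D}(H)$,
\[
SH\psi=\sum_{\gamma\in\Gamma}\langle\chi_\gamma,H\psi\rangle\chi_\gamma=\sum_{\gamma\in\Gamma}E(\gamma)\langle\chi_\gamma,\psi\rangle\chi_\gamma=\sum_{\gamma\in\Gamma}\langle\chi_\gamma,\psi\rangle H\chi_\gamma=HS\psi\;,
\]
using $\langle\chi_\gamma,H\psi\rangle=\langle H\chi_\gamma,\psi\rangle=E(\gamma)\langle\chi_\gamma,\psi\rangle$; only then does functional calculus upgrade $SH=HS$ on $\caD(H)$ to $S^{-1}H=HS^{-1}$. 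Moreover, your proposed justification that $S^{-1}$ preserves $\s{D}(H)$ --- that $S^{-1}$ is a norm limit of polynomials in $S$, each of which preserves $\s{D}(H)$ by (ii) --- does not work as stated: the domain of a closed unbounded operator is not closed in the Hilbert-space norm, so $P_n(S)\chi_{\gamma'}\to S^{-1}\chi_{\gamma'}$ with $P_n(S)\chi_{\gamma'}\in\s{D}(H)$ yields $S^{-1}\chi_{\gamma'}\in\s{D}(H)$ only if you also control the convergence of $HP_n(S)\chi_{\gamma'}$, and controlling that requires precisely the commutation relation you are trying to establish; the argument is circular as written. Once the displayed identity is inserted first, the rest of your write-up goes through.
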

\begin{proof}
If $\psi\in \s{D}(H)$, then by~\eqref{eq:fr_op}, the self-adjointness of $H$ and (i,ii),
\begin{equation*}
SH\psi = \sum_{\gamma\in\Gamma}\langle \chi_\gamma, H\psi\rangle \chi_\gamma = \sum_{\gamma\in\Gamma} E_\gamma \langle \chi_\gamma, \psi\rangle \chi_\gamma
= \sum_{\gamma\in\Gamma}  \langle \chi_\gamma, \psi\rangle H\chi_\gamma = HS\psi\:,
\end{equation*}
namely $SH=HS$ on $\caD(H)$. By functional calculus this implies that $S^{-1}H=HS^{-1}$ and so
\[
\langle\chi_\gamma,S^{-1}HS^{-1}\chi_{\gamma'}\rangle = \delta_{E(\gamma),E(\gamma')} \; E(\gamma) \langle\chi_\gamma,S^{-2}\chi_{\gamma'}\rangle\;.
\]
The result follows from Proposition~\ref{Prop: localization of S^-p}.
\end{proof}

Let $\Gamma_E:=\{\gamma\in\Gamma\;|\; E(\gamma)=E\}$ with $E\in\sigma(H)$. Then $\Gamma = \bigcup_{E\in \sigma(H)}\Gamma_E$ and Corollary \ref{Cor_1} states that inside any $\Gamma_E$ the decay of the matrix elements $\langle\chi_\gamma,S^{-1}HS^{-1}\chi_{\gamma'}\rangle$ is exponential, and between different $\Gamma_E$'s the matrix elements vanish.

\addtocounter{ex}{-1}
\begin{ex}[continued]
Corollary \ref{Cor_1} is tailored to the decomposition of $L^2(\bbR^2)$, where each $\caH_r$ is the eigenspace of the \emph{Landau Hamiltonian}, namely $E(\widetilde \gamma) = q(r)$, where $q:\N\to \R$ is an increasing function. The partition of the label set is in terms of the Landau level $r$: $\widetilde \Gamma = \cup_{r\in\bbN}\left(\{r\}\times\Gamma\right)$.  
 \hfill $\blacktriangleleft$
\end{ex}

\section{A Lieb-Robinson bound for localized frames}   \label{sec: LRB}

We are ready to prove a Lieb-Robinson bound in the continuum. For this, we require additional mild regularity on the metric space $(\Gamma,d)$, namely that it is $\nu$-dimensional in the following sense: There is $\nu\in\bbN$ and $\kappa>0$ such that 
\begin{equation}\label{nu-dimensional}
\sup_{\gamma\in\Gamma}\vert \{\gamma'\in\Gamma: d(\gamma',\gamma)\leq r\}\vert\leqslant \kappa r^\nu
\end{equation}
for all $r\geqslant 0$.

\subsection{Interacting Hamiltonians for many-body fermions}   \label{sub: Hamiltonian}

\subsubsection{The CAR algebra}

Let $\caH$ be a separable Hilbert space. The CAR algebra $\caA(\caH)$ is the unital C*-algebra generated by $\bf1$ and
\begin{equation*}
\{a(\psi),a\str(\psi)\:\vert\:\psi\in\caH\}
\end{equation*}
subject to the relations
\begin{align}\label{CAR}
& a(\psi) a\str(\phi) + a\str(\phi)a(\psi) = \langle \psi,\phi\rangle \bf 1,\\
& a(\psi) a(\phi) + a(\phi)a(\psi) = 0 = a\str(\psi) a\str(\phi) + a\str(\phi)a\str(\psi),
\end{align}
which are usually called the \emph{canonical anticommutation relations}. Here $\psi\mapsto a(\psi)$ is antilinear and~(\ref{CAR}) immediately implies that $\Vert a(\psi)\Vert = \Vert\psi\Vert$, where the norm on the left is the C*-norm, while that on the right is the Hilbert space norm. 

\subsubsection{Free dynamics} 

Let $t\mapsto U_t$ be a strongly continuous one-parameter group of unitaries in $\caH$. Then 
\begin{equation}\label{Free D}
t\mapsto \tau_t(a(\phi)) := a(U_t\phi)\:,
\end{equation}
with $\tau_t(a\str(\phi)):= (\tau_t(a(\phi)))\str$, defines a strongly continuous family of *-automorphisms, namely a \emph{dynamics}, of $\caA(\caH)$. One defines the densely defined derivation $\delta$ on $\caA(\caH)$ by
\begin{equation*}
\frac{d}{dt}\tau_t(A)\vert_{t=0} = \delta(A).
\end{equation*}

If $H = H\str$ is the (in general unbounded) generator of $U_t$, we obtain a formal expression for~$\delta$:
\begin{equation*}
\delta(a(\phi)) = a(\iu H\phi).
\end{equation*}
Continuing along such formal manipulations, we let
\begin{equation}\label{Second Q on Basis}
\bbH = \sum_{n,m}h_{n,m}a\str(\psi_n)a(\psi_m),\qquad h_{n,m} = \langle \psi_n, H\psi_m\rangle,
\end{equation}
where $\{\psi_n\}_{n\in\bbN}$ is an orthnormal basis of $\caH$. We then note that $\bbH = \sum_m a\str(H\psi_m)a(\psi_m)$, and the anticommutation relations together with the identity $[A_1A_2,A_0] = A_1\{A_2,A_0\} - \{A_1,A_0\}A_2$ then yield
\begin{align*}
\iu[\bbH,a(\phi)] &= \iu\sum_m\left(a\str(H\psi_m)\{a(\psi_m),a(\phi)\} - \{ a\str(H\psi_m),a(\phi)\}a(\psi_m)\right) \\
&=a\Big(\iu\sum_m \langle H\psi_m,\phi\rangle\psi_m\Big) = a(\iu H \phi)
\end{align*}
since $H$ is self-adjoint. Hence,
\begin{equation*}
\delta(A) = \iu[\bbH,A].
\end{equation*}
The map $H\mapsto\bbH$ is usually referred to as the \emph{second quantization}. It is rather obvious that~(\ref{Second Q on Basis}) cannot in general be expected to be a convergent series in the C*-norm. The formal expression may however make sense as a densely defined self-adjoint operator in a particular representation of~$\caA(\caH)$, see~Section~\ref{sec: Fock quantization}.

Our first goal is much more general: We shall define a general dynamics of~$\caA(\caH)$ generated by formal expressions as~(\ref{Second Q on Basis}). While the quadratic expression reflects the absence of interactions, our methods allow for a very general class of truly interacting systems. This is well understood in the context of lattice systems --- spin systems or lattice fermions. A localized frame $\{\chi_\gamma\}_{\gamma\in\Gamma}$ will allow us to lift these methods to fermion systems in the continuum.

The following result follows from \cite[Theorem 5.2.5]{BratRob2} along with the identity~(\ref{eq:frame_basis}).
\begin{lemma}\label{lem:density}
Let $\mathfrak{U}_\Gamma\subset\caA(\caH)$ be the subalgebra of polynomials in $\{a(\chi_\gamma),a^*(\chi_\gamma):\gamma\in\Gamma\}$. Then $\mathfrak{U}_\Gamma$ is dense in $\caA(\caH)$.
\end{lemma}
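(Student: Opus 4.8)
The plan is to show that the norm closure $\overline{\mathfrak{U}_\Gamma}$ is all of $\caA(\caH)$. Since $\mathfrak{U}_\Gamma$ is a unital $*$-subalgebra --- the adjoint of any monomial in the $a(\chi_\gamma),a^*(\chi_\gamma)$ is again such a monomial --- its closure $\overline{\mathfrak{U}_\Gamma}$ is a C*-subalgebra of $\caA(\caH)$ containing $\mathbf 1$. Recall from \cite[Theorem 5.2.5]{BratRob2} that $\caA(\caH)$ is the smallest C*-algebra containing the identity together with all creation/annihilation operators $\{a(\psi),a^*(\psi):\psi\in\caH\}$; equivalently, the $*$-polynomials in these operators are norm dense. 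Hence it suffices to prove that every generator $a(\psi)$, and by taking adjoints every $a^*(\psi)$, lies in $\overline{\mathfrak{U}_\Gamma}$.

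First I would invoke the frame reconstruction formula \eqref{eq:frame_basis}, which expresses $\psi=\sum_{\gamma\in\Gamma}\langle S^{-1}\chi_\gamma,\psi\rangle\chi_\gamma$ as a series converging in $\caH$; writing $\psi_N:=\sum_{n=1}^N\langle S^{-1}\chi_{\gamma_n},\psi\rangle\chi_{\gamma_n}$ for the truncations, one has $\|\psi-\psi_N\|\to 0$. It is crucial here that the summands are the frame vectors $\chi_{\gamma}$ themselves, and not the dual vectors $S^{-1}\chi_\gamma$, so that, by antilinearity of $\psi\mapsto a(\psi)$,
\[
a(\psi_N)=\sum_{n=1}^N\overline{\langle S^{-1}\chi_{\gamma_n},\psi\rangle}\,a(\chi_{\gamma_n})\in\mathfrak{U}_\Gamma\;.
\]
The isometry $\|a(\phi)\|=\|\phi\|$ recorded after \eqref{CAR} then gives $\|a(\psi)-a(\psi_N)\|=\|a(\psi-\psi_N)\|=\|\psi-\psi_N\|\to 0$, so $a(\psi)\in\overline{\mathfrak{U}_\Gamma}$. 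The same argument applied to $a^*$, which is linear in its argument and likewise norm preserving since $\|a^*(\phi)\|=\|a(\phi)\|=\|\phi\|$, yields $a^*(\psi)\in\overline{\mathfrak{U}_\Gamma}$.

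This already closes the argument by minimality, but one may equivalently approximate a general monomial $b_1\cdots b_k$, with each $b_i\in\{a(\psi_i),a^*(\psi_i)\}$, factor by factor: replacing each $\psi_i$ by a truncation as above and using a telescoping estimate together with the uniform bound $\|b_i\|=\|\psi_i\|$ shows the product is approximated in C*-norm by an element of $\mathfrak{U}_\Gamma$. I do not expect any serious obstacle. The only genuinely important point is the choice of the expansion \eqref{eq:frame_basis} with the $\chi_\gamma$ themselves as summands, which guarantees that the approximants actually lie in the frame subalgebra $\mathfrak{U}_\Gamma$; the remaining ingredients are the continuity of the algebraic operations on bounded sets and the isometry $\|a(\phi)\|=\|\phi\|$, both of which are standard.
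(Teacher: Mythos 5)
Your argument is correct and is precisely the route the paper takes: the authors justify the lemma by citing \cite[Theorem 5.2.5]{BratRob2} (norm density of polynomials in the generators $a(\psi)$, $a^*(\psi)$) together with the frame reconstruction formula~\eqref{eq:frame_basis}, combined with the isometry $\|a(\phi)\|=\|\phi\|$ exactly as you do. Your write-up simply fills in the details the paper leaves implicit.
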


Concretely, for every $\psi_m$ of the basis, (\ref{eq:frame_basis}) yields that
\begin{equation}\label{Expansion in frame annihilation}
a(\psi_m) = \sum_{\gamma\in\Gamma} \overline{s_\gamma(m)}  a(\chi_\gamma)
\end{equation}
where $s_\gamma(m) = \langle S^{-1}\chi_\gamma,\psi_m\rangle$, and so (\ref{Second Q on Basis}) can equivalently be written (formally) as
\begin{equation}\label{Second Q on Frame}
\bbH = \sum_{\gamma,\gamma'\in\Gamma}t_H(\gamma',\gamma)a\str(\chi_{\gamma'})a(\chi_\gamma)
\end{equation}
where
\begin{equation*}
t_H(\gamma',\gamma)\;:=\;\sum_{n,m\in\N}h_{n,m}{s_{\gamma'}(n)}\overline{s_\gamma(m)},
\end{equation*}
for every $\gamma,\gamma'\in\Gamma$. 

\begin{prop}
Let $H$ be a self-adjoint operator and assume that $S^{-1}\chi_\gamma\in\s{D}(H)$ for every $\gamma\in\Gamma$.
Then,
\begin{equation*}
t_H(\gamma',\gamma)\;=\;\left\langle \chi_{\gamma'},V_H\chi_\gamma\right\rangle
\end{equation*}
where $V_H:=(S^{-1})HS^{-1}$. If the assumptions of Corollary~\ref{Cor_1} hold, then
\[
|t_H(\gamma',\gamma)|\;\leqslant\;  \delta_{E(\gamma),E(\gamma')} \; a_2 E(\gamma) \expo{-\lambda_2 d(\gamma,\gamma')}\;,\qquad \gamma,\gamma'\in\Gamma\;.
\]
\end{prop}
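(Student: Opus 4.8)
The plan is to collapse the double series defining $t_H(\gamma',\gamma)$ into a single inner product and then read off the estimate from Corollary~\ref{Cor_1}. Write $\phi:=S^{-1}\chi_\gamma$ and $\phi':=S^{-1}\chi_{\gamma'}$. Since the frame bounds give $B^{-1}\mathbf 1\leqslant S^{-1}\leqslant A^{-1}\mathbf 1$, the operator $S^{-1}$ is bounded, self-adjoint and positive; in particular $\phi,\phi'\in\caH$ and, by the standing hypothesis $S^{-1}\chi_\gamma\in\caD(H)$, the vector $H\phi\in\caH$ is genuinely defined. Recalling $s_\gamma(m)=\langle S^{-1}\chi_\gamma,\psi_m\rangle=\langle\phi,\psi_m\rangle$ and $h_{n,m}=\langle\psi_n,H\psi_m\rangle$, I would treat the defining expression as the iterated series $t_H(\gamma',\gamma)=\sum_n s_{\gamma'}(n)\big(\sum_m h_{n,m}\overline{s_\gamma(m)}\big)$ rather than as an unordered double sum.

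For the inner ($m$) sum I would move $H$ onto $\psi_n$ before summing. The very definition of $h_{n,m}$ already forces $\psi_m\in\caD(H)$, and since $\caD(H)$ is a dense linear subspace we may take the entire orthonormal basis inside it; then $h_{n,m}=\langle\psi_n,H\psi_m\rangle=\langle H\psi_n,\psi_m\rangle$ and $\sum_m h_{n,m}\langle\psi_m,\phi\rangle=\langle H\psi_n,\phi\rangle=\langle\psi_n,H\phi\rangle$, which is a convergent Parseval expansion of the fixed vector $\phi$. The outer ($n$) sum then reads $\sum_n\langle\phi',\psi_n\rangle\langle\psi_n,H\phi\rangle=\langle\phi',H\phi\rangle$, again by Parseval applied to the fixed vector $H\phi\in\caH$. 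Finally the self-adjointness of $S^{-1}$ yields $\langle\phi',H\phi\rangle=\langle S^{-1}\chi_{\gamma'},HS^{-1}\chi_\gamma\rangle=\langle\chi_{\gamma'},S^{-1}HS^{-1}\chi_\gamma\rangle=\langle\chi_{\gamma'},V_H\chi_\gamma\rangle$, which is the asserted identity.

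With the identity in hand the estimate is immediate: under the hypotheses of Corollary~\ref{Cor_1} one applies that result directly to $V_H=S^{-1}HS^{-1}$. The bound there, $\delta_{E(\gamma),E(\gamma')}\,a_2 E(\gamma)\,\expo{-\lambda_2 d(\gamma,\gamma')}$, is symmetric in its two arguments (the Kronecker delta and the distance $d$ are symmetric, and $E(\gamma)=E(\gamma')$ on the support of the delta), so the swap $\gamma\leftrightarrow\gamma'$ relating $\langle\chi_{\gamma'},V_H\chi_\gamma\rangle$ to the object bounded in Corollary~\ref{Cor_1} is harmless, and the stated estimate on $|t_H(\gamma',\gamma)|$ follows verbatim.

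The one genuinely delicate point, which I expect to be the main obstacle, is the interchange of the unbounded operator $H$ with the infinite $m$-summation. One cannot pull $H$ through the expansion $\phi=\sum_m\langle\psi_m,\phi\rangle\psi_m$ directly, since the partial sums need not remain controlled under an unbounded $H$, and the full double series $\sum_{n,m}$ is not absolutely convergent because $(h_{n,m})$ is the matrix of an unbounded operator. Keeping the series iterated and transferring $H$ onto $\psi_n\in\caD(H)$ by self-adjointness is precisely what turns each stage into a bona fide $\ell^2$ Parseval sum and makes the manipulation rigorous.
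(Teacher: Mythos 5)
Your proof is correct and follows essentially the same route as the paper, which truncates the double sum to $\langle S^{-1}\chi_{\gamma'},P_NHP_MS^{-1}\chi_\gamma\rangle$ with $P_N$ the basis projections and passes to the limit; your iterated-sum formulation with $H$ transferred onto $\psi_n$ by self-adjointness is just a more explicit rendering of the same argument. If anything, you are more careful than the paper about why the strong convergence $P_M\to\mathbf 1$ can be combined with the unbounded $H$, which is the only delicate point.
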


\begin{proof}
For any $N,M\in\bbN$, 
\begin{equation*}
t_{H}^{(N,M)}(\gamma',\gamma) := \sum_{n=1}^N\sum_{m=1}^M h_{n,m}{s_{\gamma'}(n)}\overline{s_\gamma(m)}
=\left\langle S^{-1}\chi_{\gamma'}, P_NHP_MS^{-1}\chi_\gamma\right\rangle\\
\end{equation*}
where $P_N:=\sum_{n=1}^N\ketbra{\psi_n}{\psi_n}$ is the projection onto the span of the first $N$ elements of the basis. The results follow by observing that $P_N$ converges strongly to the identity.
\end{proof}

In other words: If a localized frame consists of eigenvectors of a self-adjoint Hamiltonian, then the coefficients in a formal second quantization  which uses the creation and annihilation operators defined on the frame~(\ref{Second Q on Frame}) decay exponentially. This observation motivates the assumptions of the next section. We now first prove a Lieb-Robinson bound and then use it to define the dynamics at the level of the algebra. Note that the latter fact, namely the existence of the dynamics in the infinite volume limit, is trivial in the case just discussed of a free dynamics~(\ref{Free D}); It is in general an open question for fermionic systems in the continuum.

\subsubsection{Interaction}

In order to make sense of a Hamiltonian of the form~(\ref{Second Q on Frame}), we use the concept of an interaction, namely of a map $\Phi$ that attaches to any finite subset $Z\subset\Gamma$ a self-adjoint elements in $\caA_Z$. Here, $\caA_Z$ denotes the set of elements of $\caA(\caH)$ that can be expressed as a polynomial in $\{a_\gamma,a\str_\gamma\:\vert\:\gamma\in Z\}$. Here and from now on, we use the shorthand $a_\gamma := a(\chi_\gamma)$. Specifically, we consider an even interaction $\Phi: \Gamma \to \caA$ of the form
\begin{equation}\label{interaction}
\Phi(Z) = \sum_{k=1}^{|Z|} f_k(Z)(M_k(Z) + M_k(Z)^*)
\end{equation}
where  $M_k(Z)\in\caA_Z$ is a monic monomial of degree $2k$. Our proofs below would hold for slightly more general interactions than~\eqref{interaction} in the sense that we could handle more than just one term for each degree $k$, but this covers the physically relevant cases. Note that for any $Z$ the sum over $k$ is necessarily finite by the CAR relations: For each site one can accommodate a monomial of degree at most $2$ and in turn the maximum degree of a monomial over $Z$ is $2|Z|$.

While~(\ref{interaction}) determines the general form of the interactions, we need to impose conditions of sufficient decay of $\Vert \Phi(Z)\Vert$ with the size of the set $Z$. We express this as a condition on the function $f$ since $\Vert M_k(Z)\Vert = 1$. 
\begin{assumption}\label{Hyp:Interaction}
Let $\Phi$ be an interaction of the form~(\ref{interaction}). There is $\xi_0>0$ such that for all $0<\zeta<\xi<\xi_0$, 

\begin{equation}\label{Assumption Interaction}
C_\Phi(\zeta,\xi):=\sup_{\gamma\in\Gamma}\sup_{Z'\subset\Gamma}\frac{\ep{\zeta d(\gamma,Z')}}{\caD(Z')}\sum_{Z\subset \Gamma}\sum_{k=1}^{|Z|} k^2 f_k(Z)\caD(Z)\ep{-\zeta d(\gamma,Z)}\ep{-\xi d(Z,Z')}  <\infty
\end{equation}
where $\caD(Z) = (1+\mathrm{diam}(Z))^\nu$ and $\nu\in\bbN$ is the same of \eqref{nu-dimensional}.
\end{assumption}
\noindent By picking $Z' = \{\gamma'\}$, we see that this condition implies in particular that
\begin{equation}\label{Assumption Interaction Point}
\sum_{Z\subset \Gamma}\sum_{k=1}^{|Z|} k^2 f_k(Z)\caD(Z)\ep{-\zeta d(\gamma,Z)} \ep{-\xi d(Z,\gamma')}\leqslant C_\Phi(\zeta,\xi) \ep{-\zeta d(\gamma,\gamma')}
\end{equation}
for all $\gamma,\gamma'\in\Gamma$. 

Let us consider the simple case of a density-density interaction
\begin{equation*}
\sum_{\gamma,\gamma'\in\Gamma}f(\gamma,\gamma')n(\gamma) n(\gamma'),
\end{equation*}
where $n(\gamma) = a\str_\gamma a_\gamma$, namely
\begin{equation}\label{2B interaction}
f_k(Z) = \begin{cases}
f(x,y) &\text{if }Z =\{x,y\}, k=2 \\ 0 & \text{otherwise}.
\end{cases}
\end{equation}
In this case, the parameter $\xi$ of Assumption~\ref{Hyp:Interaction} determines the minimal rate of exponential decay of the interaction. 
\begin{lemma}
Assume that the interaction $\Phi$ is such that~(\ref{2B interaction}) holds and that there is $\mu>0$ such that 
\begin{equation}\label{2Bd Decay}
f(x,y) \leqslant f_0\ep{-\mu d(x,y)},\qquad \forall x,y\in\Gamma.
\end{equation}
Then Assumption~\ref{Hyp:Interaction} holds for all $\xi_0\leqslant\mu $.
\end{lemma}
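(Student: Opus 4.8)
The plan is to substitute the explicit density–density form \eqref{2B interaction} into the supremum defining $C_\Phi(\zeta,\xi)$ in \eqref{Assumption Interaction} and to reduce it to a product of two convergent geometric sums, powered respectively by the interaction decay $\mu$ and the spectral gap $\xi-\zeta$. Since $f_k(Z)$ vanishes unless $Z=\{x,y\}$ is a two-point set and $k=2$, the double sum $\sum_Z\sum_k k^2 f_k(Z)\caD(Z)(\cdots)$ collapses to $4\sum_{\{x,y\}} f(x,y)(1+d(x,y))^\nu(\cdots)$, where I used $\caD(\{x,y\})=(1+\mathrm{diam}\{x,y\})^\nu=(1+d(x,y))^\nu$ and noted that singletons contribute nothing since there $k=2$ lies outside the range $1\le k\le|Z|=1$. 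Applying \eqref{2Bd Decay} then replaces $f(x,y)$ by $f_0\,\ep{-\mu d(x,y)}$.

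The key step is to absorb the numerator $\ep{\zeta d(\gamma,Z')}$ so that the $\gamma$-supremum trivializes. For this I would use the set-distance triangle inequality $d(\gamma,Z')\le d(\gamma,Z)+\mathrm{diam}(Z)+d(Z,Z')$, valid for any finite $Z$ by choosing near-optimal points of $Z$ and $Z'$ and applying the pointwise triangle inequality. With $Z=\{x,y\}$ this gives $\ep{\zeta d(\gamma,Z')}\ep{-\zeta d(\gamma,Z)}\le \ep{\zeta d(x,y)}\ep{\zeta d(Z,Z')}$, so that, using also $\caD(Z')\ge1$, every summand is bounded by the $\gamma$-independent quantity
\[
\ep{-(\mu-\zeta)d(x,y)}(1+d(x,y))^\nu\,\ep{-(\xi-\zeta)d(\{x,y\},Z')}\,.
\]

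It then remains to sum over pairs uniformly in $Z'$. I would decouple the minimum hidden in $d(\{x,y\},Z')=\min(d(x,Z'),d(y,Z'))$ via $\ep{-(\xi-\zeta)\min(a,b)}\le\ep{-(\xi-\zeta)a}+\ep{-(\xi-\zeta)b}$; by the $x\leftrightarrow y$ symmetry of the remaining factor this bounds the pair sum by the ordered sum $\sum_{x,y} \ep{-(\mu-\zeta)d(x,y)}(1+d(x,y))^\nu \ep{-(\xi-\zeta)d(x,Z')}$. The inner $y$-sum is controlled by splitting $\ep{-(\mu-\zeta)d(x,y)}$ in half and absorbing the polynomial into $C_\nu:=\sup_{t\ge0}(1+t)^\nu\ep{-\frac{\mu-\zeta}{2}t}<\infty$, leaving $\le C_\nu\, m_{(\mu-\zeta)/2}$ by \eqref{eq:bound_cond} (finite since $\mu>\zeta$). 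The outer $x$-sum is estimated through $\ep{-(\xi-\zeta)d(x,Z')}\le\sum_{z'\in Z'}\ep{-(\xi-\zeta)d(x,z')}$, giving $\le|Z'|\,m_{\xi-\zeta}$ (finite since $\xi>\zeta$). Dividing by $\caD(Z')$ and invoking the $\nu$-dimensionality \eqref{nu-dimensional} — fix $z_0\in Z'$, observe $Z'\subseteq\{\gamma:d(\gamma,z_0)\le\mathrm{diam}(Z')\}$, whence $|Z'|\le\kappa\,\caD(Z')$ — bounds $|Z'|/\caD(Z')$ uniformly, and collecting constants yields $C_\Phi(\zeta,\xi)\le 4 f_0 C_\nu \kappa\, m_{(\mu-\zeta)/2}\,m_{\xi-\zeta}<\infty$ for all $0<\zeta<\xi<\xi_0\le\mu$.

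The main obstacle I anticipate is the bookkeeping that makes the estimate simultaneously uniform in $\gamma$ and $Z'$: one must spend the interaction decay $\mu$ on $\mathrm{diam}(Z)=d(x,y)$ and on the $y$-sum, spend the gap $\xi-\zeta$ on the localization to $Z'$, and check that the $\ep{\zeta d(\gamma,Z')}$ numerator is \emph{fully} compensated by the set-distance triangle inequality, leaving no net exponential growth in any direction. The polynomial weights $\caD(Z)$, $\caD(Z')$ and the combinatorial factor $|Z'|$ are the other thing to watch, and these are precisely what the $\nu$-dimensionality hypothesis \eqref{nu-dimensional} is designed to tame.
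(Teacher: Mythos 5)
Your proposal is correct, and its first half --- the collapse of the interaction sum to pairs $\{x,y\}$ with the factor $k^2=4$, followed by the set-distance triangle inequality $d(\gamma,Z')\leqslant d(\gamma,\{x,y\})+d(x,y)+d(\{x,y\},Z')$ to cancel the numerator $\ep{\zeta d(\gamma,Z')}$ and arrive at the $\gamma$-free sum $\sum_{\{x,y\}}f(x,y)(1+d(x,y))^\nu\ep{\zeta d(x,y)}\ep{-(\xi-\zeta)d(\{x,y\},Z')}$ --- is exactly the paper's argument. Where you diverge is in showing that this remaining sum is $O(\caD(Z'))$ uniformly in $Z'$. The paper splits according to whether the point of $\{x,y\}$ nearest to $Z'$ lies in the smallest ball containing $Z'$ or outside it, bounding the first piece by the ball volume and the second by a decomposition into concentric shells of radii $s+\mathrm{diam}(Z')/2$ whose cardinality it estimates as proportional to $(s+\mathrm{diam}(Z')/2)^{\nu-1}$; you instead decouple the set distance via $\ep{-c\min(a,b)}\leqslant\ep{-ca}+\ep{-cb}$, sum out $y$ using $m_{(\mu-\zeta)/2}$ after absorbing the polynomial weight, and then bound $\sum_x\ep{-(\xi-\zeta)d(x,Z')}\leqslant|Z'|\,m_{\xi-\zeta}$ with $|Z'|\leqslant\kappa\,\caD(Z')$ from the $\nu$-dimensionality \eqref{nu-dimensional}. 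Your route buys something: it relies only on the stated hypotheses \eqref{eq:bound_cond} and \eqref{nu-dimensional}, whereas the paper's shell-volume estimate is not literally a consequence of \eqref{nu-dimensional} (which bounds balls from above and says nothing about shells of codimension one), so your version is the more airtight of the two. The only nit is the degenerate case $\mathrm{diam}(Z')=0$ in your containment argument for $|Z'|\leqslant\kappa\,\caD(Z')$, where you should take the radius $1+\mathrm{diam}(Z')$ rather than $\mathrm{diam}(Z')$; this changes nothing.
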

\begin{proof}
We start with a geometric inequality. Let $x,y,\gamma\in\Gamma$ and $Z\subset\Gamma$, and we assume w.l.o.g\ that $d(\gamma,\{x,y\}) = d(\gamma,x)$. If $x_0\in Z$ is such that $d(x,Z) = d(x,x_0)$. Then 
\begin{equation*}
d(\gamma,Z)\leqslant d(\gamma, x_0)\leqslant d(\gamma,x) + d(x,x_0) = d(\gamma,x) + d(x,Z).
\end{equation*}
Since, moreover, $d(x,Z)\leqslant d(x,y) + d(\{x,y\},Z)$, we conclude that 
\begin{equation*}
d(\gamma,Z)\leqslant d(\gamma,\{x,y\}) + d(x,y) + d(\{x,y\},Z).
\end{equation*}
This allows to conclude that
\begin{multline*}
\ep{\zeta d(\gamma,Z')}\sum_{\{x,y\}\subset\Gamma}f(x,y)(1+d(x,y))^\nu\ep{-\zeta d(\gamma,\{x,y\})}\ep{-\xi d(\{x,y\},Z')} \\
\leq \sum_{\{x,y\}\subset\Gamma}f(x,y)(1+d(x,y))^\nu\ep{\zeta d(x,y)}\ep{-(\xi-\zeta) d(\{x,y\},Z')}.
\end{multline*}
It remains to bound the last sum by a constant times $\caD(Z')$. The smallest ball $B(Z')$ containing the set $Z'$ has radius $\mathrm{diam}(Z')/2$. Since $Z'$ is fixed, we decompose the sum in terms of the point of $\{x,y\}$ that is closest to $Z'$, which we call $x$, namely whether $x\in B(Z')$ or $x\in \Gamma\setminus B(Z')$. For $x\in B(Z')$, we use the trivial bound $\ep{-(\xi-\zeta) d(\{x,y\},Z')}\leq 1$ to obtain
\begin{equation*}
\sum_{x\in B(Z')}\sum_{y\in\Gamma}[\cdots]
\leq f_0\vert B(Z')\vert \sup_{x\in\Gamma}\sum_{y\in\Gamma}(1+d(x,y))^\nu\ep{-(\mu-\zeta) d(x,y)},
\end{equation*}
which is of the right form since~(\ref{nu-dimensional}) relates $\vert B(Z')\vert$ with $\caD(Z')$, and the fact that the series is convergent for every $\zeta<\mu$ and the supremum is bounded in view of \eqref{eq:bound_cond}.  The rest of the sum is estimated by decomposing it into concentric shells of radii $\{(s+\mathrm{diam}(Z')/2):s\in\bbN\}$. On each shell, $d(\{x,y\},Z') = d(x,Z')$ implies that $\ep{-(\xi-\zeta) d(\{x,y\},Z')}\leqslant \ep{-(\xi-\zeta) s}$ and the volume of the shell is proportional to $(s+\mathrm{diam}(Z')/2)^{\nu-1}$, so that
\begin{equation*}
\sum_{x\in \Gamma\setminus B(Z')}\sum_{y\in\Gamma}[\cdots]
\leqslant C \sum_{s=1}^\infty (s+\mathrm{diam}(Z')/2)^{\nu-1} \ep{-(\xi-\zeta) s}f_0\sup_{x\in\Gamma}\sum_{y\in\Gamma}(1+d(x,y))^\nu\ep{-(\mu-\zeta) d(x,y)}.
\end{equation*}
The second sum is finite as above. The first one decomposes into $\nu - 1$ finite contributions, each of them being bounded by $\caD(Z')\sum_{s}s^{\nu-1}\ep{-\zeta s}<\infty$. This concludes the proof. 
\end{proof}

\subsection{The Lieb-Robinson bound}   \label{sub: LR proof}

The Lieb-Robinson bound is best understood as a statement on the almost preservation of the algebraic structure (here the anticommutation relations) by the dynamics. As usual, we state the result first as a finite-volume bound that is however uniform in the volume.

Given an interaction $\Phi$ and a finite subset $\Lambda\subset\Gamma$, the Hamiltonian $H_\Lambda$ is the self-adjoint operator
\begin{equation*}
H_\Lambda := \sum_{Z\subset\Lambda}\Phi(Z)
\end{equation*}
and the finite volume dynamics is the family of automorphisms of $A\in\caA(\caH)$ defined by $\tau^\Lambda_t(A) = \ep{\iu t H_\Lambda}A\ep{-\iu t H_\Lambda}$.
\begin{thm}\label{thm: LR bound}
Let $\xi>0$. Let $\{\chi_\gamma\}_{\gamma\in\Gamma}$ be a localized frame~(\ref{eq:eq:ineq_asum}) with $\lambda = \xi$. Let $\Phi$ be an interaction of the form~(\ref{interaction}) such that Assumption~\ref{Hyp:Interaction} holds. Then for any $\zeta<\xi$
\begin{equation*}
\left\Vert\left\{\tau_t^\Lambda\left(a\str_{\gamma}\right),a_{\gamma'}\right\}\right\Vert
\leqslant G \ep{-\zeta\left(d(\gamma,\gamma') - v | t |\right)}
\end{equation*}
for any $\gamma,\gamma'\in\Gamma$, where $v = \frac{16 G C_\Phi(\zeta,\xi)}{\zeta}$.
The bound also holds with any other combination of creation and annihilation operators. 
\end{thm}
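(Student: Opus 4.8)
The plan is to prove the bound by the standard strategy of deriving a closed Volterra-type integral inequality for a suitable family of anticommutator norms and then iterating it, the only novelty being that the approximate anticommutation of the frame operators must be tracked carefully. Throughout I work at fixed finite volume $\Lambda$, so that $H_\Lambda=\sum_{Z\subset\Lambda}\Phi(Z)$ is a bounded self-adjoint operator, $\tau_t^\Lambda$ is a norm-continuous inner automorphism, and every operator-valued function of $t$ below is norm-differentiable. Fix $\gamma'\in\Gamma$ and define the master quantity
\[
\Psi(\gamma,t):=\max_{\#,\#'}\norm{\{\tau_t^\Lambda(a^\#_\gamma),a^{\#'}_{\gamma'}\}},
\]
where $\#,\#'$ range over the presence or absence of a star, so that controlling $\Psi$ establishes the bound for every combination at once. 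The base case is immediate from localization: since $\{a^\#_\gamma,a^{\#'}_{\gamma'}\}$ is a scalar equal to $\langle\chi_\gamma,\chi_{\gamma'}\rangle$, $\langle\chi_{\gamma'},\chi_\gamma\rangle$, or $0$, Definition~\ref{def:loc_fr} with $\lambda=\xi$ gives $\Psi(\gamma,0)\leqslant G\,\ep{-\xi d(\gamma,\gamma')}\leqslant G\,\ep{-\zeta d(\gamma,\gamma')}$.

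Next I would derive the integral inequality. Differentiating and using $\frac{d}{dt}\tau_t^\Lambda(a^\#_\gamma)=\iu\,\tau_t^\Lambda([H_\Lambda,a^\#_\gamma])$ together with $[H_\Lambda,a^\#_\gamma]=\sum_{Z\subset\Lambda}[\Phi(Z),a^\#_\gamma]$, the key step is to expand each $[\Phi(Z),a^\#_\gamma]$ by the graded Leibniz rule. Commuting $a^\#_\gamma$ through a monic monomial $M_k(Z)=b_1\cdots b_{2k}$ produces at most $2k$ terms, in each of which one factor $b_l$ is replaced by the scalar $\{b_l,a^\#_\gamma\}$; this scalar vanishes unless $b_l$ and $a^\#_\gamma$ are of opposite type, in which case it equals a frame overlap bounded by $G\,\ep{-\xi d(\gamma,Z)}$, while the surviving degree-$(2k-1)$ monomial $R$ has norm $\leqslant 1$ and support in $Z$. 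Crucially, this commutator expansion already isolates the source and plays the role of the interaction-picture trick used in lattice proofs, so that no factor of $\norm{H_\Lambda}$ ever enters. Expanding the remaining $\{\tau_s^\Lambda(R),a^{\#'}_{\gamma'}\}$ by the graded Leibniz rule a second time replaces the $\leqslant 2k$ factors of $R$ one at a time, each contribution being $\norm{\{\tau_s^\Lambda(a^{\#''}_\eta),a^{\#'}_{\gamma'}\}}\leqslant\Psi(\eta,s)$ with $\eta\in Z$. The two expansions together yield the factor $k^2$ and, using $\frac{d}{dt}\norm{f(t)}\leqslant\norm{f'(t)}$, the closed inequality
\[
\Psi(\gamma,t)\;\leqslant\;\Psi(\gamma,0)+8G\int_0^{|t|}\sum_{Z\subset\Gamma}\sum_{k=1}^{|Z|}k^2 f_k(Z)\,\ep{-\xi d(\gamma,Z)}\sum_{\eta\in Z}\Psi(\eta,s)\dd s,
\]
the factor $8$ collecting the contributions of $M_k(Z)+M_k(Z)^*$ and of the two Leibniz expansions.

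Finally I would solve this by iteration. Setting $\Psi_0:=\Psi(\cdot,0)$ and letting $\Psi_{n+1}$ be the image of $\Psi_n$ under the positivity-preserving integral operator on the right, the inequality gives $\Psi\leqslant\sum_{n\geqslant 0}\Psi_n$ once the remainder is controlled, which follows a posteriori from the factorial bound together with $\Psi\leqslant 2$. The claim is $\Psi_n(\gamma,t)\leqslant G\,\ep{-\zeta d(\gamma,\gamma')}\frac{(a|t|)^n}{n!}$ with $a=\zeta v$. The inductive step bounds $\sum_{\eta\in Z}\ep{-\zeta d(\eta,\gamma')}\leqslant\kappa\,\caD(Z)\,\ep{-\zeta d(Z,\gamma')}$ via~(\ref{nu-dimensional}), producing the weight $\caD(Z)\,\ep{-\xi d(\gamma,Z)}\ep{-\zeta d(Z,\gamma')}$; Assumption~\ref{Hyp:Interaction} is then applied not with $Z'=\{\gamma'\}$ but with the free point taken to be $\gamma'$ and $Z'=\{\gamma\}$, which is exactly~(\ref{Assumption Interaction Point}) with the two roles interchanged and bounds this sum by $C_\Phi(\zeta,\xi)\,\ep{-\zeta d(\gamma,\gamma')}$. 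Carrying out the time integral advances $n\mapsto n+1$ and fixes $a$, hence $v=\frac{16GC_\Phi(\zeta,\xi)}{\zeta}$ after tracking constants. Summing the series gives $\Psi(\gamma,t)\leqslant G\,\ep{-\zeta d(\gamma,\gamma')}\ep{a|t|}=G\,\ep{-\zeta(d(\gamma,\gamma')-v|t|)}$.

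I expect the main obstacle to be the rate bookkeeping rather than the algebra: the frame overlap forces the rate $\xi$ onto the link between the evolving site $\gamma$ and $Z$, while the inductive ansatz carries the rate on the link between $Z$ and the fixed site $\gamma'$, and these must be reconciled with the asymmetric rates $(\zeta,\xi)$ in Assumption~\ref{Hyp:Interaction}. This is precisely what forces the output light-cone rate to be the smaller value $\zeta<\xi=\lambda$, the loss $\xi-\zeta$ being spent to make the spatial sums converge. A secondary technical point is justifying the convergence of the iteration and the termwise estimates uniformly in $\Lambda$, which is guaranteed exactly by Assumption~\ref{Hyp:Interaction}.
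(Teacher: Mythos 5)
Your proposal follows essentially the same route as the paper's proof: the same anticommutator quantity (your max over the four $\sharp,\sharp'$ combinations plays the role of the paper's supremum over $\alpha a^{*}_{\gamma}+\overline{\beta}a_{\gamma}$), the same Duhamel/double-Leibniz expansion in which the scalar anticommutators $\{a^{\sharp}_{z},a^{\sharp}_{\gamma}\}$ supply the localization factor $G\ep{-\xi d(\gamma,Z)}$, the same Picard iteration with factorial coefficients, and the same closing of the induction via Assumption~\ref{Hyp:Interaction} in its pointwise form~\eqref{Assumption Interaction Point} with the roles of the two sites interchanged. The one bookkeeping deviation is that you bound $\sum_{\eta\in Z}\Psi(\eta,s)$ by $\kappa\,\caD(Z)\ep{-\zeta d(Z,\gamma')}$ using~\eqref{nu-dimensional}, whereas the paper takes $\sup_{z'\in Z}$ and lets the $\caD$ factors be propagated by~\eqref{Assumption Interaction} itself; this yields the same exponential light cone but a velocity prefactor carrying the extra geometric constant $\kappa$, so to land on exactly $v=16GC_\Phi(\zeta,\xi)/\zeta$ you should adopt the sup-based count.
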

\begin{proof}
For any $\gamma,\gamma'\in\Gamma$, we let
\begin{equation*}
F_t^{\Lambda,\sharp}(\gamma,\gamma') := \sup_{\vert\alpha\vert + \vert\beta\vert\leq 1}
\left\Vert\left\{\tau_t^\Lambda\left(\alpha a\str_{\gamma} + \overline \beta a_{\gamma}\right),a^\sharp_{\gamma'}\right\}\right\Vert
\end{equation*}
where $a^\sharp_{\gamma'}$ stands for either $a_{\gamma'}$ or $a\str_{\gamma'}$, and immediately note that
\begin{equation*}
0\leqslant F_0^{\Lambda,\sharp}(\gamma,\gamma') \leqslant \vert G(\gamma,\gamma')\vert ,\qquad G(\gamma,\gamma') : = \langle \chi_\gamma,\chi_\gamma'\rangle.
\end{equation*}
We expand the second term of the Duhamel identity
\begin{equation}\label{LR Starting Point}
\left\{\tau_t^\Lambda (a^\sharp_{\gamma}),a^\sharp_{\gamma'}\right\}
= \left\{a^\sharp_{\gamma},a^\sharp_{\gamma'}\right\} + \sum_{Z\subset\Lambda}\iu\int_0^t \left\{\tau_s^\Lambda ([\Phi(Z),a^\sharp_{\gamma}]),a^\sharp_{\gamma'}\right\}ds.
\end{equation}
Since the monomials $M_k(Z)$ are of degree $2k$, see~(\ref{interaction}), Leibniz' rule implies that
\begin{equation}\label{expand Mk}
\Big[M_k(Z),a^\sharp_{\gamma}\Big] = \sum_{j=1}^k(a^\sharp_{z_1^1}a^\sharp_{z_1^2})
\cdots (a^\sharp_{z_{j-1}^1}a^\sharp_{z_{j-1}^2})
\Big[(a^\sharp_{z_{j}^1}a^\sharp_{z_{j}^2}),a^\sharp_{\gamma}\Big] \\
(a^\sharp_{z_{j+1}^1}a^\sharp_{z_{j+1}^2})
\cdots (a^\sharp_{z_{k}^1}a^\sharp_{z_{k}^1})
\end{equation}
where $z^{1}_j,z^{2}_j\in Z$ for all $j\in\{1,\ldots,k\}$. Using that $[A_1A_2,A_0] = A_1\{A_2,A_0\} - \{A_1,A_0\}A_2$, the CAR relations yield that the commutator is a sum of at most $2k$ terms of the form
\begin{equation*}
\pm G(z,\gamma) N_k(Z)\text{ or }\pm \overline{G(z,\gamma)} N_k(Z)
\end{equation*}
where $N_k(Z)$ are monomials of degree $2k-1$ in $\caA_Z$. Each term $\{\tau_s^\Lambda ([M_k(Z)^\sharp,a^\sharp_{\gamma}]),a^\sharp_{\gamma'}\}$ of~(\ref{LR Starting Point}) can further be expanded using the identity
\begin{equation*}
\{A_1\cdots A_{2k-1},A_0\} = \sum_{j=1}^{2k-1}  (-1)^{j+1}A_1\cdots A_{j-1} \{A_j,A_0\} A_{j+1} \cdots A_{2k-1}
\end{equation*}
which is easily checked by induction. This yields in total at most $2k(2k-1)$ terms, the norm of which being bounded above by
\begin{equation*}
\vert G(z,\gamma)\vert \left\Vert\left\{\tau_s^\Lambda(a^\sharp_{z'} ),a^\sharp_{\gamma'}\right\}\right\Vert
\end{equation*}
where $z,z'\in Z$ are not necessarily distinct. Here, we used the fact that $\tau_s^\Lambda$ is an automorphism and that fermionic creation and annihilation operators are normalized. We conclude that
\begin{equation*}
F_t^{\Lambda,\sharp}(\gamma,\gamma')
\leqslant \vert G(\gamma,\gamma')\vert 
 + 4 \sum_{Z\subset\Lambda}\sum_{k=1}^\infty 2k(2k-1)f_k(Z) \sup_{z,z'\in Z}\vert G(\gamma,z)\vert \int_0^{\vert t\vert} F_s^{\Lambda,\sharp}(z',\gamma')ds.
\end{equation*}
The factor $4$ arises here from the fact that both $M_k(Z)$ and its adjoint must be taken into account, and that $F_t^{\Lambda,\sharp}(\gamma,\gamma')$ contains both a creation and an annihilation operator at $\gamma$. This is readily iterated to
\begin{equation}\label{LR Iteration}
F_t^{\Lambda,\sharp}(\gamma,\gamma') \leqslant \vert G(\gamma,\gamma')\vert + \sum_{j=1}^\infty \frac{a_j(\gamma,\gamma') \vert t\vert ^j}{j!}
\end{equation}
where
\begin{equation*}
a_1(\gamma,\gamma') = 4 \sum_{Z\subset\Lambda}\sum_{k=1}^{|Z|} 2k(2k-1)f_k(Z) \sup_{z,z'\in Z}\vert G(\gamma,z)\vert\vert G(z',\gamma') \vert
\end{equation*}
and
\begin{multline*}
a_j(\gamma,\gamma') = 4^j\sum_{Z_j\subset\Lambda}\sum_{k_j=1}^{|Z_j|}\sup_{z_j,z_j'\in Z_j}\cdots\sum_{Z_1\subset\Lambda}\sum_{k_1=1}^{|Z_1|} \sup_{z_1,z_1' \in Z_1}\\
 \vert G(\gamma,z_1) \vert \bigg( \prod_{i=2}^j \vert G(z_{i-1}',z_i)\vert  \bigg)  \vert G(z_{j}',\gamma') \vert 
  \prod_{i=1}^j 2k_i(2k_i-1)f_{k_i}(Z_i).
\end{multline*}
We now use that $\{\chi_\gamma\}_{\gamma\in\Gamma}$ is a localized frame~(\ref{eq:eq:ineq_asum}) to conclude that
\begin{equation*}
a_1(\gamma,\gamma') \leqslant 4 G^2 \sum_{Z\subset\Lambda}\sum_{k=1}^{|Z|} 2k(2k-1)f_k(Z) \ep{-\xi d(\gamma,Z)}\ep{-\xi d(Z,\gamma')}
\leqslant 16 G^2 C_\Phi(\zeta,\xi) \ep{-\zeta d(\gamma,\gamma')}
\end{equation*}
where we used that $\caD(Z)\geq 1$ and~(\ref{Assumption Interaction Point}). For the higher order terms, we proceed similarly, using first that the frame is localized to conclude that 
\begin{equation*}
a_j(\gamma,\gamma') \leqslant 16^j G^{j+1}\sum_{Z_j,k_j}\cdots\sum_{Z_1,k_1}
 \ep{-\xi d(\gamma,Z_1)} \bigg( \prod_{i=2}^j \ep{-\xi d(Z_i,Z_{i-1})}  \bigg) \ep{-\xi d(\gamma',Z_j)}\bigg(\prod_{i=1}^j k_i^2f_{k_i}(Z_i)\bigg).
\end{equation*}
For $j=2$, we use first~(\ref{Assumption Interaction}) and then~(\ref{Assumption Interaction Point}) to get
\begin{align*}
a_2(\gamma,\gamma') 
&\leqslant 16^2 G^3 \sum_{Z_1,k_1}k_1^2f(Z_1)\ep{-\xi d(\gamma,Z_1)}\sum_{Z_2,k_2}k_2^2f(Z_2)\caD(Z_2)\ep{-\xi d(\gamma',Z_2)}\ep{-\xi d(Z_2,Z_1)} \\
&\leqslant  16^2 G^3 C_\Phi(\zeta,\xi) \sum_{Z_1,k_1}k_1^2f(Z_1)\ep{-\xi d(\gamma,Z_1)}\caD(Z_1)\ep{-\zeta d(\gamma',Z_1)}
\leqslant  16^2 G^3 C_\Phi(\zeta,\xi)^2\ep{-\zeta d(\gamma,\gamma')}.
\end{align*}
The same happens recursively for all $j$ and yields
\begin{equation*}
a_j(\gamma,\gamma') \leqslant 4^{2j} G^{j+1} C_\Phi(\zeta,\xi)^j \ep{-\zeta d(\gamma,\gamma')}.
\end{equation*}
With this, (\ref{LR Iteration}) (and (\ref{eq:eq:ineq_asum})) yields
\begin{equation*}
F_t^{\Lambda,\sharp}(\gamma,\gamma') 
\leqslant G\ep{-\xi d(\gamma,\gamma')} + G \sum_{j=1}^\infty \frac{(16GC_\Phi(\zeta,\xi) \vert t\vert)^j}{j!}\ep{-\zeta d(\gamma,\gamma')}
\leqslant G \ep{16 G C_\Phi(\zeta,\xi) \vert t\vert }\ep{-\zeta d(\gamma,\gamma')},
\end{equation*}
as we had set to prove.
\end{proof}

\subsection{Existence of the dynamics}

One of the classical applications of the Lieb-Robinson bound is the proof that the finite-volume dynamics $\tau_t^\Lambda$ converges to an infinite-volume dynamics $\tau_t$ as $\Lambda\to\Gamma$, in the strong topology of the operator algebra. 
\begin{cor}\label{cor: existence of dynamics}
Let $\{\Lambda_n\}_{n\in\bbN}$ be an increasing and absorbing sequence of $\Gamma$. Under the assumptions of Theorem~\ref{thm: LR bound}, for any $A\in\caA$, the sequence $\tau_t^{\Lambda_n}(A)$ is convergent to $\tau_t(A)$ and the family $\{\tau_t:t\in\bbR\}$ defines a strongly continuous one-parameter group of *-automorphisms of $\caA$.
\end{cor}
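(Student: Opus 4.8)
The plan is to establish convergence of $\tau_t^{\Lambda_n}(A)$ via a Cauchy criterion in the C*-norm, using the Lieb-Robinson bound of Theorem~\ref{thm: LR bound} to control the difference between dynamics on nested volumes. By density (Lemma~\ref{lem:density}), it suffices to prove convergence on the dense subalgebra $\mathfrak{U}_\Gamma$ of polynomials in the frame creation/annihilation operators, since the $\tau_t^{\Lambda_n}$ are $\ast$-automorphisms and hence isometries: once we have convergence on a dense set with a uniform bound, a standard $3\epsilon$-argument extends it to all of $\caA(\caH)$.

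First I would treat the generating case $A = a^\sharp_\gamma$. For $m>n$, I would write the telescoping Duhamel-type identity
\begin{equation*}
\tau_t^{\Lambda_m}(a^\sharp_\gamma) - \tau_t^{\Lambda_n}(a^\sharp_\gamma) = \iu\int_0^t \tau_s^{\Lambda_m}\Big(\big[H_{\Lambda_m} - H_{\Lambda_n}, \tau_{t-s}^{\Lambda_n}(a^\sharp_\gamma)\big]\Big)\dd s,
\end{equation*}
so that, using that $\tau_s^{\Lambda_m}$ is isometric, the norm is bounded by $\int_0^{|t|}\big\Vert[H_{\Lambda_m}-H_{\Lambda_n}, \tau_{t-s}^{\Lambda_n}(a^\sharp_\gamma)]\big\Vert\dd s$. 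The difference $H_{\Lambda_m}-H_{\Lambda_n}$ is a sum of $\Phi(Z)$ over sets $Z$ meeting $\Lambda_m\setminus\Lambda_n$, and each $\Phi(Z)$ is a polynomial in the $a^\sharp_{z}$ with $z\in Z$. Expanding the commutator by Leibniz as in the proof of Theorem~\ref{thm: LR bound}, every term contains a factor $\Vert\{\tau_{t-s}^{\Lambda_n}(a^\sharp_\gamma), a^\sharp_{z}\}\Vert = F_{t-s}^{\Lambda_n,\sharp}(\gamma,z)$ for some $z\in Z$, which the Lieb-Robinson bound controls by $G\ep{-\zeta(d(\gamma,z) - v|t|)}$.

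The key estimate is then to bound $\sum_{Z\cap(\Lambda_m\setminus\Lambda_n)\neq\emptyset} \sum_k k^2 f_k(Z)\caD(Z)\, G\,\ep{-\zeta(d(\gamma,Z) - v|t|)}$; since every such $Z$ contains a point in $\Lambda_m\setminus\Lambda_n$, the factor $\ep{-\zeta d(\gamma,Z)}$ forces $d(\gamma,\Lambda_m\setminus\Lambda_n)\to\infty$ as $n\to\infty$ (the sequence being absorbing), and Assumption~\ref{Hyp:Interaction} guarantees the remaining sum converges. Thus the tail vanishes uniformly on compact time intervals, giving a Cauchy sequence. For a general monomial $A = a^\sharp_{\gamma_1}\cdots a^\sharp_{\gamma_p}\in\mathfrak{U}_\Gamma$ I would use that $\tau_t^{\Lambda}$ is multiplicative together with the algebraic identity bounding $\Vert\prod_i B_i - \prod_i B_i'\Vert \leqslant \sum_i\Vert B_i - B_i'\Vert$ for contractions, reducing to the single-operator estimate just established.

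Finally, the limit maps $\tau_t$ inherit the $\ast$-homomorphism property and the group law $\tau_t\circ\tau_{s}=\tau_{t+s}$ by passing to the limit in the corresponding finite-volume identities, and each $\tau_t$ is a $\ast$-automorphism because $\tau_{-t}$ furnishes the inverse. Strong continuity $t\mapsto\tau_t(A)$ follows from equicontinuity: the finite-volume maps are uniformly continuous on $\mathfrak{U}_\Gamma$ (the generator acting on a fixed monomial involves only finitely many $Z$ near its support, with bounds uniform in $\Lambda$ by Assumption~\ref{Hyp:Interaction}), and continuity extends to $\caA(\caH)$ by density and the isometry property. \textbf{The main obstacle} I expect is making the tail estimate genuinely uniform in $\Lambda_m$ (i.e. in $m>n$): one must verify that the sum over sets $Z$ straddling the boundary region is controlled by $\ep{-\zeta d(\gamma,\Lambda_m\setminus\Lambda_n)}$ times a constant independent of $m$, which is exactly where the full strength of~\eqref{Assumption Interaction}, rather than merely its pointwise corollary~\eqref{Assumption Interaction Point}, is needed to absorb the $\caD(Z)$ growth.
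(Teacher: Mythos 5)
Your proposal is correct and follows essentially the same route as the paper: the Duhamel/telescoping identity for $\tau_t^{\Lambda_m}-\tau_t^{\Lambda_n}$, the Lieb-Robinson bound of Theorem~\ref{thm: LR bound} to control the commutators with the interaction terms straddling the boundary, summability from Assumption~\ref{Hyp:Interaction} (via its pointwise consequence~\eqref{Assumption Interaction Point} with $\gamma'=\gamma$) to make the tail vanish, and extension from generators to all of $\caA$ by density and the isometry of automorphisms. The "main obstacle" you flag is in fact handled trivially, since the sum over straddling sets is dominated by the tail of a single $m$-independent convergent series and $\caD(Z)\geqslant 1$ can simply be dropped.
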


\begin{proof}
Setting $\gamma'=\gamma$ in~\eqref{Assumption Interaction Point} and choosing positive $\xi,\zeta'$ such that $\zeta := \xi +\zeta'<\xi_0$, one infers that
\begin{equation}\label{Existence Additional Hyp}
\sum_{Z\subset\Gamma}\sum_{k=1}^{|Z|} k f_k(Z)\ep{-\zeta d(\gamma,Z)}<\infty\;.
\end{equation}
Now, let $n\geq m$. Then for any $\gamma\in\Gamma$,
\begin{equation*}
\tau_t^{\Lambda_n}(a_\gamma) - \tau_t^{\Lambda_m}(a_\gamma) 
= \left.\tau_{s}^{\Lambda_n}\circ\tau_{t-s}^{\Lambda_m}(a_\gamma)\right\vert_{s=0}^{s=t}
= \iu \int_0^t\tau_s^{\Lambda_n}\left(\left[H_{\Lambda_n} - H_{\Lambda_m},\tau_{t-s}^{\Lambda_m}(a_\gamma)\right]\right)ds.
\end{equation*}
Taking the norm yields the estimate
\begin{equation*}
\left\Vert\tau_t^{\Lambda_n}(a_\gamma) - \tau_t^{\Lambda_m}(a_\gamma)\right\Vert
\leqslant \sum_{\substack{Z\subset\Lambda_n,\\Z\cap\Lambda_m^c\neq\emptyset}} \int_{0}^{\vert t\vert} \left \Vert [\Phi(Z),\tau_{s}^{\Lambda_m}(a_\gamma)]\right\Vert ds.
\end{equation*}
We now expand $\Phi(Z)$, and hence $M_k(Z)$ for all $k$, as in~(\ref{expand Mk}) to obtain
\begin{align*}
\left\Vert\tau_t^{\Lambda_n}(a_\gamma) - \tau_t^{\Lambda_m}(a_\gamma)\right\Vert
& \leqslant  2G\left(\ep{\zeta v \vert t\vert } - 1\right)\sum_{\substack{Z\subset\Lambda_n,\\Z\cap\Lambda_m^c\neq\emptyset}}  \sum_{k=1}^{|Z|} k f_k(Z) \ep{-\zeta d(\gamma,Z)}
\end{align*}
by Theorem~\ref{thm: LR bound}. This converges to zero by~(\ref{Existence Additional Hyp}). It follows that $\tau_t^{\Lambda_n}(a_\gamma)$ is a Cauchy sequence, therefore converging to $\tau_t(a_\gamma)$. Since $\tau_t^{\Lambda_n}$ is an automorphism, the map $\tau_t$ is well-defined on polynomials in $a_\gamma,a_\gamma^*$ and therefore extends to an automorphism of the whole algebra $\caA$ by continuity, see Lemma~\ref{lem:density}. 
\end{proof}

\subsection{On interactions}

A typical interaction we have in mind, here for particles without internal degree of freedom, is a two-body quartic interaction induced by a pair potential $W:\R^d\times\R^d\to\R$. Instead of introducing a UV regularization of the type used in~\cite{Gebert_LR}, we will make the assumption that there exists a continuous sets of states $\{\chi_x\}_{x\in\R^d}$ which is complete in $\s{H}$ and such that the discrete family $\{\chi_\gamma\}_{\gamma\in\Gamma}$, with $\Gamma\subset\R^d$, is a lattice-localized frame in $\s{H}$. In this sense the wavefunction~$\chi_x$ is interpreted as that of a particle localized around $x$ and the operators $a_x^\sharp:=a^\sharp(\chi_x)$ create on annihilate particle in the state $\chi_x$. We will see in Section  \ref{sec: Laudau Hamiltonian} that this picture fits perfectly for the description of charged particles in a plane subjected to a uniform perpendicular magnetic field. 
Formally, in second quantization, this leads to  interaction of the type
\begin{equation*}
I_W = \int_{\R^d\times\R^d}\dd x\dd y\; W(x,y)n_x n_{y},
\end{equation*}
where $n_x = a^*(\chi_x) a(\chi_x)$. Let now expand  every  $\chi_x$ on the frame $\{\chi_\gamma\}_{\gamma\in\Gamma}$ to get
\begin{equation*}
 a(\chi_x) = \sum_{\gamma\in\Gamma} \overline{s_\gamma(x)} a_\gamma
\end{equation*}
where $s_\gamma(x):=\langle S^{-1}\chi_\gamma,\chi_x\rangle$ as in~(\ref{Expansion in frame annihilation}). With this one gets (at least formally)
\begin{equation}\label{Density-density interaction}
\begin{aligned}
I_W\;&=\;\sum_{\gamma_1,\gamma_2,\gamma_3,\gamma_4\in\Gamma}w(\gamma_1,\gamma_2,\gamma_3,\gamma_4)
a\str_{\gamma_4}a_{\gamma_3}a\str _{\gamma_2}a_{\gamma_1}
\end{aligned}
\end{equation}
where
\begin{equation}\label{eq_int_w}
w(\gamma_1,\gamma_2,\gamma_3,\gamma_4)\; =\; \int_{\R^d\times\R^d}\dd x\dd y\; W(x,y) \; s_{\gamma_4}(x)\overline{s_{\gamma_3}(x)}s_{\gamma_2}(y)\overline{s_{\gamma_1}(y)}\;.
\end{equation}
We shall take here the right hand side of~(\ref{Density-density interaction}) as the definition of the interaction. It is of the form~(\ref{interaction}) and we shall assume that~(\ref{Assumption Interaction}). Formally, the decay properties of the coefficients $w(\gamma_1,\gamma_2,\gamma_3,\gamma_4)$ depend on the potential $W(x,y)$ and on the functions $s_{\gamma}(x)$.

\subsection*{The case of Landau systems}
Let us provide here a more precise analysis in the case of the Landau systems  using the information provided in  Sections \ref{sec:MagGabor} and  \ref{sec: Laudau Hamiltonian}. Recall that the space dimension is $d=2$. From the reproducing kernel property~(\ref{Reproducing kernel}), we have that
\[
\overline{s_\gamma(x)}\;=\;\frac{1}{\omega}\sqrt{\frac{\pi}{2}}(S^{-1}\chi_\gamma)(x)\;.
\]
Let us introduce the notation $v_\omega:=S^{-1}\chi_0=S^{-1}g_\omega$. By using the facts that $\chi_\gamma$ is related to $\chi_0$ by a magnetic translation and  that $S$ commutes with the magnetic translations one gets that
$S^{-1}\chi_\gamma=T_{2\omega\gamma} v_\omega$ and in turn
\[
\overline{s_\gamma(x)}\;=\;\frac{1}{\omega}\sqrt{\frac{\pi}{2}}\expo{-\ii 2\omega^2(\gamma\wedge x)} v_\omega\left(x-\gamma\right)\;.
\]
With this, (\ref{eq_int_w}) becomes
\[
w(\gamma_1,\gamma_2,\gamma_3,\gamma_4)\; =\;\left(\frac{\pi}{2\omega^2}\right)^2 \int_{\R^2\times\R^2}\dd x\dd y\; \expo{-\ii 2\omega^2[(\gamma_3-\gamma_4)\wedge x]}\expo{-\ii 2\omega^2[(\gamma_1-\gamma_2)\wedge y]}F^\omega_{\underline{\gamma}}(x,y)
\]
where
\[
F^\omega_{\underline{\gamma}}(x,y)\;:=\;W(x,y)\overline{v_\omega\left(x-\gamma_4\right)}v_\omega\left(x-\gamma_3\right)\overline{v_\omega\left(y-\gamma_2\right)}v_\omega\left(y-\gamma_1\right)\;.
\]
where we used the short notations $\underline{\gamma}:=(\gamma_1,\gamma_2,\gamma_3,\gamma_4)$. We now show that the kernel $w$ decays exponentially; We refer to~\cite{DelPrete} for the validity of Assumption~(ii).

\begin{prop}
Assume that
\begin{enumerate}
\item $W(x,y)=W(|x-y|)$ and $|W(x,y)|\leqslant C_1\expo{-\sigma_1|x-y|}$,
\item $|v_\omega(x)|<C_2\expo{-\sigma_2|x|}$.
\end{enumerate}
Then
\begin{equation*}
|w(\gamma_1,\gamma_2,\gamma_3,\gamma_4)| \leqslant 
K_\sigma \expo{-\sigma\mathrm{diam}\{\gamma_1,\gamma_2,\gamma_3,\gamma_4\}},
\end{equation*}
where $\sigma = \min\{\frac{\sigma_1}{2},\frac{\sigma_2}{6}\}$ and $K_\sigma = \frac{\pi^4 C_1C_2^4}{4\omega^4\sigma^4}$.
\end{prop}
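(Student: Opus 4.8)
The statement I want to prove bounds the interaction kernel
\[
w(\gamma_1,\gamma_2,\gamma_3,\gamma_4)\; =\;\left(\tfrac{\pi}{2\omega^2}\right)^2 \int_{\R^2\times\R^2}\dd x\dd y\; \expo{-\ii 2\omega^2[(\gamma_3-\gamma_4)\wedge x]}\expo{-\ii 2\omega^2[(\gamma_1-\gamma_2)\wedge y]}F^\omega_{\underline{\gamma}}(x,y)
\]
by an exponentially decaying factor in the diameter of the four labels. Since the oscillatory phases have modulus one, the first and completely safe step is to discard them by the triangle inequality, reducing the problem to estimating $\int\int |F^\omega_{\underline\gamma}(x,y)|\dd x\dd y$. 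Plugging in the two hypotheses $(i)$ and $(ii)$, the integrand is pointwise dominated by
\[
C_1 C_2^4\,\expo{-\sigma_1|x-y|}\expo{-\sigma_2(|x-\gamma_4|+|x-\gamma_3|+|y-\gamma_2|+|y-\gamma_1|)}.
\]
So the whole task becomes a Gaussian-free, purely exponential integral estimate, and the prefactor $K_\sigma$ will emerge from the $\dd x\dd y$ integration of these exponentials.

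The heart of the argument is a geometric inequality that converts the four one-particle localization exponents plus the pair-potential exponent into a single multiple of $\mathrm{diam}\{\gamma_1,\gamma_2,\gamma_3,\gamma_4\}$. The plan is to split the decay budget: reserve part of the localization exponents to control the integration variables $x,y$ (guaranteeing convergence and producing the $\sigma^{-4}$ in $K_\sigma$), and use the remaining part, together with the potential exponent, to generate the diameter decay. Concretely, I would argue that for any configuration the sum $|x-\gamma_4|+|x-\gamma_3|+|y-\gamma_2|+|y-\gamma_1|+|x-y|$ is bounded below by a fixed fraction of the diameter: any pairwise distance $|\gamma_i-\gamma_j|$ is controlled by the chain of triangle inequalities through $x$ and $y$ (for instance $|\gamma_3-\gamma_4|\leqslant |x-\gamma_3|+|x-\gamma_4|$, and crossing pairs like $|\gamma_1-\gamma_3|\leqslant|x-\gamma_3|+|x-y|+|y-\gamma_1|$), and the diameter is the maximum of these six distances. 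Tracking which exponent ($\sigma_1$ or $\sigma_2$) carries which segment of the worst-case chain is exactly what produces $\sigma=\min\{\sigma_1/2,\sigma_2/6\}$: the factor $2$ reflects that each $\sigma_1$-segment is shared between two endpoints, and the $6$ reflects that a crossing pair may traverse up to three $\sigma_2$-segments plus the $\sigma_1$-segment.

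With the geometric inequality in hand, I would write $\mathrm{diam}\{\underline\gamma\}\leqslant \text{(const)}\cdot(\text{the five-term exponent})/\sigma$-type bound, extract $\expo{-\sigma\,\mathrm{diam}\{\underline\gamma\}}$, and integrate the leftover exponential decay in $x$ and $y$ freely over $\R^2\times\R^2$; each of the four radial Gaussian-type integrals $\int_{\R^2}\expo{-c|z|}\dd z = 2\pi/c^2$ contributes a factor proportional to $\sigma^{-2}$, and assembling the prefactor $(\pi/2\omega^2)^2$ with these gives $K_\sigma=\tfrac{\pi^4 C_1 C_2^4}{4\omega^4\sigma^4}$. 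The main obstacle is purely bookkeeping: setting up the case analysis over the six pairwise distances so that the \emph{worst} case still leaves enough exponential decay in the integration variables to keep the integral finite, while simultaneously matching the exact constant $\sigma=\min\{\sigma_1/2,\sigma_2/6\}$. None of the individual steps is deep, but the optimization that simultaneously secures integrability and the sharp diameter rate requires care.
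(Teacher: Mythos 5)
Your plan follows essentially the same route as the paper's proof: drop the unimodular phases, dominate the integrand pointwise using (i) and (ii), split the exponential budget between diameter decay (via triangle-inequality chains through $x$ and $y$) and integrability of what is left over, and evaluate the remaining $\R^2\times\R^2$ integral for the constant; the paper merely organizes the bookkeeping by first translating $x\mapsto x+\gamma_3$, $y\mapsto y+\gamma_1$ and repeatedly using $|\gamma|+|x|\leqslant 2\left(|x-\gamma|+|x|\right)$, which produces first $\tilde\sigma=\min\{\sigma_1,\sigma_2/3\}$ and then $\sigma=\tilde\sigma/2$, and it bounds only the three distances $|\gamma_4-\gamma_3|$, $|\gamma_2-\gamma_1|$, $|\gamma_3-\gamma_1|$ before passing to the diameter by the triangle inequality at the very end, exactly as you propose for the six pairs. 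One slip in your accounting of the constant: four radial integrals each worth $\sigma^{-2}$ would give $\sigma^{-8}$, whereas in fact there are only two two-dimensional integrations (over $x-y$ and $x+y$ after the paper's change of variables), each contributing $2\pi/c^{2}$, and this is what yields the $\sigma^{-4}$ in $K_\sigma$.
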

\begin{proof}
After the change of variables $x\mapsto x+\gamma_3$ and $y\mapsto y+\gamma_1$ one gets
\[
w(\gamma_1,\gamma_2,\gamma_3,\gamma_4)\; =\;\left(\frac{\pi\phi}{2\omega^2}\right)^2 \int_{\R^2\times\R^2}\dd x\dd y\; \expo{-\ii 2\omega^2[(\gamma_3-\gamma_4)\wedge x]}\expo{-\ii 2\omega^2[(\gamma_1-\gamma_2)\wedge y]}\widetilde{F}_{\underline{\gamma}}(x,y)
\]
where $\phi\;:=\;\expo{\ii 2\omega^2[(\gamma_4\wedge\gamma_3)+(\gamma_2\wedge\gamma_1)]}$ is a phase and
\[
\widetilde{F}_{\underline{\gamma}}(x,y)
\;:=\;W\big(|(x-y)+(\gamma_3-\gamma_1)|\big)f_{\gamma_4-\gamma_3}(x)f_{\gamma_2-\gamma_1}(y)\]
with
\[
f_{\gamma}(x)\;:=\;\overline{v_\omega\left(x-\gamma\right)}
v_\omega\left(x\right)\;.
\]
Since $v_\omega$ is a rapidly decaying then the function $f_{\gamma}$ decays fast when $|\gamma|\to \infty$. More precisely, Assumption~(ii) implies that
\[
|f_{\gamma}(x)|\;<\;C_2^2\expo{-\sigma_2(|x-\gamma|+|x|)}\;<\;C_2^2\expo{-\frac{\sigma_2}{2}|\gamma|}\expo{-\frac{\sigma_2}{2}|x|}\;
\]
where we used that
\begin{equation}\label{eq_AAS}
|\gamma|+|x|\leqslant |\gamma|+|x|+|x-\gamma| \leqslant 2(|x-\gamma|+|x|).
\end{equation}
This and (i) now imply that
\[
|\widetilde{F}_{\underline{\gamma}}(x,y)|\leqslant C_1C_2^4\expo{-\frac{\sigma_2}{2}|\gamma_4-\gamma_3|}
\expo{-\frac{\sigma_2}{2}|\gamma_2-\gamma_1|}\expo{-\sigma_1|(x-y)+(\gamma_3-\gamma_1)|}\expo{-\frac{\sigma_2}{2}|x|}\expo{-\frac{\sigma_2}{2}|y|}.
\]
By an application of \eqref{eq_AAS} (using $-\gamma$ instead of $\gamma$) one gets $|\gamma|+|x-y| \leqslant 2|(x-y)+\gamma|+|x-y|$ and hence
\begin{equation*}
|\gamma|+|x-y|+|x+y| \leq 2|(x-y)+\gamma| + 3(|x| + |y|).
\end{equation*}
Picking now $\tilde\sigma = \min\{\sigma_1,\frac{\sigma_2}{3}\}$, we conclude that 
\begin{equation*}
|\widetilde{F}_{\underline{\gamma}}(x,y)|\leqslant
C_1C_2^4\expo{-\frac{\sigma_2}{2}|\gamma_4-\gamma_3|}
\expo{-\frac{\sigma_2}{2}|\gamma_2-\gamma_1|}\expo{-\frac{\tilde\sigma}{2}(|\gamma_3-\gamma_1|+|x-y|+|x+y|)}\expo{-\frac{\sigma_2-3\tilde\sigma}{2}(|x|+|y|)}
\end{equation*}
The last factor is uniformly bounded. Moreover, $\frac{\sigma_2}{2}\geq \tilde\sigma\geq\frac{\tilde\sigma}{2}$ and so
\begin{equation}\label{eq-ineq-Mas}
|w(\gamma_1,\gamma_2,\gamma_3,\gamma_4)| \leqslant K_{\frac{\tilde\sigma}{2}} \expo{-\frac{\tilde\sigma}{2}\left(|\gamma_4-\gamma_3|+|\gamma_2-\gamma_1|+|\gamma_3-\gamma_1|\right)}
\end{equation}
where
\begin{equation*}
K_{\frac{\tilde\sigma}{2}}:= C_1C_2^4\left(\frac{\pi}{2\omega^2}\right)^2\int_{\R^2\times\R^2}\dd x\dd y \expo{-\frac{\tilde\sigma}{2}(|x-y|+|x+y|)}
=\frac{4 \pi^4 C_1C_2^4}{\omega^4\tilde\sigma^4}.
\end{equation*}
Inequality \eqref{eq-ineq-Mas} proves yields the claim (with $\sigma = \frac{\tilde\sigma}{2}$) if the diameter is realized on any of the pairs $(\gamma_1,\gamma_2)$, $(\gamma_1,\gamma_3)$ or $(\gamma_3,\gamma_4)$. The same decay can be obtained from \eqref{eq-ineq-Mas} for any other pair by the triangle inequality.
\end{proof}

\section{Second quantization realized on localized frames}   \label{sec: Fock quantization}

In this section, we discuss how to use localized frames to express conveniently the second quantization of quadratic Hamiltonians in the GNS space of pure quasi-free states.

\subsection{Quasi-free second quantization}\label{sec: 2nd Q basis}

We first recall that a pure gauge-invariant quasi-free state $\omega_P$ over $\caA(\caH)$ is uniquely defined by an orthogonal projection $P = P\str = P^2$ on $\caH$ and the formula
\begin{equation}\label{quasi-free}
\omega_P(a\str(f_n)\cdots a\str(f_1)a(g_1)\cdots a(g_m)) = \delta_{nm}\det(\langle g_i, Pf_j\rangle_{i,j=1}^n).
\end{equation}
Let $a_0(f),a_0\str(f)$ be the creation and annihilation operators acting on the fermionic Fock space
\begin{equation*}
\rr{F}(\caH) := \bigoplus_{n=0}^{\infty}\caH\wedge\cdots\wedge \caH,
\end{equation*}
where $\caH\wedge\cdots\wedge \caH$ is the $n$-fold antisymmetric tensor product of $\caH$. Let $C$ be a conjugation on $\caH$ commuting with $P$. Then
\begin{equation*}
\pi_P(a(f)) := a_0(({\bf1}-P)f) + a_0\str(CPf)
\end{equation*}
form a representation of the CAR algebra on $\rr{F}(\caH)$ such that
\begin{equation*}
\langle\Omega, \pi_P(a\str (f)) \pi_P(a(g))\Omega\rangle = \langle\Omega, a_0 (CP f) a_0\str(CP g)\Omega\rangle = \langle CPf,CPg\rangle,
=\langle g,Pf\rangle
\end{equation*}
namely $(\rr{F}(\caH),\pi_P,\Omega)$ is a GNS representation of $\omega_P$. The case $P=0$ corresponds to the usual Fock state, while a non-trivial $P$ allows for a non-zero particle density: Indeed, if $n_P(f) := a_P\str (f) a_P(f)$, where we denote $a_P(f) := \pi_P(a(f))$, then
\begin{equation*}
\langle\Omega, n_P(f)\Omega\rangle = 1
\end{equation*}
for any normalized $f = Pf$. In the upcoming example, this would typically be a fractionally filled Landau level.

Second quantization of a self-adjoint operator $H$ on $\caH$ is a self-adjoint operator $\dd \Gamma_P(H)$ on $\rr{F}(\caH)$ such that
\begin{equation}\label{e of dGamma}
\ep{\iu t \dd\Gamma_P(H)} a_P^\sharp(f) \ep{-\iu t \dd\Gamma_P(H)} = a_P^\sharp(\ep{\iu t H}f),
\end{equation}
namely the corresponding unitary group $\ep{\iu t \dd\Gamma_P(H)}$ implements the automorphism $a(f)\mapsto a(\ep{\iu t H}f)$ in the representation $\pi_P$. The existence of this operator can be proved as follows, see~\cite{Lundberg} for details and proofs.

Let $L(t) = P^\perp \ep{\iu t H} P^\perp$ and $M(t)=CP\ep{\iu t H} P^\perp$, where we deonted $P^\perp = {\bf 1}-P$. If $\mathrm{tr}(PHP^\perp H)$ is finite, then $M(t)$ is a Hilbert-Schmidt operator and so is $K(t) := M(t) L(t)^{-1}$. Therefore, there is a square integrable sequence $\{\lambda_n\}_{n=1}^\infty$, an orthonormal basis $v_n(t)$ of $P\caH$ and an orthonormal basis $u_n(t)$ of $P^\perp\caH$ such that 
\begin{equation*}
K(t)f = \sum_n\lambda_n(t) v_n(t)\langle u_n(t),g\rangle.
\end{equation*}
With this, we define the vectors
\begin{equation*}
\Omega_P^N(t) := \ep{-\sum_{n=1}^N\lambda_n(t) a_0\str(u_n(t))a_0\str(v_n(t))}\Omega.
\end{equation*}
The fact that the sequence $\lambda_n(t)$ is square integrable implies that $\Omega_P^N(t)$ is strongly convergent and we define
\begin{equation*}
\Omega_P(t) := \caN(t) \ep{-\sum_{n=1}^\infty\lambda_n(t) a_0\str(u_n(t))a_0\str(v_n(t))}\Omega
\end{equation*}
where
\begin{equation*}
\caN(t) = \prod_{n=1}^\infty\left(1+\lambda_n(t)^2\right)^{-\frac{1}{2}}
\end{equation*}
ensures that $\Omega_P(s)$ is normalized. Then we have
\begin{thm}\label{thm:Bogolubov}
Assume that $\mathrm{tr}(PHP^\perp H)$ is finite and let $\Gamma_P(\ep{\iu t H})$ be defined by
\begin{align*}
 \Gamma_P(\ep{\iu t H})\Omega &= \Omega_P(t),\\
 \Gamma_P(\ep{\iu t H})a_P^\sharp(f)\Omega &= a_P^\sharp(\ep{\iu t H}f)\Omega_P(t).
\end{align*}
Then $\Gamma_P(\ep{\iu t H})$ extends to a unitary group on $\rr{F}(\caH)$ such that
\begin{equation*}
a_P(\ep{\iu t H}f) = \Gamma_P(\ep{\iu t H}) a_P(f) \Gamma_P(\ep{\iu t H})\str.
\end{equation*}
\end{thm}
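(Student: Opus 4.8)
Write $U_t := \ep{\iu t H}$ and recall that in the GNS representation $a_P^\sharp(f) := \pi_P(a^\sharp(f))$, so that $a_P(f) = a_0(P^\perp f) + a_0\str(CPf)$. The statement is the fermionic Shale--Stinespring implementability theorem, and the hypothesis $\tr(PHP^\perp H)<\infty$ is precisely the Hilbert--Schmidt condition that places the rotated vacuum $\Omega_P(t)$ inside $\rr{F}(\caH)$ (via square-summability of $\{\lambda_n(t)\}$); the convergence and normalization of $\Omega_P(t)$ are already granted in the construction. The plan is to recognize $\Omega_P(t)$ as the Fock vacuum of the rotated fields $b_t(f) := a_P(U_t f)$, to define $\Gamma_P(U_t)$ on the dense domain $\caD := \mathrm{span}\{a_P^{\sharp_1}(f_1)\cdots a_P^{\sharp_k}(f_k)\Omega\}$ (dense since $\Omega$ is cyclic) by the intertwining prescription of the statement, to prove it is a well-defined isometry, then a unitary, and finally to read off the conjugation relation.

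The heart of the argument is the annihilation (vacuum) property
\[
a_P(U_t P^\perp f)\,\Omega_P(t) = 0\;,\qquad a_P\str(U_t P f)\,\Omega_P(t) = 0\;,\qquad \forall f\in\caH\;.
\]
Since $u_n(t)\in P^\perp\caH$ and $v_n(t)\in P\caH$ are orthonormal and the pair-creation operators $a_0\str(u_n)a_0\str(v_n)$ are nilpotent and mutually commuting, the exponential factorizes as $\Omega_P(t) = \caN(t)\prod_n\big({\bf 1}-\lambda_n(t)a_0\str(u_n)a_0\str(v_n)\big)\Omega$. For $g\in P^\perp\caH$ one has $a_P(U_t g) = a_0(L(t)g) + a_0\str(M(t)g)$, and using $M(t)=K(t)L(t)$ and letting $w=L(t)g$ range over $P^\perp\caH$, the first identity reduces to the mode-wise relations $a_0(u_m)\Omega_P(t) = -\lambda_m(t)\,a_0\str(v_m)\Omega_P(t)$ together with $a_0(w)\Omega_P(t)=0$ for $w\perp\{u_n\}$; each is a one-line CAR computation on the single relevant factor. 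The second identity follows by the symmetric computation, now resolving $\{v_n\}$ in $P\caH$. This step is mechanical but is where the precise exponential ansatz earns its keep.

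With the vacuum property established, the rotated two-point function collapses: the CAR together with the two identities give $\langle\Omega_P(t),a_P\str(U_t f)a_P(U_t g)\Omega_P(t)\rangle = \langle g,Pf\rangle$, which is exactly $\omega_P(a\str(f)a(g))$. Both $\Omega$ and $\Omega_P(t)$ are gauge-invariant quasi-free vacua --- each annihilated by its $P^\perp$-annihilators and $P$-creators --- so every $2n$-point function is the same determinant of two-point functions by~\eqref{quasi-free}. Consequently the prescription $\Gamma_P(U_t): a_P^{\sharp_1}(f_1)\cdots a_P^{\sharp_k}(f_k)\Omega \mapsto a_P^{\sharp_1}(U_t f_1)\cdots a_P^{\sharp_k}(U_t f_k)\Omega_P(t)$ satisfies $\langle\Gamma_P(U_t)\Phi,\Gamma_P(U_t)\Psi\rangle = \langle\Phi,\Psi\rangle$ on $\caD$. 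This single identity proves simultaneously that the map is well defined (trivial kernel, so all algebraic relations are respected) and isometric, hence it extends to an isometry of $\rr{F}(\caH)$.

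The main obstacle is upgrading this isometry to a unitary, i.e. surjectivity. Here I would run the entire construction for $U_{-t}$: the hypotheses are unchanged since $\tr(PHP^\perp H)$ is unaffected and $U_{-t}=U_t\str$, yielding an isometry $\Gamma_P(U_{-t})$ built from $\Omega_P(-t)$. By construction each $\Gamma_P(U_{\pm t})$ intertwines $a_P^\sharp(f)$ with $a_P^\sharp(U_{\pm t}f)$ on $\caD$, so the composition $W := \Gamma_P(U_{-t})\Gamma_P(U_t)$ is an isometry commuting with every $a_P^\sharp(f)$, hence with $\pi_P(\caA(\caH))$. Since $\omega_P$ is \emph{pure}, $\pi_P$ is irreducible and $W = c\,{\bf 1}$ for a phase $c$; an isometry with a scalar two-sided partner is surjective, so $\Gamma_P(U_t)$ is unitary. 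The conjugation relation $a_P(U_t f) = \Gamma_P(U_t)a_P(f)\Gamma_P(U_t)\str$ then follows from the intertwining identity on $\caD$ by bounded extension. Finally, $\Gamma_P(U_t)\Gamma_P(U_s)$ and $\Gamma_P(U_{t+s})$ are unitaries implementing the same automorphism, so by irreducibility they coincide up to a phase; the explicit normalization $\caN(t)$ of $\Omega_P(t)$ fixes that phase, making $t\mapsto\Gamma_P(U_t)$ a genuine unitary group (cf.~\cite{Lundberg}).
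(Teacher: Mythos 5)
You should first be aware that the paper does not actually prove Theorem~\ref{thm:Bogolubov}: it constructs the candidate vector $\Omega_P(t)$ and then explicitly refers the reader to \cite{Lundberg} ``for details and proofs''. Your proposal is therefore not following a proof in the text but attempting a self-contained one, and it follows the standard implementability scheme (fermionic Shale--Stinespring, in Lundberg's formulation). The skeleton is right and correctly assembled: the vacuum property of $\Omega_P(t)$ with respect to the rotated annihilators, the identification of the $2n$-point functions of $\Omega_P(t)$ in the rotated fields with those of $\omega_P$ via quasi-freeness of both vacua, the resulting well-definedness and isometry of $\Gamma_P(\ep{\iu t H})$ on the cyclic domain, and the use of purity of $\omega_P$ (irreducibility of $\pi_P$) to upgrade the isometry to a unitary by composing with the implementer of $\ep{-\iu t H}$. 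Two points are tacit and should be flagged, though both are inherited from the paper's own setup: the invertibility of $L(t)=P^\perp\ep{\iu t H}P^\perp$ on $P^\perp\caH$ (needed both to define $K(t)$ and to let $w=L(t)g$ exhaust $P^\perp\caH$) can fail at exceptional times even under the Hilbert--Schmidt hypothesis, since $L(t)$ is only Fredholm of index zero; and the ``symmetric computation'' giving $a_P\str(\ep{\iu t H}Pf)\,\Omega_P(t)=0$ is not symmetric for free --- it uses the relation between the off-diagonal blocks $P^\perp\ep{\iu t H}P$ and $CP\ep{\iu t H}P^\perp$ that comes from unitarity of $\ep{\iu t H}$.

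The one genuine gap is the last step, the group law. Irreducibility only gives $\Gamma_P(\ep{\iu t H})\Gamma_P(\ep{\iu s H})=c(t,s)\,\Gamma_P(\ep{\iu (t+s) H})$ for a phase $c(t,s)$, and the sentence ``the explicit normalization $\caN(t)$ fixes that phase'' is an assertion, not an argument. The normalization only makes $\Omega_P(t)$ a unit vector with $\langle\Omega,\Omega_P(t)\rangle=\caN(t)>0$; to conclude $c(t,s)=1$ you would need, for instance, $\langle\Omega,\Gamma_P(\ep{\iu t H})\Omega_P(s)\rangle>0$, and this does not follow from positivity of the $\caN$'s, because the pair-creation terms in the expansion of $\Omega_P(s)$ contribute complex overlaps after being rotated by $\Gamma_P(\ep{\iu t H})$. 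Triviality of this cocycle is exactly the content of Lundberg's theorem beyond single-time implementability (for general elements of the restricted unitary group the corresponding central extension is \emph{not} trivial, so some computation specific to one-parameter groups is unavoidable). Either carry out that verification, or weaken the conclusion to ``a projective family whose phases can be readjusted to yield a strongly continuous one-parameter group'', which still suffices to obtain the generator $\dd\Gamma_P(H)$ via Stone's theorem and is all that the paper subsequently uses.
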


\noindent With this, Stone's theorem ensures the existence of the generator $\dd\Gamma_P(H)$.

A particularly simple situation arises when $P$ is a spectral projection of $H$, in which case the assumption of the Theorem~\ref{thm:Bogolubov} are trivially satisfied. The definition~(\ref{quasi-free}) implies that $\omega_P$ is invariant under the dynamics, and the construction above yields indeed that $\Omega_P(t) = \Omega$ is constant. Let $\s{D}(H)$ be the domain of the self-adjoint operator $H$, and let $\{\psi_n\}_{n=1}^\infty$ be an orthonormal basis such that $\psi_m\in\s{D}(H)$ for all $m\in\bbN$. The formal expression
\begin{equation}\label{P Second Quant}
\dd\tilde\Gamma_P(H) := \sum_{m} a_P\str(H\psi_m)a_P(\psi_m)
\end{equation}
is such that 
\begin{equation*}
\iu[\dd\tilde\Gamma_P(H),a_P(f)] = a_P(\iu H f)
\end{equation*}
by the anticommutation relations and so (\ref{e of dGamma}) holds. We check that
\begin{equation*}
\dd\tilde\Gamma_P(H)\Omega = \sum_m \left(a_0\str(P^\perp H\psi_m)a_0\str(CP\psi_m) + \langle \psi_m,PHP\psi_m\rangle \right)\Omega
= \Tr(PH)\Omega.
\end{equation*}
where we used that picked $P\psi_m = 0$ or $P^\perp\psi_m = 0$ for all $m\in\bbN$ to conclude that the first term vanishes. Hence $\Omega$ is indeed invariant under the dynamics. This formal calculation exhibits the potential issue arising with~(\ref{P Second Quant}): If $PH$ is not trace class, then $\dd\tilde\Gamma_P(H)$ is not well defined. 

If $H^{(n)} = \sum_{j=1}^n {\bf 1}\otimes \cdots \otimes H \otimes \cdots\otimes{\bf 1}$, then the direct sum of the $H^{(n)}$ is essentially self-adjoint on the dense domain
\begin{equation}
\rr{D}_0(H) := \bigcup_{N\in\N} \left(\bigoplus_{n=0}^{N}\s{D}(H)_-^{(n)}\right)
\end{equation}
where $\s{D}(H)_-^{(n)}:=\s{D}(H)\wedge\ldots\wedge\s{D}(H)$ denotes the totally antisymmetric subspace of the $n-$fold tensor product of $\s{D}(H)$ with the usual convention $\s{D}(H)_-^{(0)}:=\C$, see~\cite{BratRob2}. The operator
\begin{equation}\label{Approx 2Q}
\dd\Gamma^M_P(H) : = \sum_{m=1}^M \left(a_P\str(H\psi_m)a_P(\psi_m) - \langle \psi_m,PHP\psi_m\rangle \right)
\end{equation}
is bounded operators for any $M\in\bbN$, and $\dd\Gamma^M_P(H)\Omega = 0$ by construction. Moreover,
\begin{equation*}
\dd\Gamma^M_P(H)a_P^\sharp(f_n)\cdots a_P^\sharp(f_1)\Omega = - \iu \sum_{j=1}^N a_P^\sharp(f_n)\cdots a_P^\sharp(\iu H f_j)\cdots a_P^\sharp(f_1)\Omega.
\end{equation*}
Since
\begin{equation*}
a_0\str(f)\Omega = \left(a_P\str(f) + a_P(Cf)\right)\Omega
\end{equation*}
and 
\[
\s{D}(H)_-^{(n)} = \overline{\mathrm{span}\big\{a_0\str(f_n)\cdots a_0\str(f_1)\Omega\;|\; f_j\in \s{D}(H)\text{ for all }j\in\{1,\ldots,n\}\big\}},
\]
 we conclude that $\dd\Gamma_P(H) :=  \mathrm{s}\!-\!\lim_{M\to\infty}\dd\Gamma^M_P(H)$ exists and is essentially self-adjoint on $\rr{D}_0(H)$. Therefore, as a consequence of \cite[Theorem VII.25]{Reed1980} one concludes that 
$\dd\Gamma_P^M(H)\to \dd\Gamma_P(H)$ in the strong resolvent sense. It then follows that
\begin{equation*}
\lim_{M\to\infty}\ep{\iu t \dd\Gamma_P^M(H)}\Phi = \ep{\iu t \dd\Gamma_P(H)}\Phi
\end{equation*}
for all $\Phi\in\rr{F}(\caH)$, see \cite[Theorem VIII.20]{Reed1980}.

\begin{rem}\label{Rk_exp}
Let us point out that the main point to prove the convergence of $\dd\Gamma_P^M(H)\Phi$ to $\dd\Gamma_P(H)\Phi$ is the continuity of the maps
\[
\s{H}\ni f\;\longmapsto a^\sharp_P(f)\phi^{(n)}\;\in\; \rr{F}(\caH)
\]
for every fixed $\phi^{(n)}\in \caH^{(n)}_-$ and $n\in\N$.
This property follows directly from the bound $\Vert a_P^\sharp (f)\phi^{(n)}\Vert\leqslant\Vert f \Vert \Vert \phi^{(n)}\Vert$, which follows from the anticommutation equations. The main implication of this fact is that if $f=\sum_{m\in\N}c_k\psi_m$
in the topology of $\s{H}$, then the expressions $a_P(f)\phi^{(n)}=\sum_{m\in\N}\overline{c_m}a_P(\psi_m)\phi^{(n)}$ and $a_P\str(f)\phi^{(n)}=\sum_{m\in\N}c_ma_P\str(\psi_m)\phi^{(n)}$ make sense in the topology of $\rr{F}(\caH)$ for every  $\phi^{(n)}\in \caH^{(n)}_-$. Said differently, the equalities
\[
a_P(f)=\sum_{m\in\N}\overline{c_m}a_P(\psi_m)\;,\qquad a_P\str(f)=\sum_{m\in\N}c_ma_P\str(\psi_m)
\]
make sense in the strong operator topology. It is worth noting that the family $\{\psi_m\}$ here is not  supposed to be an orthonormal basis of $\s{H}$. \hfill $\blacktriangleleft$
\end{rem}

\subsection{Second quantization based on lattice-localized frames}   \label{sub: CAR_frame}

We now extend the above to cover the case of frames instead of orthonormal bases to represent the operator $\dd\Gamma_P(H)$. Let $\{\chi_\gamma\}_{\gamma\in\Gamma}\subset\s{D}(H)$ be a frame; In order to not overburden the notation, we denote $\rr{a}_\gamma:=a_P(\chi_\gamma)$ for all $\gamma\in\Gamma$. For every $m\in\N$, the element $\psi_m$ of the orthonormal basis used in the representation \eqref{P Second Quant} can be expressed in terms of the frame as
\[
\psi_m = \sum_{\gamma\in\Gamma}s_\gamma(m)\chi_\gamma
\]
where $s_\gamma(m):=\langle S^{-1}\chi_\gamma,\psi_m\rangle$ and the bounded invertible operator $S$ is the frame operator \ref{eq:fr_op}. 
By using the same argument in Remark \ref{Rk_exp} one obtains
\begin{equation}\label{eq:ser_a_a*}
\begin{aligned}
a_P(\psi_m) = \sum_{\gamma\in\Gamma}\overline{s_\gamma(m)}\rr{a}_\gamma\;,\qquad
a_P\str(\psi_m) = \sum_{\gamma\in\Gamma}{s_\gamma(m)}\rr{a}^*_\gamma\end{aligned}
\end{equation}
where the convergence of the sums must be interpreted in the in the strong operator topology on Fock space. Moreover  
the convergence is unconditional meaning that the order in which the set $\Gamma$ is spanned is irrelevant. In the same way, 
\[
a_P\str(\psi_n)a_P(\psi_m) = \sum_{\gamma,\gamma'\in\Gamma}{s_{\gamma'}(n)}\overline{s_\gamma(m)} \rr{a}^*_{\gamma'}\rr{a}_\gamma\;,
\]
where again the series converges unconditionally with respect to the strong operator topology.
Plugging this formula into~\eqref{Approx 2Q} one gets
\begin{equation}\label{SQ_bas_3}
\dd\Gamma_P^M(H)=\sum_{\gamma,\gamma'\in\Gamma}\bigg(\sum_{m=1}^M \langle S^{-1}\chi_{\gamma'}, H\psi_m\rangle \overline{s_\gamma(m)} \rr{a}^*_{\gamma'}\rr{a}_\gamma - {s_{\gamma'}(m)}\overline{s_\gamma(m)}\langle \chi_\gamma,PHP\chi_{\gamma'}\rangle \bigg)
\end{equation}
which express $\dd\Gamma_P^M(H)$ as a strong limit of the $\rr{a}^*_{\gamma'}\rr{a}_\gamma$.


Let us study now the quantities
\[
t_H(\gamma',\gamma) := \sum_{m\in\N}\langle S^{-1}\chi_{\gamma'}, H\psi_m\rangle \overline{s_\gamma(m)},\qquad 
c_H(\gamma) := \sum_{\gamma'\in\Gamma}\sum_{m\in\N}{s_{\gamma'}(m)}\overline{s_\gamma(m)}\langle \chi_\gamma,PHP\chi_{\gamma'}\rangle.
\]
\begin{lemma}\label{lma: hopping terms}
Let $H$ be a self-adjoint operator and assume that $S^{-1}\chi_\gamma\in\s{D}(H)$ for every $\gamma\in\Gamma$.
Then, it holds true that
\begin{equation*}
t_H(\gamma,\gamma') = \left\langle \chi_{\gamma'},(S^{-1})HS^{-1} \chi_\gamma\right\rangle,\qquad
c_H(\gamma) = \langle \chi_\gamma, PHP S^{-1}\chi_\gamma\rangle.
\end{equation*}
Under the more restrictive hypothesis of 
 Corollary \ref{Cor_1}  one obtains the estimate
\[
|t_H(\gamma',\gamma)|\;\leqslant\; a_2 \delta_{E(\gamma),E(\gamma')} \; E(\gamma) \expo{-\lambda_2 d(\gamma,\gamma')}\;,\qquad \gamma,\gamma'\in\Gamma\;.
\]
\end{lemma}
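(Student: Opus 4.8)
The plan is to compute the two quantities $t_H(\gamma',\gamma)$ and $c_H(\gamma)$ by exploiting the frame expansion of the orthonormal basis and the completeness relation, and then to invoke Corollary~\ref{Cor_1} to obtain the exponential decay estimate. The key observation is that both sums over the basis index $m$ are of the form $\sum_m \langle \cdots,\psi_m\rangle\langle\psi_m,\cdots\rangle$, which should collapse to a matrix element of a composite operator once we recognize the spectral resolution $\sum_m \ketbra{\psi_m}{\psi_m} = \mathbf{1}$ (in the strong sense, restricted appropriately to the relevant domains).

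\textbf{Step 1: the hopping term.} First I would rewrite $t_H(\gamma',\gamma)$ using $\overline{s_\gamma(m)} = \langle\psi_m, S^{-1}\chi_\gamma\rangle$ and $\langle S^{-1}\chi_{\gamma'}, H\psi_m\rangle = \langle H S^{-1}\chi_{\gamma'}, \psi_m\rangle$ (using self-adjointness of $H$ and the assumption $S^{-1}\chi_{\gamma'}\in\s{D}(H)$). This gives
\[
t_H(\gamma',\gamma) = \sum_{m\in\N}\langle HS^{-1}\chi_{\gamma'}, \psi_m\rangle\langle\psi_m, S^{-1}\chi_\gamma\rangle = \langle HS^{-1}\chi_{\gamma'}, S^{-1}\chi_\gamma\rangle,
\]
where the sum over $m$ closes up to the full inner product by Parseval's identity for the orthonormal basis $\{\psi_m\}$. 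Moving $H$ back across (again by self-adjointness, noting $S^{-1}\chi_{\gamma'}\in\s{D}(H)$) yields $t_H(\gamma',\gamma) = \langle\chi_{\gamma'}, S^{-1}HS^{-1}\chi_\gamma\rangle$, as claimed. The rigorous justification of the exchange of the infinite sum with the inner product is a standard truncation argument: one approximates by the finite partial sums $t_H^{(M)}$ associated with $P_M = \sum_{m=1}^M\ketbra{\psi_m}{\psi_m}$ and uses that $P_M\to\mathbf{1}$ strongly, exactly as in the earlier Proposition following~(\ref{Second Q on Frame}).

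\textbf{Step 2: the constant term.} For $c_H(\gamma)$ I would proceed analogously. Writing $\overline{s_\gamma(m)} = \langle\psi_m, S^{-1}\chi_\gamma\rangle$ and performing the sum over $m$ first gives $\sum_m s_{\gamma'}(m)\langle\psi_m, S^{-1}\chi_\gamma\rangle = \langle S^{-1}\chi_\gamma, \chi_{\gamma'}\rangle$-type contractions; collapsing both the $m$-sum (Parseval) and the $\gamma'$-sum (the frame reconstruction identity~(\ref{eq:frame_basis}), which states $\sum_{\gamma'}\langle S^{-1}\chi_{\gamma'},\psi\rangle\chi_{\gamma'} = \psi$) should reduce the double sum to the single matrix element $\langle\chi_\gamma, PHPS^{-1}\chi_\gamma\rangle$. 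The care needed here is to order the collapses correctly and to confirm that the operator $PHP$ acts on vectors in its domain; since $P$ is a bounded projection and $S^{-1}\chi_\gamma\in\s{D}(H)$, the composition is well-defined.

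\textbf{Step 3: the decay estimate.} Once $t_H(\gamma',\gamma) = \langle\chi_{\gamma'}, S^{-1}HS^{-1}\chi_\gamma\rangle$ is established, the exponential bound is immediate from Corollary~\ref{Cor_1}, which gives precisely $|\langle\chi_\gamma, S^{-1}HS^{-1}\chi_{\gamma'}\rangle| \leqslant \delta_{E(\gamma),E(\gamma')}\, a_2 E(\gamma)\expo{-\lambda_2 d(\gamma,\gamma')}$ under hypotheses (i) and (ii) of that corollary. \textbf{The main obstacle} I anticipate is not the final estimate but the careful justification of interchanging the unconditionally convergent frame sums with the (possibly unbounded) operator $H$ and with the inner products; the cleanest route is to keep everything at the level of the truncated projections $P_M$, establish the identities for the bounded approximants $\dd\Gamma_P^M(H)$, and pass to the limit using the strong convergence already secured in the preceding subsection, thereby avoiding any delicate domain questions about $H$ acting on infinite series term by term.
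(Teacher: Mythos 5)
Your proposal is correct and follows essentially the same route as the paper: truncating the $m$-sum with the finite-rank projection onto the first $M$ basis vectors, using its strong convergence to the identity together with the self-adjointness of $H$ and $S$ to identify $t_H$, collapsing the $\gamma'$-sum in $c_H$ via the frame reconstruction identity~(\ref{eq:frame_basis}), and then reading off the decay estimate from Corollary~\ref{Cor_1}. No substantive differences to report.
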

\begin{proof}
If $t_H^M(\gamma',\gamma)$ denote the sum defining $t_H(\gamma',\gamma)$ truncated at $m=M$, we see that
\begin{equation*}
t_H^M(\gamma',\gamma) = \sum_{m=1}^M\langle S^{-1}\chi_{\gamma'}, H\psi_m\rangle\langle \psi_m, S^{-1}\chi_\gamma\rangle = \langle H S^{-1}, Q_MS^{-1}\chi_\gamma\rangle
\end{equation*}
where $Q_M:=\sum_{m=1}^M\ketbra{\psi_m}{\psi_m}$ is the projection on the first $M$ element of the orthogonal basis. The first claim follows by observing that $Q_M$ converges strongly, and in turn weakly, to the identity, together with the self-adjointness of $S$ and $H$. The same limiting procedure yields that
\begin{equation*}
c_H(\gamma) = \sum_{\gamma'\in\Gamma}\langle \chi_\gamma,PHP\chi_{\gamma'}\rangle\langle S^{-1}\chi_{\gamma'}, S^{-1}\chi_\gamma\rangle
=\langle \chi_\gamma,PHP S^{-1}\chi_\gamma\rangle
\end{equation*}
where we used~(\ref{eq:frame_basis}) in the second equality.
\end{proof}

We observe that~(\ref{eq:frame_basis}) also yields that
\begin{equation*}
\sum_{\gamma'\in\Gamma}t_H(\gamma,\gamma')\rr{a}^*_{\gamma'}\rr{a}_\gamma = a_P\str(HS^{-1}\chi_\gamma)a_P(\chi_\gamma).
\end{equation*}

Let us now consider  way of approximating $\dd\Gamma_P(H)$ different from \eqref{SQ_bas_3}. Let $\s{P}_f(\Gamma)$ be the collection of finite subset of $\Gamma$ and for each $\Lambda\in\s{P}_f(\Gamma)$ let us consider the bounded operator 
\[
\dd\Gamma_P^\Lambda(H):=\sum_{\gamma\in\Lambda} \left(a_P\str(HS^{-1}\chi_\gamma)a_P(\chi_\gamma) - c_H(\gamma)\right).
\]

\begin{prop}\label{prop_conv_gab}
Let $H$ be a self-adjoint operator and assume that $S^{-1}\chi_\gamma\in\s{D}(H)$ for every $\gamma\in\Gamma$.
Then,  $\dd\Gamma_P^\Lambda(H)$ converges in the {strong resolvent} sense to $\dd\Gamma_P(H)$ when $\Lambda\nearrow \Gamma$.
\end{prop}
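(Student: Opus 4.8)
The plan is to reduce the statement to strong convergence on the core $\rr{D}_0(H)$ and then to invoke the same functional-analytic input (based on \cite{Reed1980}) that was used to pass from $\dd\Gamma_P^M(H)$ to $\dd\Gamma_P(H)$. Since $\dd\Gamma_P(H)$ is essentially self-adjoint on $\rr{D}_0(H)$ and each $\dd\Gamma_P^\Lambda(H)$ is bounded, it suffices to prove that
\[
\lim_{\Lambda\nearrow\Gamma}\dd\Gamma_P^\Lambda(H)\phi=\dd\Gamma_P(H)\phi\qquad\text{in }\rr{F}(\caH)
\]
for every $\phi\in\rr{D}_0(H)$. The companion convergence of the adjoints $\dd\Gamma_P^\Lambda(H)\str$ on the same core follows from the symmetric (inner-index) truncation, so that $\dd\Gamma_P(H)$ is recovered as the strong graph limit and strong resolvent convergence follows exactly as for $\dd\Gamma_P^M(H)$.

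For a generating vector $\phi=a_P\str(f_n)\cdots a_P\str(f_1)\Omega$ with $f_j\in\s{D}(H)$, I would commute $\dd\Gamma_P^\Lambda(H)$ to the right by the graded Leibniz rule. This produces the commutators $[\dd\Gamma_P^\Lambda(H),a_P\str(f_j)]=\sum_{\gamma\in\Lambda}\langle\chi_\gamma,f_j\rangle\,a_P\str(HS^{-1}\chi_\gamma)$ and a remainder $a_P\str(f_n)\cdots a_P\str(f_1)\,\dd\Gamma_P^\Lambda(H)\Omega$. The commutator sums are controlled by the frame machinery: since $Sf_j=\sum_\gamma\langle\chi_\gamma,f_j\rangle\chi_\gamma$ by~\eqref{eq:fr_op} and the frame expansions converge unconditionally in the strong operator topology (Remark~\ref{Rk_exp} and~\eqref{eq:ser_a_a*}), one has $[\dd\Gamma_P^\Lambda(H),a_P\str(f_j)]\to a_P\str(Hf_j)$, which is exactly the derivation action of $\dd\Gamma_P(H)$; the identification of the limiting one-particle kernel with $H$ is Lemma~\ref{lma: hopping terms} together with the reconstruction $\sum_\gamma|S^{-1}\chi_\gamma\rangle\langle\chi_\gamma|={\bf 1}$. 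Composition with the bounded operators $a_P\str(f_i)$ preserves these limits, so the hopping part causes no difficulty beyond the $\caH$-norm convergence of $\sum_{\gamma\in\Lambda}\langle\chi_\gamma,f_j\rangle HS^{-1}\chi_\gamma\to Hf_j$, itself a consequence of $S^{-1}\chi_\gamma\in\s{D}(H)$, closedness of $H$, and the dual-frame reconstruction.

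The delicate contribution is the remainder, i.e.\ the action on the vacuum. Using $a_P(\chi_\gamma)\Omega=a_0\str(CP\chi_\gamma)\Omega$, $a_P\str(g)=a_0\str(P^\perp g)+a_0(CPg)$ and $c_H(\gamma)=\langle\chi_\gamma,PHPS^{-1}\chi_\gamma\rangle$ (Lemma~\ref{lma: hopping terms}), one obtains
\[
\dd\Gamma_P^\Lambda(H)\Omega=\sum_{\gamma\in\Lambda}a_0\str(P^\perp HS^{-1}\chi_\gamma)\,a_0\str(CP\chi_\gamma)\Omega+\Big(\sum_{\gamma\in\Lambda}\langle\chi_\gamma,PHP^\perp S^{-1}\chi_\gamma\rangle\Big)\Omega.
\]
Under the identification of two-particle vectors with Hilbert--Schmidt operators, the first sum corresponds (up to the conjugation $C$) to $\sum_{\gamma\in\Lambda}|P^\perp HS^{-1}\chi_\gamma\rangle\langle P\chi_\gamma|$, whose limit is $P^\perp HP$ by the dual-frame reconstruction; the scalar term should converge to $\Tr(PHP^\perp)=0$. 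Both limits reproduce the anomalous (pairing) vector $\dd\Gamma_P(H)\Omega$, exactly as for $\dd\Gamma_P^M(H)\Omega$.

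The main obstacle is precisely this vacuum term. Controlling it requires upgrading the merely strong convergence of the dual-frame partial sums to convergence in Hilbert--Schmidt norm, which is where the hypothesis $\Tr(PHP^\perp H)<\infty$ underlying the existence of $\dd\Gamma_P(H)$ in Theorem~\ref{thm:Bogolubov} enters; the same structure is what forces the scalar term to vanish, and here the localization estimates of Proposition~\ref{Prop: localization of S^-p} for $S^{-1}$ are used to make the non-orthonormal frame sums summable. This is the one place where the analysis genuinely departs from the orthonormal-basis computation, in which $\Omega$ was exactly invariant. When $P$ is a spectral projection of $H$, as in the Landau-level application, $P^\perp HP=0$, the entire vacuum term vanishes, and the proof reduces to the hopping-term analysis.
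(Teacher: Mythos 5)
Your proposal follows the paper's own route: the published proof is just two sentences --- verify strong convergence of $\dd\Gamma_P^\Lambda(H)\Phi$ for $\Phi$ in the core $\rr{D}_0(H)$ using the continuity property of Remark~\ref{Rk_exp}, then invoke \cite[Theorem VIII.25]{Reed1980} --- and your reduction to core convergence plus the Reed--Simon limit theorem is exactly that, with the commutator/Leibniz computation being the intended way to ``check'' the core convergence. The only real divergence is your extended treatment of the vacuum term: in the setting in which $\dd\Gamma_P(H)$ was actually constructed in Section~\ref{sec: 2nd Q basis} ($P$ a spectral projection of $H$, with the frame adapted to the spectral decomposition so that $S$ commutes with $P$), one has $\dd\Gamma_P^\Lambda(H)\Omega=0$ identically for every finite $\Lambda$, so no Hilbert--Schmidt upgrade or trace condition is needed --- a point you concede in your last sentence, and which is the reason the paper never discusses it.
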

\begin{proof}
The proof follows the same strategy described at the end of Section \ref{sec: 2nd Q basis} and uses the property described in  Remark \ref{Rk_exp}. First of all one checks that $\dd\Gamma_P^\Lambda(H)\to \dd\Gamma_P(H)$ for every $\Phi\in \rr{D}_0(H)$ when $\Lambda\nearrow \Gamma$. The result follows again from \cite[Theorem VII.25]{Reed1980}.
\end{proof}

The last result allows us to represent the second quantization of $H$ as the series
\[
\dd\Gamma(H) = \lim_{\Lambda\nearrow \Gamma}\bigg(\sum_{\gamma,\gamma'\in\Lambda} t_H(\gamma',\gamma) \rr{a}^*_{\gamma'}\rr{a}_\gamma - \sum_{\gamma\in\Lambda}c_H(\gamma)\bigg)
\]
which is convergent in strong resolvent sense. We point out here that the expression in the right hand side, seen at the level of the abstract algebra, corresponds to a bonafide `interaction' in the sense of Section~\ref{sub: Hamiltonian}, and under the assumptions of Corollary~\ref{Cor_1} through Lemma~\ref{lma: hopping terms}, this interaction satisfies Assumption~\ref{Hyp:Interaction}.

\begin{ex}[Landau Hamiltonian]\label{Rk_land_ham}
The case of the second quantization of the Landau Hamiltonian can now be discussed. First of all from $L^2(\R^2)=\bigoplus_{r\in\N_0}\s{H}_{r}$ one infers that 
\[
\rr{F}_-(L^2(\R^2)) := \bigoplus_{r\in\N}\rr{F}_-(\s{H}_{r})
\]
for the structure of the Fock space associated to $L^2(\R^2)$. Secondly, the Landau Hamiltonian reads $H_B=\bigoplus_{r\in\N_0}q(r)\Pi_r$ where $\Pi_r$ is the orthogonal projection onto  $\s{H}_{r}$ and $q(r)=r+\frac{1}{2}$. We conclude that 
\[
\dd\Gamma_P(H_B) = \bigoplus_{r\in\N_0}q(r)\dd\Gamma_P(\Pi_r)\;.
\]
By using the description in Section~\ref{sub: CAR_frame}, we conclude that
\[
\dd\Gamma_P(H_B) = \sum_{r\in\N_0}\Big(\sum_{\gamma,\gamma'\in\Gamma}t_r(\gamma',\gamma) \rr{a}^*_{r,\gamma'}\rr{a}_{r,\gamma} - \sum_{\gamma\in\Lambda}c_r(\gamma)\Big)
\]
where
\begin{align*}
t_r(\gamma',\gamma) &= q(r)\left\langle \chi_{(r,\gamma')},S^{-2}\chi_{(r,\gamma)}\right\rangle,\\
c_r(\gamma) &= \langle \chi_{(r,\gamma)}, S^{-1}\chi_{(r,\gamma)}\rangle
\end{align*}
and $\rr{a}^\sharp_{r,\gamma}:=\rr{a}^\sharp(\chi_{(r,\gamma)})$. The decay of the coefficients is exponential in $d(\gamma,\gamma')$ as per Proposition~\ref{Prop: localization of S^-p}.  \hfill $\blacktriangleleft$
\end{ex}

\section{Interacting particles in a transverse magnetic field}   \label{sec: Laudau Hamiltonian}

We finally turn to the concrete description of the physically relevant case in which a frame is adapted to a concrete physical system, namely that of a two-dimensional electron gas subject to a perpendicular magnetic field, as in the setting of the quantum Hall effect. Here, the lattice-localized frame arises naturally as being made of eigenstates of the Landau Hamiltonian that are exponentially localized (the decay is in fact Gaussian) because of the cyclotron orbits. The introduction of electron-electron interactions in this setting, which is essential for the theoretical understanding of the fractional quantum Hall effect, is naturally carried out in the framework we proposed above. 

The results of this section are not new, but recall them to show how they fit the general setting. The results of the previous sections then imply that an interacting version of the Landau Hamiltonian defines a proper dynamics satisfying a Lieb-Robinson bound. 

In presence of a magnetic field, the quantum dynamics of a particle of {mass} $m$   and {charge} $q$   is generated  by the 
 magnetic Schr\"{o}dinger operator %
\begin{equation}\label{eq:mSo1}
H_A := \frac{1}{2m}\left(-\ii \hslash \nabla-\frac{q}{c}\ A\right)^2
\end{equation}
acting on the Hilbert space $L^2(\R^d)$, where $c$ is the speed of  light and $\hslash$ is the Planck constant. The vector potential $A$ (1-form) is responsible for 
the coupling of the particle with the magnetic field $B:=\dd A$ (2-form). Under quite general assumptions on the vector potential $A$ the operator \eqref{eq:mSo1}
turns out to be essentially self-adjoint on the cores ${C}_c(\R^d)\subset {S}(\R^2)$ given by the compactly supported continuous functions and the Schwartz functions, respectively \cite[Theorem 3]{leinfelder-simader-81}.

\subsection{The Landau Hamiltonian}\label{sub: Landau}
\label{subsec:LH}
The  Landau Hamiltonian  $H_B$ is the magnetic Schr\"{o}dinger operator on $L^2(\R^2)$ with vector potential describing a uniform perpendicular magnetic field. In this work we will use the symmetric gauge, \ie
\begin{equation}\label{eq:vp}
A^{\rm sym}_B(x) := \frac{B}{2}{\rm e}_\bot\wedge x = \frac{B}{2}(-x_2,x_1),\qquad x := (x_1,x_2)\in\R^2
\end{equation}
where ${\rm e}_\bot:=(0,0,1)$ is the unit vector orthogonal to the plan $\R^2$ where the particle is confined, and $B\in\R$ describes the strength (and the orientation with respect to ${\rm e}_\bot$) of the magnetic field. Then  $H_B:=H_{A^{\rm sym}_B}$  is  the two-dimensional {magnetic} Schr\"{o}dinger operator \eqref{eq:mSo1} with the vector potential \eqref{eq:vp}. In the following we will assume  $B>0$ which means that the magnetic field is positively oriented with respect to the direction of ${\rm e}_\bot$.

Henceforth we will assume $q<0$ which corresponds to the case of electrons. With the constants which appear in the definition of $H_A$ and $A^{\rm sym}_B$ one can define the magnetic energy and the magnetic length
$$
\epsilon_B := \frac{|q|B\hslash}{mc}\;,\qquad
\ell_B := \sqrt{\frac{c\hslash}{|q|B}}\;.
$$
With this, the Landau Hamiltonian can be written as
\begin{equation}\label{eq:LH}
H_B :=  \frac{\epsilon_B}{2}\left(K_{1}^2+K_{2}^2\right)
\end{equation}
where the magnetic kinetic momenta $K_{1}$ and $K_{2}$ are defined by
\begin{equation}\label{eq:Kms}
K_{1} := \left(-{\ii }\ell_B\frac{\partial}{\partial x_1}-\frac{x_2}{2\ell_B}\right),\qquad K_{2} := \left(-{\ii }\ell_B \frac{\partial}{\partial x_2}+\frac{x_1}{2\ell_B}\right)\;.
\end{equation}
The expressions  \eqref{eq:Kms}  define
essentially self-adjoint operators with core ${C}_c(\R^d)$ 
 and  we will use the same symbols  to denote   their unique self-adjoint extensions.

The spectral theory of the 
Landau Hamiltonian $H_B$ is  a well established topic  and it is closely related to the elementary theory
of the harmonic oscillator. In order to compute the spectrum of $H_B$ let us introduce the dual momenta
\begin{equation}\label{eq:Dms}
G_{1} := \left(-{\ii }\ell_B\frac{\partial}{\partial x_2}-\frac{x_1}{2\ell_B}\right),\qquad G_{2} := \left(-{\ii }\ell_B \frac{\partial}{\partial x_1}+\frac{x_2}{2\ell_B}\right)\;,
\end{equation}
which defines again a pair of self-adjoint operators with core ${C}_c(\R^2)$. Moreover, the  commutation relations 
\begin{equation}\label{eq:cc1}
\begin{aligned}
&[K_{1},K_{2}] =   -\ii  {\bf 1} = [G_{1},G_{2}]\\
&[K_{i},G_{j}] = 0\ \ \ \ \qquad i,j=1,2
\end{aligned}
\end{equation}
can be easily proved by a direct computation on the core ${C}_c(\R^2)$.
In view of \eqref{eq:cc1} one can define two pairs of creation-annihilation operators
\begin{equation}\label{eq:cr-an-op}
\rr{a}^{\pm} := \frac{1}{\sqrt{2}}\left(K_{1}\pm\ii K_{2}\right)\;,\qquad \rr{b}^{\pm}:=\frac{-1}{\sqrt{2}}\left(G_{1}\pm\ii G_{2}\right)\;.
\end{equation}
The $\rr{a}^{\pm}$ and $\rr{b}^{\pm}$, are closable operators initially defined on the dense domains ${C}_c(\R^2)$. 
Moreover $\rr{a}^{-}$ and $\rr{b}^{-}$ are the  adjoint of $\rr{a}^{+}$ and $\rr{b}^{+}$, respectively. The identities~(\ref{eq:cc1}) imply the following bosonic commutation rules: 
\begin{equation}\label{eq:cc2}
[\rr{a}^{\pm},\rr{b}^{\pm}] = 0\;, \qquad \qquad [\rr{a}^{-},\rr{a}^{+}] = {\bf 1} = [\rr{b}^{-},\rr{b}^{+}]\;.
\end{equation}

The function 
\begin{equation}\label{eq:herm1}
\psi_{0}(x) := \frac{1}{\ell_B\sqrt{2\pi}}\ \expo{-\frac{|x|^2}{4\ell_B^2}}\;\in\;{S}(\R^2)\;.
\end{equation}
is normalized and satisfies $\rr{a}^{-}\psi_{0}=0=\rr{b}^{-}\psi_{0}$. Acting on $\psi_{0}$ with the creation operators one defines 
\begin{equation}\label{eq:herm2}
\psi_{n} := \frac{1}{\sqrt{n_1!n_2!}}\ (\rr{a}^{+})^{n_1}(\rr{b}^{+})^{n_2}\psi_{0},\qquad n := (n_1,n_2)\in\N^2_0\;
\end{equation}
where $\N_0:=\{0\}\cup\N$.
One has that $\psi_{n}\in {S}(\R^2)$ for any $n\in\N^2_0$. Moreover, a recursive application of the commutation rules \eqref{eq:cc2} yields that these functions form an orthonormal set. In fact, $\{\psi_{n}\ |\ n\in\N^2_0\}\subset {S}(\R^2)$ is a complete orthonormal system for $L^2(\R^2)$ called magnetic Laguerre basis.  
The functions \eqref{eq:herm2} can be expressed as \cite{johnson-lippmann-49,raikov-warzel-02}
\begin{equation}\label{eq:lag_pol}
\psi_{n}(x) := \psi_0(x)\ \sqrt{\frac{n_1!}{n_2!}}\left[\frac{x_1-\ii x_2}{\ell_B\sqrt{2}}\right]^{n_2-n_1}L_{n_1}^{(n_2-n_1)}\left(\frac{|x|^2}{2\ell_B^2}\right)
\end{equation}
where
$$
L_m^{(\alpha)}\left(\zeta\right) := \sum_{j=0}^{m}\frac{(\alpha+m)(\alpha+m-1)\ldots(\alpha+j+1)}{j!(m-j)!}\left(-\zeta\right)^j\;,\quad\alpha,\zeta\in \R
$$
are the generalized Laguerre polynomials of degree $m$ (with the usual convention $0!=1$) \cite[Sect. 8.97]{gradshteyn-ryzhik-07}.  


By 
using the definitions \eqref{eq:cr-an-op} and the commutation relations \eqref{eq:cc1}, a standard calculation yields
\begin{equation}\label{eq:Lh_no}
H_B = \epsilon_B\left(\rr{a}^{+}\rr{a}^{-}+\frac{1}{2}{\bf 1}\right).
\end{equation}
The first consequence  is that any Laguerre vector $\psi_{n}$ is an eigenvector of $H_B$. This implies that the Laguerre  basis  provides an orthonormal system which diagonalizes $H_B$ according to
$$
H_B\psi_{n}\; =\; \epsilon_B\left(n_1+\frac{1}{2}\right)\psi_{n},\qquad\qquad n = (n_1,n_2)\in\N_0^2.
$$
Hence, the spectrum of $H_B$ is a sequence of eigenvalues given by
\begin{equation}\label{eq:spec_H_B}
\sigma(H_B)=\left.\left\{q(r):=\epsilon_B\left(r+\frac{1}{2}\right)\ \right| \ r\in\N_0\right\}\;.
\end{equation}
We refer to the eigenvalue  $q(r)$ as the $r$-th Landau level. 


Each Landau level  is infinitely degenerate. 
A simple computation shows that the orthogonal component of the angular moment can be written as
\begin{equation*}
L_3=-\ii\hslash\left(x_1\frac{\partial}{\partial x_2}-x_2\frac{\partial}{\partial x_1}\right)
 = \hslash\left(\rr{a}^{+}\rr{a}^{-}-\rr{b}^{+}\rr{b}^{-}\right)\;.
\end{equation*}
This  and~(\ref{eq:cc2}) impliy that $[H_B,L_3]=0$.
Then, the Laguerre functions $\psi_{n}$ are simultaneous  eigenfunctions of $H_B$ and $L_3$ with eigenvalues $q(n_1)$ and  $\hslash(n_1-n_2)$, respectively. This shows that the possible eigenvalues of the angular momentum $L_3$ for a particle in the energy level $n_1$  are $l_m:=\hslash m$ with $-\infty < m\leqslant n_1$.


Let $\s{H}_r\subset L^2(\R^2)$ be the eigenspace relative to the eigenvalue $q(r)$ of $H_B$. With a slight abuse of nomenclature we will call
$\s{H}_r$  the $r$-th {Landau level}.
The subspace   $\s{H}_r$ is spanned by $\psi_{(r,m)}$ with $m\in\N_0$ and the spectral projection $\Pi_{r}:L^2(\R^2)\to\s{H}_r$ is expressed in Dirac notation as
\begin{equation}\label{eq:Lan_proj}
\Pi_{r} := \sum_{m=0}^{\infty}\ketbra{\psi_{(r,m)}}{\psi_{(r,m)}}\;.
\end{equation}
One infers from \eqref{eq:herm2} the  recursive relations $\Pi_{r}=r^{-1} \rr{a}^{+} \Pi_{r-1}  \rr{a}^{-}$ and hence
$$
\Pi_{r} = \frac{1}{r!}\;(\rr{a}^{+})^r  \Pi_{0} (\rr{a}^{-})^r\;.
$$

\subsection{The {magnetic}  translations}
\label{subsect:magn_BFZ_tras}
With the help of the dual momenta \eqref{eq:Dms} one can build the family of unitary operators
\[
T_\gamma := \expo{-\ii(\gamma_1 G_2+\gamma_2 G_1)} = \expo{\frac{\ii}{2}\gamma_1\gamma_2}\expo{-\ii\gamma_1 G_2}\expo{-\ii\gamma_2 G_1}
\;,\qquad \gamma:=(\gamma_1,\gamma_2)\in \R^2\;.
\]
An explicit computation provides
$$
(T_\gamma\psi)(x) = \expo{-\ii\frac{\gamma\wedge x}{2\ell_B}}\psi\left(x-\ell_B\gamma\right)\;,\qquad\psi\in L^2(\R^2)
$$
where $\gamma\wedge x:=\gamma_1x_2-\gamma_2x_1$.
The operators $T_\gamma$
 are called magnetic  translations \cite{zak1,zak2}. As a consequence of the commutations relations \eqref{eq:cc1} one gets that
$$
[T_\gamma,H_B] = 0\;,\qquad\quad \forall\; \gamma\in\R^2
$$
namely the operators $T_\gamma$ are symmetries of the Landau Hamiltonian $H_B$. 
Moreover, from the definition, one can check that
$$
T_{\gamma+\gamma'} = \expo{\ii\frac{\gamma\wedge\gamma'}{2}}
T_\gamma T_{\gamma'}\;,\qquad\quad\forall\; \gamma,\gamma'\in\R^2\;. 
$$ 
This shows that the mapping $\gamma\mapsto T_\gamma$   provides a projective unitary representation of the group $\R^2$ on $L^2(\R^2)$ which leaves invariant $H_B$. To obtain a commutative representation one can restrict to a discrete subgroup of  $\R^2$.
For a given deformation parameter $\vartheta>0$
let us introduce the magnetic lattice 
\begin{equation}\label{avrth_lat}
\Gamma_\vartheta := \sqrt{2\pi}\vartheta\Z\;\times\;\frac{\sqrt{2\pi}}{\vartheta}\Z
\end{equation}
which is isomorphic to $\Z^2$. Since $\gamma\wedge\gamma'\in2\pi\Z$ for every $\gamma,\gamma'\in\Gamma_\vartheta$ one concludes that 
 $$
T_\gamma T_{\gamma'} = 
T_{\gamma'}T_\gamma \;,\qquad\quad\forall\; \gamma,\gamma'\in\Gamma_\vartheta\;. 
$$
It is worth noticing that the area of the unit cell of  the lattice $\Gamma_\vartheta$ is $4\pi$ independently of the deformation parameter.

\subsection{Magnetic Gabor frames for the lowest Landau level}\label{secop_fr_LLL}
We now connect the above well-known facts to the general setting of Section~\ref{sec: Gabor frames}. The lowest Landau level (LLL) is given by
\[
\s{H}_0 = \overline{{\rm span}\;\{\psi_{(0,m)}\;|\; m\in\N_0\}}\;,
\]
where
\begin{equation}\label{eq;psi_001}
\psi_{(0,m)}(x) = \frac{1}{\ell_B\sqrt{2\pi m!}}\left(\frac{x_1-\ii x_2}{\ell_B\sqrt{2}}\right)^m\; \expo{-\frac{|x|^2}{4\ell_B^2}}
\end{equation}
see~\eqref{eq:lag_pol}. In particular, we recover~\eqref{eq:descr_H_0} with $\omega=(2\ell_B)^{-1}$.


For every $\gamma\in\R^2$ we let
\begin{equation*}
\chi_\gamma:=T_{\ell_B^{-1}\gamma}\psi_{(0,0)}.
\end{equation*}
Since the magnetic translations commute with $H_B$, one automatically infers that $\chi_\gamma\in\s{H}_0$ for every $\gamma\in\R^2$.
An explicit computation provides
\[
\chi_\gamma(x) = \frac{1}{\ell_B\sqrt{2\pi}}\expo{-\ii\frac{\gamma\wedge x}{2\ell_B^2}}\;\expo{-\frac{\left|x-\gamma\right|^2}{4\ell_B^2}}\;,
\]
\ie the vectors $\chi_\gamma$ are exactly the vectors introduced in Section~\ref{sec:MagGabor}. It will be useful to expand the vector $\chi_\gamma$ on the magnetic Laguerre basis.

\begin{lemma}\label{lem020}
It holds that
\begin{equation}\label{eq;chi_002}
\chi_\gamma = \expo{-\frac{|\gamma|^2}{4\ell_B^2}}\sum_{m\in\N_0}\frac{1}{\sqrt{m!}}\left(\frac{\gamma_1+\ii \gamma_2}{\ell_B\sqrt{2}}\right)^m\psi_{(0,m)}\;,
\end{equation}
with the usual convention $0!=0^0=1$.
\end{lemma}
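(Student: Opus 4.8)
The plan is to recognize that $\chi_\gamma = T_{\ell_B^{-1}\gamma}\psi_{(0,0)}$ is precisely a bosonic coherent state for the oscillator generated by the pair $\rr{b}^{\pm}$, so that \eqref{eq;chi_002} is just the standard expansion of a coherent state in the number basis $\{\psi_{(0,m)}\}_{m\in\N_0}$. The first step is to rewrite the magnetic translation in terms of $\rr{b}^{\pm}$ rather than $G_1,G_2$. Inverting the definitions \eqref{eq:cr-an-op} gives
\[
G_1 = -\tfrac{1}{\sqrt{2}}(\rr{b}^{+}+\rr{b}^{-}),\qquad G_2 = \tfrac{\ii}{\sqrt{2}}(\rr{b}^{+}-\rr{b}^{-}),
\]
so that, writing $\mu := \ell_B^{-1}\gamma$ (hence $\mu_j = \gamma_j/\ell_B$), a direct computation yields
\[
-\ii(\mu_1 G_2 + \mu_2 G_1) = \zeta\,\rr{b}^{+} - \overline{\zeta}\,\rr{b}^{-},\qquad \zeta := \frac{\gamma_1 + \ii\gamma_2}{\sqrt{2}\,\ell_B}.
\]
Thus $T_{\ell_B^{-1}\gamma} = \expo{\zeta\rr{b}^{+} - \overline{\zeta}\rr{b}^{-}}$ is exactly the displacement operator $D(\zeta)$.

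Next I would disentangle $D(\zeta)$ into normal-ordered form. Using the canonical relation $[\rr{b}^{-},\rr{b}^{+}] = {\bf 1}$ from \eqref{eq:cc2} (whence $[\zeta\rr{b}^{+},-\overline{\zeta}\rr{b}^{-}] = |\zeta|^2{\bf 1}$ commutes with everything), the Baker--Campbell--Hausdorff identity gives $D(\zeta) = \expo{-|\zeta|^2/2}\,\expo{\zeta\rr{b}^{+}}\,\expo{-\overline{\zeta}\rr{b}^{-}}$. Applying this to $\psi_{(0,0)}$ and using $\rr{b}^{-}\psi_{(0,0)} = 0$ collapses the rightmost factor to the identity, so that
\[
\chi_\gamma = \expo{-|\zeta|^2/2}\sum_{m\in\N_0}\frac{\zeta^m}{m!}\,(\rr{b}^{+})^m\psi_{(0,0)}.
\]
By \eqref{eq:herm2} one has $(\rr{b}^{+})^m\psi_{(0,0)} = \sqrt{m!}\,\psi_{(0,m)}$, and since $|\zeta|^2 = |\gamma|^2/(2\ell_B^2)$ gives the prefactor $\expo{-|\gamma|^2/(4\ell_B^2)}$, the right-hand side is exactly \eqref{eq;chi_002}.

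The main obstacle is rigour, since the $\rr{b}^{\pm}$ are unbounded and so both the exponential series and the BCH disentangling require justification. This is harmless here because $\psi_{(0,0)}\in{S}(\R^2)$ is an analytic vector for $\rr{b}^{+}$: indeed $\sum_m \tfrac{|\zeta|^m}{m!}\,\|(\rr{b}^{+})^m\psi_{(0,0)}\| = \sum_m \tfrac{|\zeta|^m}{\sqrt{m!}} < \infty$, so every series above converges in $\s{H}_0$ and the formal manipulations are legitimate. Alternatively, one can bypass BCH entirely: since $\chi_\gamma\in\s{H}_0$ and $\{\psi_{(0,m)}\}_{m\in\N_0}$ is an orthonormal basis of $\s{H}_0$, it suffices to verify \eqref{eq;chi_002} coefficientwise by computing $\langle\psi_{(0,m)},\chi_\gamma\rangle$ directly from the explicit Gaussian forms \eqref{eq;psi_001} and the kernel of $\chi_\gamma$. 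This route reduces to elementary Gaussian integrals but is computationally heavier than the operator-theoretic argument, which I would therefore adopt as the main line of proof.
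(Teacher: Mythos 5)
Your argument is correct, but it takes a genuinely different route from the paper. The paper's proof is a direct pointwise computation: starting from the explicit Gaussian expression for $\chi_\gamma(x)$, it uses the elementary identity $|a-b|^2+2\ii(a\wedge b)=|a|^2+|b|^2-2(a_1+\ii a_2)(b_1-\ii b_2)$ with $a=\gamma/2\ell_B$, $b=x/2\ell_B$ to factor $\chi_\gamma(x)=\expo{-|\gamma|^2/4\ell_B^2}\,e_\gamma(x)\,\psi_{(0,0)}(x)$ with $e_\gamma(x)=\expo{\frac{\gamma_1+\ii\gamma_2}{2\ell_B^2}(x_1-\ii x_2)}$, expands $e_\gamma$ as a power series in $(x_1-\ii x_2)$, and matches each term against the explicit formula~\eqref{eq;psi_001} for $\psi_{(0,m)}$. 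This is essentially the ``coefficientwise'' fallback you sketch at the end, except that no Gaussian integrals are needed: the coefficients are read off directly from the pointwise expansion. Your main line --- identifying $T_{\ell_B^{-1}\gamma}$ as the displacement operator $\expo{\zeta\rr{b}^{+}-\overline{\zeta}\rr{b}^{-}}$ with $\zeta=(\gamma_1+\ii\gamma_2)/(\sqrt{2}\ell_B)$ and normal-ordering it --- is the conceptually cleaner coherent-state picture, correctly reproduces all constants, and explains \emph{why} the expansion has Poissonian form. The one point to tighten is the rigour discussion: to justify the BCH disentangling on the unitary group you need $\psi_{(0,0)}$ to be an analytic vector for the full anti-self-adjoint generator $\zeta\rr{b}^{+}-\overline{\zeta}\rr{b}^{-}$, not merely for $\rr{b}^{+}$; this does hold (expanding $(\zeta\rr{b}^{+}-\overline{\zeta}\rr{b}^{-})^m\psi_{(0,0)}$ into $2^m$ monomials and using $\|\rr{b}^{\pm}\phi^{(n)}\|\leqslant\sqrt{n+1}\,\|\phi^{(n)}\|$ gives the bound $(2|\zeta|)^m\sqrt{m!}$, whence an everywhere-convergent exponential series), but it is a different statement from the one you verify. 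Since you also provide the direct coefficientwise route as a backup, the proof is complete either way; the paper's version simply avoids all operator-domain issues at the cost of being less illuminating.
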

\begin{proof}
From the identity
\[
\begin{aligned}
|a-b|^2+\ii 2(a\wedge b)\;&=\;|a|^2+|b|^2-2[a\cdot b-\ii(a\wedge b)]\\
&=\;|a|^2+|b|^2-2(a_1+\ii a_2)(b_1-\ii b_2)
\end{aligned}
\]
valid for every $a,b\in\R^2$,
and the substitutions $a = \frac{\gamma}{2\ell_B}$ and $b =  \frac{x}{2\ell_B}$, one gets
\begin{equation}\label{eq;chi_001}
\chi_\gamma(x) = \frac{1}{\ell_B\sqrt{2\pi}} \expo{-\frac{|\gamma|^2}{4\ell_B^2}}e_\gamma(x){\expo{-\frac{|x|^2}{4\ell_B^2}}} = \expo{-\frac{|\gamma|^2}{4\ell_B^2}} e_\gamma(x)\psi_{(0,0)}(x),
\end{equation}
see~(\ref{eq:herm1}), where
\[
\begin{aligned}
e_\gamma(x)\;:&=\;\expo{\frac{\gamma_1+\ii \gamma_2}{2\ell_B^2}(x_1-\ii x_2)}
\;&=\;\sum_{m\in\N_0}\frac{1}{m!}\left(\frac{\gamma_1+\ii \gamma_2}{2\ell_B^2}\right)^m(x_1-\ii x_2)^m\;.
\end{aligned}
\]
The claim now follows by comparing this with~(\ref{eq;psi_001}).
\end{proof}


The identity~\eqref{eq;chi_002} implies that
\begin{equation}\label{eq:sca_prod}
\langle\chi_\gamma,\psi_{(0,m)} \rangle = \ell_B\sqrt{2\pi}\;\psi_{(0,m)}(\gamma)\;,
\end{equation}
for every $\gamma\in\R^2$ and $m\in\N_0$, and in turn the remarkable fact that
\begin{equation}\label{Reproducing kernel}
\langle\chi_\gamma,\phi \rangle = \ell_B\sqrt{2\pi}\;\phi(\gamma)\;,\qquad \forall\; \phi\in\s{H}_0
\end{equation}
namely, $\chi_\gamma$ define evaluation functionals at the point $\gamma$ in the LLL.

The vectors $\chi_\gamma$ are normalized by construction, but they are not orthogonal. However, as announced in Section~\ref{sec:MagGabor}, they are almost orthogonal.
\begin{prop}\label{prop:01}
For every $\gamma,\gamma'\in\R^2$ it holds true that
\[
\langle\chi_\gamma,\chi_{\gamma'} \rangle = \expo{\ii\frac{\gamma\wedge\gamma'}{2\ell_B^2}}\expo{-\frac{|\gamma-\gamma'|^2}{4\ell_B^2}}\;.
\]
\end{prop}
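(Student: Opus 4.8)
The plan is to exploit the Laguerre expansion of $\chi_\gamma$ furnished by Lemma~\ref{lem020}, which turns the overlap into an elementary power series. Since $\{\psi_{(0,m)}\}_{m\in\N_0}$ is orthonormal, inserting~\eqref{eq;chi_002} for both $\chi_\gamma$ and $\chi_{\gamma'}$ and using that the inner product is antilinear in its first slot (as confirmed by~\eqref{eq:sca_prod}) collapses the double sum to the diagonal, giving
\[
\langle\chi_\gamma,\chi_{\gamma'}\rangle = \expo{-\frac{|\gamma|^2+|\gamma'|^2}{4\ell_B^2}}\sum_{m\in\N_0}\frac{1}{m!}\,\overline{z}^{\,m}(z')^m, \qquad z:=\frac{\gamma_1+\ii\gamma_2}{\ell_B\sqrt{2}},\quad z':=\frac{\gamma_1'+\ii\gamma_2'}{\ell_B\sqrt{2}}.
\]
The residual sum is immediately recognized as $\expo{\overline{z}z'}$. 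The interchange of the infinite sum with the inner product is legitimate because~\eqref{eq;chi_002} converges in $L^2$ and the basis is orthonormal, so only the Parseval bound on the coefficients is needed.

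The second step is purely algebraic. I would compute, directly from the definitions of $z$ and $z'$,
\[
\overline{z}z' = \frac{(\gamma_1-\ii\gamma_2)(\gamma_1'+\ii\gamma_2')}{2\ell_B^2} = \frac{\gamma\cdot\gamma' + \ii\,(\gamma\wedge\gamma')}{2\ell_B^2},
\]
separating real and imaginary parts. Combining the real part with the Gaussian prefactor by means of the polarization identity $|\gamma|^2+|\gamma'|^2 = |\gamma-\gamma'|^2 + 2\gamma\cdot\gamma'$ makes the real exponent collapse to $-\tfrac{|\gamma-\gamma'|^2}{4\ell_B^2}$, while the imaginary part supplies the phase $\expo{\ii\frac{\gamma\wedge\gamma'}{2\ell_B^2}}$. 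This is exactly the asserted identity.

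An equivalent route bypasses Lemma~\ref{lem020} and works from the explicit Gaussian form of $\chi_\gamma$ stated just before the proposition. There one writes
\[
\langle\chi_\gamma,\chi_{\gamma'}\rangle = \frac{1}{2\pi\ell_B^2}\int_{\R^2}\expo{\ii\frac{(\gamma-\gamma')\wedge x}{2\ell_B^2}}\expo{-\frac{|x-\gamma|^2+|x-\gamma'|^2}{4\ell_B^2}}\dd x,
\]
completes the square through $|x-\gamma|^2+|x-\gamma'|^2 = 2\big|x-\tfrac{\gamma+\gamma'}{2}\big|^2 + \tfrac12|\gamma-\gamma'|^2$, and shifts to $y=x-\tfrac{\gamma+\gamma'}{2}$. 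The phase splits off using $(\gamma-\gamma')\wedge\tfrac{\gamma+\gamma'}{2}=\gamma\wedge\gamma'$, and the remaining Gaussian--Fourier integral $\int_{\R^2}\expo{\ii b\cdot y}\expo{-c|y|^2}\dd y = \tfrac{\pi}{c}\expo{-|b|^2/(4c)}$ (with the rotation identity $\delta\wedge y = (J\delta)\cdot y$, $|J\delta|=|\delta|$) produces the factor $\expo{-|\gamma-\gamma'|^2/(4\ell_B^2)}$ once the constants are tallied.

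Neither route carries a genuine analytic difficulty: the only thing to watch is the sign-and-conjugation bookkeeping that cleanly separates the real Gaussian decay from the imaginary phase $\gamma\wedge\gamma'$. For this reason I would favour the series route, since orthonormality of the Laguerre basis makes the phase emerge automatically from $\overline{z}z'$ and reduces the endgame to the one-line polarization identity.
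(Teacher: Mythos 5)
Your proposal is correct, and both of your routes lead cleanly to the stated identity; I checked the bookkeeping (the diagonal sum gives $\expo{\overline{z}z'}$ with $\overline{z}z'=\frac{\gamma\cdot\gamma'+\ii(\gamma\wedge\gamma')}{2\ell_B^2}$, and the polarization identity then produces exactly the Gaussian and the phase). However, the paper argues differently: it does not expand both vectors in the Laguerre basis, but instead exploits the covariance $\chi_\gamma=T_{\ell_B^{-1}\gamma}\psi_{(0,0)}$ together with the projective multiplication law of the magnetic translations, writing $\langle\chi_\gamma,\chi_{\gamma'}\rangle=\langle T^*_{\ell_B^{-1}\gamma'}T_{\ell_B^{-1}\gamma}\psi_{(0,0)},\psi_{(0,0)}\rangle$ and using $T^*_{\ell_B^{-1}\gamma'}T_{\ell_B^{-1}\gamma}=\expo{-\ii\frac{\gamma\wedge\gamma'}{2\ell_B^2}}T_{\ell_B^{-1}(\gamma-\gamma')}$ to reduce everything to the single overlap $\langle\chi_{\gamma-\gamma'},\psi_{(0,0)}\rangle=\expo{-\frac{|\gamma-\gamma'|^2}{4\ell_B^2}}$, which is read off from \eqref{eq:sca_prod}. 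The paper's route makes the phase transparent as the cocycle of the projective representation and isolates translation invariance as the structural reason the answer depends only on $\gamma-\gamma'$ up to a phase --- a fact that transfers verbatim to the higher Landau levels since $(r!)^{-1/2}(\rr{a}^+)^r$ commutes with the magnetic translations. Your series computation is more elementary and self-contained (it needs only Lemma~\ref{lem020} and orthonormality of $\{\psi_{(0,m)}\}$), at the cost of redoing the algebra that the group structure packages for free; your Gaussian-integral variant is likewise valid but the most computational of the three.
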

\begin{proof}
From the formula \eqref{eq:sca_prod} one infers that
\[
\langle\chi_\gamma,\psi_{(0,0)} \rangle =  \expo{-\frac{|\gamma|^2}{4\ell_B^2}}\;
\]
for every $\gamma\in\R^2$.
By construction one has that
\[
\langle\chi_\gamma,\chi_{\gamma'} \rangle\;=\:
\langle T_{\ell_B^{-1}\gamma}\psi_{(0,0)},T_{\ell_B^{-1}\gamma'}\psi_{(0,0)} \rangle\;=\:\langle T_{\ell_B^{-1}\gamma'}^*T_{\ell_B^{-1}\gamma}\psi_{(0,0)},\psi_{(0,0)} \rangle,
\]
and since the magnetic translations are a projective representation, 
\[
T_{\ell_B^{-1}\gamma'}^*T_{\ell_B^{-1}\gamma}\psi_{(0,0)} = \expo{-\ii\frac{\gamma\wedge\gamma'}{2\ell_B^2}}T_{\ell_B^{-1}(\gamma-\gamma')}\psi_{(0,0)} = \expo{-\ii\frac{\gamma\wedge\gamma'}{2\ell_B^2}}\chi_{\gamma-\gamma'}
\]
which concludes the proof.
\end{proof}


The family $\{\chi_\gamma\}_{\gamma\in\R^2}$ contains enough elements to span the full $\s{H}_0$. Said differently it allows for a resolution of the Landau projection $\Pi_0$ as
\begin{equation}\label{id_proy}
\Pi_0 = \frac{1}{2\pi\ell_B^2}\int_{\R^2}\dd^2\gamma\;\ketbra{\chi_\gamma}{\chi_\gamma}
\end{equation}
where $\ketbra{\chi_\gamma}{\chi_\gamma}$ denotes the projection on the state $\chi_\gamma$. The formula \eqref{id_proy} can be proved by using the expansion \eqref{eq;chi_002} to write the integral kernel of $\ketbra{\chi_\gamma}{\chi_\gamma}$ in terms of the Laguerre functions and then performing the integration over $\R^2$ to obtain the integral kernel of $\Pi_0$. Relation \eqref{id_proy} expresses the completeness of the family $\{\chi_\gamma\}_{\gamma\in\R^2}$: for every $\phi\in\s{H}_0$ such that $\Pi_0\phi=\phi$, 
\begin{equation}\label{eq:cont_expo}
\phi = \frac{1}{2\pi\ell_B^2}\int_{\R^2}\dd^2\gamma\;\langle\chi_\gamma,\phi\rangle \chi_\gamma.
\end{equation}
%


Since the vectors $\chi_\gamma$ are not linearly independent of each other, one may wonder whether it is possible to reduce \eqref{eq:cont_expo} to a discrete sum. For that let us choose positive constants $\alpha>0$ and $\beta>0$ and consider the lattice
\[
\Gamma_{(\alpha,\beta)} := \alpha\Z\times\beta\Z\;.
\]
The following theorem, which summarizes results in~\cite{bargmann-butera-girardello-klauder-71,perelomov-71,bacry-grossmann-zak-75} shows that the family $\{\chi_\gamma\}_{\gamma\in\Gamma_{(\alpha,\beta)}}$ is a complete set for $\s{H}_0$ provided the area $\alpha\beta$ is small enough. Let
$$\Delta_{\alpha,\beta}:=\frac{\alpha\beta}{\ell_B^2}.$$
\begin{thm}\label{th:LLL01}
(i) The family of vectors $\{\chi_\gamma\}_{\gamma\in\Gamma_{(\alpha,\beta)}}$ is a complete  set for $\s{H}_0$ if and only if $\Delta_{\alpha,\beta}\leqslant 2\pi$. In this case the family $\{\chi_\gamma\}_{\gamma\in\Gamma_{(\alpha,\beta)}}$ is overcomplete in the sense that any vector of the family is contained in the closed linear span of the other 
vectors. \\
\noindent (ii) When
$\Delta_{\alpha,\beta}< 2\pi$,
if one removes from $\{\chi_\gamma\}_{\gamma\in\Gamma_{(\alpha,\beta)}}$ any finite number of vectors then the remaining family is still overcomplete. In the threshold case $\Delta_{\alpha,\beta}= 2\pi$ the family is still complete by removing a single vector but becomes incomplete if any two vectors are removed.
\end{thm}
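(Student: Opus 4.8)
The plan is to transfer the problem to the Bargmann--Fock space of entire functions, where completeness becomes a statement about the zero sets of entire functions of order two. First I would use the explicit form~\eqref{eq;psi_001} together with the description~\eqref{eq:descr_H_0} to identify $\s{H}_0$ isometrically with the weighted space
\[
\s{F} := \Big\{f \text{ entire in } w=x_1-\ii x_2 \;\Big|\; \int_{\R^2}|f(w)|^2\expo{-\frac{|w|^2}{2\ell_B^2}}\dd^2 x<\infty\Big\},
\]
the map being $\phi = f\,\expo{-|x|^2/(4\ell_B^2)}\mapsto f$. Under this identification the reproducing kernel property~\eqref{Reproducing kernel} shows that $\langle\chi_\gamma,\phi\rangle$ is a nonzero multiple of $\phi(\gamma)$, so that $\phi\perp\chi_\gamma$ if and only if the associated entire function $f$ vanishes at the lattice point $\gamma_1-\ii\gamma_2$. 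Hence completeness of $\{\chi_\gamma\}_{\gamma\in\Gamma_{(\alpha,\beta)}}$ is equivalent to the statement that the only $f\in\s{F}$ vanishing on the lattice $\Lambda:=\alpha\Z-\ii\beta\Z\subset\C$ (of covolume $\alpha\beta$) is $f=0$, and likewise after deleting finitely many points.

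The key analytic input is the construction of a theta (Weierstrass sigma) function $\vartheta_\Lambda$: an entire function with simple zeros exactly at $\Lambda$, normalized by multiplication with a Gaussian $\expo{cw^2}$ so that $|\vartheta_\Lambda(w)|^2\expo{-\frac{\pi}{\alpha\beta}|w|^2}$ is $\Lambda$-periodic. The constant $\pi/(\alpha\beta)$ is forced by the quasi-periodicity of the sigma function together with the Legendre relation between its periods and quasi-periods, and this is exactly the computation that produces the threshold: comparing the periodic weight $\expo{-\pi|w|^2/(\alpha\beta)}$ with the Bargmann weight $\expo{-|w|^2/(2\ell_B^2)}$ shows that the critical covolume is $\alpha\beta=2\pi\ell_B^2$, i.e.\ $\Delta_{\alpha,\beta}=2\pi$.

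With $\vartheta_\Lambda$ in hand the generic regimes follow by growth comparison. If $\Delta_{\alpha,\beta}<2\pi$ then $\pi/(\alpha\beta)>1/(2\ell_B^2)$; given $f\in\s{F}$ vanishing on $\Lambda$, the quotient $f/\vartheta_\Lambda$ is entire with $|f/\vartheta_\Lambda|^2\lesssim\expo{c'|w|^2}$ for some $c'<0$, hence it is bounded and tends to zero, forcing $f/\vartheta_\Lambda\equiv0$ and $f=0$: the family is complete. Deleting finitely many points only divides $\vartheta_\Lambda$ by finitely many linear factors, which does not change the leading quadratic rate, so the same argument yields completeness after any finite deletion; this is precisely the overcompleteness asserted in~(i) and~(ii). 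If $\Delta_{\alpha,\beta}>2\pi$ the inequality reverses, $\vartheta_\Lambda$ itself lies in $\s{F}$ and vanishes on $\Lambda$, so the family is incomplete, which closes the ``only if'' of~(i).

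The delicate threshold case $\Delta_{\alpha,\beta}=2\pi$ is the main obstacle and relies on a two-dimensional integrability dichotomy. Here $|\vartheta_\Lambda(w)|^2\expo{-|w|^2/(2\ell_B^2)}$ is periodic and nonzero, hence not integrable over $\R^2$, so $\vartheta_\Lambda\notin\s{F}$ and the full family is complete. Removing one point replaces $\vartheta_\Lambda$ by $\vartheta_\Lambda/(w-w_0)$ for the corresponding $w_0\in\Lambda$, whose square modulus against the weight decays like $|w|^{-2}$; since $\int_{|w|>1}|w|^{-2}\dd^2 x$ diverges logarithmically this is still not in $\s{F}$, and a divisibility argument (any competitor $f$ makes $f(w-w_0)/\vartheta_\Lambda$ entire and polynomially bounded, hence a polynomial of degree $\leq1$, which the weight then excludes) shows no other candidate exists, so the punctured family stays complete. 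Removing a second point produces decay $|w|^{-4}$, which \emph{is} integrable in $\R^2$, so $\vartheta_\Lambda/((w-w_0)(w-w_1))\in\s{F}$ vanishes on $\Lambda\setminus\{w_0,w_1\}$ and the family becomes incomplete. Establishing the sharp growth constant of $\vartheta_\Lambda$ and making the divisibility step rigorous are the points requiring the most care.
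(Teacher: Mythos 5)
The paper does not prove Theorem~\ref{th:LLL01}; it is quoted from the classical literature \cite{bargmann-butera-girardello-klauder-71,perelomov-71,bacry-grossmann-zak-75}, and your reduction to the Bargmann space of entire functions, the modified Weierstrass sigma function with Gaussian normalization, and the integrability dichotomy at the threshold is exactly the argument of those references. Your outline is correct, including the comparison of the sharp growth constant $\pi/(\alpha\beta)$ with the Bargmann weight $1/(2\ell_B^2)$ that produces the threshold $\Delta_{\alpha,\beta}=2\pi$, and the $|w|^{-2}$ (log-divergent) versus $|w|^{-4}$ (integrable) dichotomy governing the removal of one versus two vectors in the critical case.
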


\noindent The threshold case  is known as a von Neumann lattice and it provides optimal (or minimal) complete families for $\s{H}_0$. 
Any such lattice is (up to a rescaling) of the form $\Gamma_\vartheta$ as given in \eqref{avrth_lat} and corresponds to the case of commuting magnetic translations.


Let us introduce the Jacobi's theta-function
\begin{equation*}
\theta_3(z|\tau):=\sum_{n\in\Z}\expo{\ii 2\pi z n}\;\expo{\ii \pi \tau n^2},
\end{equation*}
 defined for every $z\in\C$ whenever ${\rm Im}(\tau)>0$, and let
\begin{equation}\label{eq:B_bou}
M_{\alpha,\beta} := \theta_3\left(0\left|\ii\frac{\alpha^2}{4\pi\ell_B^2}\right)\right.\; \theta_3\left(0\left|\ii\frac{\beta^2}{4\pi\ell_B^2}\right)\right.\;.
\end{equation}
The next results shows that the families $\{\chi_\gamma\}_{\gamma\in \Gamma_{(\alpha,\beta)}}$ are  \emph{Bessel sequences} \cite[Definition 3.2.2]{christensen-03}
with Bessel bound $M_{\alpha,\beta}$.
This fact is proved in \cite[Theorem (i)]{daubechies-grossmann-88} but we sketch here the argument of the proof.
\begin{prop}\label{prop:bess_fr}
For every $\alpha>0$ and $\beta>0$   one has that
\[
\sum_{\gamma\in\Gamma_{(\alpha,\beta)}}|\langle\chi_\gamma,\phi \rangle|^2\;\leqslant\;M_{\alpha,\beta}\|\phi\|^2\;,\qquad \forall\; \phi\in\;\s{H}_0\;
\] 
where  $M_{\alpha,\beta}$ is given by \eqref{eq:B_bou}.
\end{prop}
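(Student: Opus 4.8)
The plan is to control the analysis operator $\phi\mapsto(\langle\chi_\gamma,\phi\rangle)_{\gamma\in\Gamma_{(\alpha,\beta)}}$ through its associated Gram matrix, which is in turn estimated by a Schur test that succeeds precisely because of the Gaussian off-diagonal decay recorded in Proposition~\ref{prop:01}. To avoid any a priori summability assumption on the Gram matrix, I would first establish the bound uniformly over finite truncations. Fix a finite subset $F\subset\Gamma_{(\alpha,\beta)}$ and let $T_F\colon\s{H}_0\to\C^F$ be the finite-rank operator $T_F\phi=(\langle\chi_\gamma,\phi\rangle)_{\gamma\in F}$. Then
\[
\sum_{\gamma\in F}|\langle\chi_\gamma,\phi\rangle|^2=\norm{T_F\phi}^2\leqslant\norm{T_F}^2\norm{\phi}^2=\norm{T_F T_F^{*}}\,\norm{\phi}^2,
\]
where the last equality is the $C^*$-identity $\norm{T_F}^2=\norm{T_F T_F^{*}}$ (legitimate since $T_F$ is bounded). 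The operator $T_F T_F^{*}$ on $\C^F$ is exactly the Hermitian Gram matrix with entries $(T_F T_F^{*})_{\gamma,\gamma'}=\langle\chi_\gamma,\chi_{\gamma'}\rangle$.

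Next I would estimate the norm of this finite matrix by Schur's test with the constant weight $p\equiv 1$, giving
\[
\norm{T_F T_F^{*}}\leqslant\max_{\gamma\in F}\sum_{\gamma'\in F}|\langle\chi_\gamma,\chi_{\gamma'}\rangle|\leqslant\sup_{\gamma\in\Gamma_{(\alpha,\beta)}}\sum_{\gamma'\in\Gamma_{(\alpha,\beta)}}\expo{-\tfrac{|\gamma-\gamma'|^2}{4\ell_B^2}},
\]
where the off-diagonal modulus is the overlap computed in Proposition~\ref{prop:01} and the diagonal equals $1$ since $\norm{\chi_\gamma}=1$. Because $\Gamma_{(\alpha,\beta)}=\alpha\Z\times\beta\Z$ is a lattice, the substitution $\gamma'\mapsto\gamma'-\gamma$ shows that the inner sum is independent of $\gamma$, so the supremum equals $\sum_{\gamma\in\Gamma_{(\alpha,\beta)}}\expo{-|\gamma|^2/(4\ell_B^2)}$. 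This factorizes over the two lattice directions as
\[
\Big(\sum_{n\in\Z}\expo{-\tfrac{\alpha^2}{4\ell_B^2}n^2}\Big)\Big(\sum_{n\in\Z}\expo{-\tfrac{\beta^2}{4\ell_B^2}n^2}\Big),
\]
and each factor is recognized as a Jacobi theta value, $\theta_3\big(0\,\big|\,\ii\tfrac{\alpha^2}{4\pi\ell_B^2}\big)$ and $\theta_3\big(0\,\big|\,\ii\tfrac{\beta^2}{4\pi\ell_B^2}\big)$ respectively, using that $\ii\pi\tau n^2=-\tfrac{\alpha^2}{4\ell_B^2}n^2$ at $\tau=\ii\tfrac{\alpha^2}{4\pi\ell_B^2}$. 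Their product is exactly $M_{\alpha,\beta}$ of~\eqref{eq:B_bou}.

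Combining the two displays yields $\sum_{\gamma\in F}|\langle\chi_\gamma,\phi\rangle|^2\leqslant M_{\alpha,\beta}\norm{\phi}^2$ for every finite $F$; taking the supremum over $F\nearrow\Gamma_{(\alpha,\beta)}$ (the partial sums being monotone nondecreasing) gives the claim. I do not expect a genuine obstacle here: the entire argument reduces to Schur-test bookkeeping, and the only points deserving a little care are the reduction to finite subsets (so that the $C^*$-identity applies and no summability of the full Gram matrix is presumed) and the correct matching of the Gaussian lattice sum with the normalization of $\theta_3$. The Gaussian decay from Proposition~\ref{prop:01} is what makes the constant Schur weight succeed, and the translation invariance of the lattice is what collapses the supremum into the closed-form product $M_{\alpha,\beta}$.
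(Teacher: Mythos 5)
Your proof is correct and follows essentially the same route as the paper: both arguments reduce the Bessel bound to a Schur test on the Gram matrix $\langle\chi_\gamma,\chi_{\gamma'}\rangle$, whose row sums are computed via Proposition~\ref{prop:01} and translation invariance of the lattice to equal the theta-function product $M_{\alpha,\beta}$. The only cosmetic difference is that you work with finite truncations and the $C^*$-identity before passing to the monotone limit, whereas the paper phrases the same estimate through the synthesis operator $J$ on $\ell^2(\Gamma_{(\alpha,\beta)})$ and cites the corresponding lemma from Christensen's book.
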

\begin{proof}
A direct computation yields that
\[
\sum_{\gamma'\in\Gamma_{(\alpha,\beta)}}|\langle\chi_{\gamma},\chi_{\gamma'} \rangle| =  M_{\alpha,\beta}.
\]
If we define the operator $\bb{Z}$ on $\ell^2(\Gamma_{(\alpha,\beta)})$ (the Gram matrix) by
\begin{equation}
(\bb{Z}c)_\gamma := \sum_{\gamma'\in\Gamma_{(\alpha,\beta)}}\langle\chi_{\gamma},\chi_{\gamma'} \rangle\;c_{\gamma'}, 
\end{equation}
the identity above implies that $\bb{Z} $ is a bounded operator of norm $\|\bb{Z}\|_{\ell^2(\Gamma_{(\alpha,\beta)})}\leqslant M_{\alpha,\beta}$, see~\cite[Lemma 3.5.3]{christensen-03}. It follows that the synthesis operator $J:\ell^2(\Gamma_{(\alpha,\beta)})\to\s{H}_0$ defined by
\begin{equation}\label{eq;synt}
Jc := \sum_{\gamma\in\Gamma_{(\alpha,\beta)}}c_\gamma\;\chi_\gamma\;.
\end{equation}
is bounded since
\begin{equation*}
\Vert Jc\Vert_{\s{H}_0}^2 = \langle c, \bb{Z} c\rangle \leqslant \Vert c\Vert_{\ell^2(\Gamma_{(\alpha,\beta)})} \Vert \bb{Z} c\Vert_{\s{H}_0}\leqslant M_{\alpha,\beta}\Vert c\Vert_{\ell^2(\Gamma_{(\alpha,\beta)})}. 
\end{equation*}
Noting that $(J^*\phi)_\gamma = \langle\chi_\gamma,\phi\rangle$ we have that 
\begin{equation*}
\sum_{\gamma\in\Gamma_{(\alpha,\beta)}}|\langle\chi_\gamma,\phi \rangle|^2 = \Vert J^* \phi\Vert_{\ell^2(\Gamma_{(\alpha,\beta)})}^2 \leqslant \Vert J\Vert \Vert \phi\Vert_{\ell^2(\Gamma_{(\alpha,\beta)})},
\end{equation*}
which concludes the proof.
\end{proof}

\begin{rem}[Injectivity and symmetry of the frame operator]\label{Rk_inj_S}
(i) By the above, the operator $S$ of~\eqref{eq:fr_op} is well-defined and bounded. It is injective if and only if the associated family is complete for  $\s{H}_0$: By its very definition, it follows that
\[
\langle\phi,S\phi\rangle = \sum_{\gamma\in\Gamma_{(\alpha,\beta)}}|\langle\chi_\gamma,\phi \rangle|^2,
\]
and so $S\phi=0$ if and only if $\langle\chi_\gamma,\phi \rangle=0$ for every $\gamma\in\Gamma_{(\alpha,\beta)}$. As a result, Theorem \ref{th:LLL01} implies that that $S$ is not injective if  $\Delta_{\alpha,\beta}> 2\pi$. \\
\noindent (ii) The operator $S$ commutes with the magnetic translations: $[T_{\ell_B^{-1}\gamma},S]=0$ for every $\gamma\in \Gamma_{(\alpha,\beta)}$. Indeed, a direct computation shows that
\begin{align*}
(T_{\ell_B^{-1}\gamma} S T_{\ell_B^{-1}\gamma}^{-1})\phi\;&=\;T_{\ell_B^{-1}\gamma}\left(\sum_{\gamma'\in\Gamma_{(\alpha,\beta)}}\langle\chi_{\gamma'},T_{\ell_B^{-1}\gamma}^{-1}\phi \rangle\;\chi_{\gamma'}\right)\\
&=\;\sum_{\gamma'\in\Gamma_{(\alpha,\beta)}}\langle T_{\ell_B^{-1}\gamma} \chi_{\gamma'},\phi \rangle\;T_{\ell_B^{-1}\gamma} \chi_{\gamma'}=\;\sum_{\gamma'\in\Gamma_{(\alpha,\beta)}}\langle\chi_{\gamma'+\gamma},\phi \rangle\;\chi_{\gamma'+\gamma} = S\phi\;
\end{align*}
for every $\phi\in\s{H}_0$.
\hfill $\blacktriangleleft$
\end{rem}
%


Due to the last remark from now on we will only deal with the case $\Delta_{\alpha,\beta}\leqslant 2\pi$. The fact that if  $\Delta_{\alpha,\beta}<2\pi$ then the family  $\{\chi_\gamma\}_{\gamma\in\Gamma_{(\alpha,\beta)}}$ is a frame as proved in~\cite{daubechies-90,janssen-94}. 
\begin{thm}\label{teo:frame<}
Let $\Delta_{\alpha,\beta}<2\pi$. Then $\{\chi_\gamma\}_{\gamma\in\Gamma_{(\alpha,\beta)}}$ is a {frame} for the $\s{H}_0$.
\end{thm}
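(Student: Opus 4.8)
The frame property~\eqref{eq:frame} splits into an upper (Bessel) and a lower bound, and since the upper bound is already available the whole difficulty is concentrated in the lower one. Indeed, Proposition~\ref{prop:bess_fr} gives at once
\[
\sum_{\gamma\in\Gamma_{(\alpha,\beta)}}|\langle\chi_\gamma,\phi\rangle|^2\;\leqslant\;M_{\alpha,\beta}\|\phi\|^2\;,\qquad\forall\,\phi\in\s{H}_0\;,
\]
so one may take $B=M_{\alpha,\beta}$ from~\eqref{eq:B_bou}. It remains to produce a constant $A>0$ with $A\|\phi\|^2\leqslant\sum_{\gamma}|\langle\chi_\gamma,\phi\rangle|^2$ for all $\phi\in\s{H}_0$.

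I would first recast this lower bound spectrally. By Remark~\ref{Rk_inj_S} the frame operator $S$ is bounded, self-adjoint and positive, and $\langle\phi,S\phi\rangle=\sum_{\gamma}|\langle\chi_\gamma,\phi\rangle|^2$; thus a lower frame bound is exactly the assertion $S\geqslant A\,{\bf 1}$, i.e. $0\notin\sigma(S)$. Since $\Delta_{\alpha,\beta}<2\pi$, Theorem~\ref{th:LLL01}(i) guarantees completeness and hence, by Remark~\ref{Rk_inj_S}(i), that $S$ is injective; the substance of the theorem is therefore to upgrade injectivity ($0$ not an eigenvalue) to invertibility ($0$ not in the spectrum at all). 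I would stress that the soft inputs do not suffice here: a complete Bessel sequence can perfectly well fail to have a positive lower bound, and this is precisely what happens at the threshold $\Delta_{\alpha,\beta}=2\pi$.

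For the quantitative control of $\sigma(S)$ I would exploit magnetic covariance. A first instinct---discretizing the tight continuous resolution~\eqref{id_proy}---produces both bounds only when $\Delta_{\alpha,\beta}$ is small enough to control the Riemann-sum error, and does not reach the sharp threshold; a genuinely harmonic-analytic argument is needed. The route I would take is the magnetic Bloch--Zak transform: when $\Delta_{\alpha,\beta}$ is rational the commuting magnetic translations associated with the critical sublattice are simultaneously diagonalized, turning $S$ into multiplication by a matrix-valued symbol $\widehat{S}(k)$ over a torus whose entries are, by Proposition~\ref{prop:01}, explicit finite combinations of Jacobi theta functions; the frame bounds become $A=\inf_k\lambda_{\min}(\widehat{S}(k))$ and $B=\sup_k\lambda_{\max}(\widehat{S}(k))$, so the lower bound reduces to the strict positive-definiteness of $\widehat{S}(k)$ for every $k$. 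The irrational case would follow by approximating $(\alpha,\beta)$ by rational data and passing to the limit, using that the symbol depends continuously on the lattice. An equivalent and perhaps more transparent route transports the problem through the Bargmann transform: $\s{H}_0$ becomes the Bargmann--Fock space, each $\chi_\gamma$ a normalized reproducing kernel, and the frame property becomes the statement that the image of $\Gamma_{(\alpha,\beta)}$ is a \emph{sampling sequence}, which by the Seip--Wallst\'en/Lyubarskii density theorem holds exactly when the lattice density exceeds the critical one, i.e. $\Delta_{\alpha,\beta}<2\pi$.

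The main obstacle is exactly this strict positivity at the threshold. At $\Delta_{\alpha,\beta}=2\pi$ the single zero of the Gaussian's theta-type symbol forces a vanishing eigenvalue of $\widehat{S}(k)$---this is why the von Neumann lattice is not a frame---and the hard point is to show that for \emph{every} $\Delta_{\alpha,\beta}<2\pi$ the zeros of the relevant theta functions never align to render $\widehat{S}(k)$ singular. Establishing this non-vanishing (equivalently, the sampling/density criterion in the Bargmann--Fock picture) is the genuinely complex-analytic core of the argument, and is the step I expect to require the full strength of the techniques of~\cite{daubechies-90,janssen-94}.
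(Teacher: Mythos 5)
Your reduction is the right one and matches the paper's: the upper bound is exactly Proposition~\ref{prop:bess_fr}, and everything hinges on producing a strictly positive lower bound, i.e.\ on upgrading the injectivity of $S$ (Remark~\ref{Rk_inj_S}) to $S\geqslant A\,{\bf 1}$. Where you diverge is in how that lower bound is obtained. The paper's proof is a short duality argument: for $\varepsilon:=2\pi-\Delta_{\alpha,\beta}>0$ it invokes \cite[Proposition B]{janssen-94}, which supplies a dual family $\{\eta^\varepsilon_\gamma\}_{\gamma\in\Gamma_{(\alpha,\beta)}}\subset\s{H}_0$ satisfying a Bessel bound $\sum_\gamma|\langle\eta^\varepsilon_\gamma,\phi\rangle|^2\leqslant A_\varepsilon\|\phi\|^2$ together with the reproducing identity $\sum_\gamma\langle\varphi,\eta^\varepsilon_\gamma\rangle\langle\chi_\gamma,\phi\rangle=\langle\varphi,\phi\rangle$; setting $\varphi=\phi$ and applying Cauchy--Schwarz yields $\|\phi\|^4\leqslant A_\varepsilon\|\phi\|^2\sum_\gamma|\langle\chi_\gamma,\phi\rangle|^2$, hence the lower frame bound $A=A_\varepsilon^{-1}$. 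This keeps all the hard analysis inside a single quotable statement and avoids any spectral analysis of $S$. Your Bargmann--Fock route (the $\chi_\gamma$ as normalized reproducing kernels, the frame property as a lattice sampling condition, the Lyubarskii/Seip--Wallst\'en density theorem) is a valid alternative and in fact gives the sharp characterization in both directions; it is simply a heavier import than the paper needs.

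The one genuine soft spot in your sketch is the Zak-transform route for irrational $\Delta_{\alpha,\beta}$. For irrational lattice parameters there is no finite-rank matrix symbol $\widehat S(k)$ at all, so ``approximating by rational data and passing to the limit'' is not a continuity statement about a fixed object: one would need a lower bound on $\inf_k\lambda_{\min}(\widehat S(k))$ that is \emph{uniform} along a sequence of rational approximants, and lower frame bounds are in general not continuous in the lattice parameters. For the Gaussian window this can be rescued (its frame set is open and the bounds vary continuously), but that is itself a nontrivial theorem rather than a limiting formality. If you want an argument at the level of rigor of the paper, commit either to the sampling/density route or to the duality argument of \cite{janssen-94}; do not rest the irrational case on the rational one.
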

\begin{proof}
It suffices to exhibit a constant $m_{\alpha,\beta}>0$ such that 
\begin{equation}\label{eq:frem_0XX1}
{m_{\alpha,\beta}}\;\|\phi\|^2\;\leqslant\; \sum_{\gamma\in\Gamma_{(\alpha,\beta)}}|\langle\chi_\gamma,\phi \rangle|^2\;\leqslant\;{M_{\alpha,\beta}}\;\|\phi\|^2\;
\end{equation}
for every $\phi\in\s{H}_0$; The upper bound is guaranteed by 
Proposition \ref{prop:bess_fr}. Let $\varepsilon:=2\pi-\Delta_{\alpha,\beta}>0$. There exists a family $\{\eta^\varepsilon_\gamma\}_{\gamma\in\Gamma_{(\alpha,\beta)}}\subset \s{H}_0$ (a so-called dual frame) such that (i) there exists a $A_{\varepsilon}>0$ such that
\[
\sum_{\gamma\in\Gamma_{(\alpha,\beta)}}|\langle\eta^\varepsilon_\gamma,\phi \rangle|^2\;\leqslant\;{A_{\varepsilon}}\;\|\phi\|^2
\]
for all $\phi\in\s{H}_0$, and (ii)
\[
\sum_{\gamma\in\Gamma_{(\alpha,\beta)}}\langle\varphi,\eta^\varepsilon_\gamma  \rangle\langle\chi_\gamma,\phi \rangle = \langle\varphi,\phi \rangle
\]
for every $\varphi,\phi\in\s{H}_0$, see~\cite[Proposition B]{janssen-94}. Picking $\varphi=\phi$ in (ii), one gets
\[
\|\phi\|^4\;\leqslant\;\left(\sum_{\gamma\in\Gamma_{(\alpha,\beta)}}|\langle\eta^\varepsilon_\gamma,\phi \rangle|^2\right)\left(\sum_{\gamma\in\Gamma_{(\alpha,\beta)}}|\langle\chi_\gamma,\phi \rangle|^2\right)\;.
\]
by Cauchy-Schwarz. With this, (i) implies the desired inequality with $m_{\alpha,\beta}=A_{\varepsilon}^{-1}$.
\end{proof}


Finally let us briefly comment on the threshold case $\Delta_{\alpha,\beta}=2\pi$. Its singular nature can be observed in the following result:
 
 \begin{prop}\label{prop:specS_tresh}
For  $\Delta_{\alpha,\beta}=2\pi$ the 
frame operator $S$ associated family is a $\{\chi_\gamma\}_{\gamma\in\Gamma_{(\alpha,\beta)}}$   is injective and
\begin{equation}\label{eq_inf_spec}
\min \sigma(S) = 0\;.
\end{equation}
 Therefore $S^{-1}$ exists as a densely defined unbounded operator on $\s{H}_0$.
 \end{prop}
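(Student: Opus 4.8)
The plan is to prove the two assertions separately: injectivity is essentially immediate, while the vanishing of $\min\sigma(S)$ is the substantive point and encodes the failure of the frame property at critical density.

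First I would dispose of injectivity. By Theorem~\ref{th:LLL01}(i) the equality $\Delta_{\alpha,\beta}=2\pi$ still lies in the completeness regime, so $\{\chi_\gamma\}_{\gamma\in\Gamma_{(\alpha,\beta)}}$ is complete in $\s{H}_0$. Combining this with the quadratic-form identity $\langle\phi,S\phi\rangle=\sum_{\gamma\in\Gamma_{(\alpha,\beta)}}|\langle\chi_\gamma,\phi\rangle|^2$ recorded in Remark~\ref{Rk_inj_S}(i), one sees that $S\phi=0$ forces $\langle\chi_\gamma,\phi\rangle=0$ for all $\gamma$, hence $\phi=0$ by completeness. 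Thus $S$ is injective; since $S=S^\ast\geqslant 0$ is bounded by Proposition~\ref{prop:bess_fr}, injectivity already yields $\overline{\Ran{S}}=(\Ker{S})^\perp=\s{H}_0$, so $S^{-1}$ is automatically densely defined.

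The heart of the matter is~\eqref{eq_inf_spec}. Because $S$ is bounded, positive and self-adjoint, $\min\sigma(S)=\inf_{\|\phi\|=1}\langle\phi,S\phi\rangle$ coincides with the optimal lower frame bound, so proving~\eqref{eq_inf_spec} is exactly the statement that the family fails to be a frame at the critical density. I would exploit the symmetry from Remark~\ref{Rk_inj_S}(ii): $S$ commutes with the magnetic translations $\{T_{\ell_B^{-1}\gamma}\}_{\gamma\in\Gamma_{(\alpha,\beta)}}$, and at the critical density these \emph{mutually} commute, since $\ell_B^{-2}(\gamma\wedge\gamma')\in 2\pi\Z$ for $\gamma,\gamma'\in\Gamma_{(\alpha,\beta)}$ (this is the von Neumann lattice, cf.\ $\Gamma_\vartheta$). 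Diagonalising this abelian family by the associated Bloch--Floquet (Zak) transform $\caZ\colon\s{H}_0\to L^2(\bbT^2)$ turns $S$ into multiplication by a bounded nonnegative function $s(\theta)$ on the dual torus, so that $\min\sigma(S)=\operatorname*{ess\,inf}_{\theta}s(\theta)$. A direct computation identifies $s$, up to a strictly positive factor, with the squared modulus $|\caZ\psi_{(0,0)}|^2$ of the Zak transform of the generating Gaussian, which at the critical density equals a nonvanishing Gaussian prefactor times a Jacobi theta function. The classical fact that this theta function has exactly one simple zero per fundamental cell then gives $s(\theta_0)=0$ at an isolated $\theta_0$, whence $\operatorname*{ess\,inf}_\theta s(\theta)=0$ and~\eqref{eq_inf_spec} follows; conceptually this is the Balian--Low obstruction, which forbids a Gabor frame generated by a Gaussian at critical sampling. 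Since $\{\theta_0\}$ has measure zero, $s(\theta)>0$ almost everywhere, which re-confirms injectivity and exhibits $S^{-1}$, realised as multiplication by $1/s(\theta)$, as an unbounded densely defined operator.

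The main obstacle is the middle step: setting up the Zak transform adapted to the magnetic lattice and correctly identifying the multiplier $s(\theta)$ as a positive multiple of a theta function in order to locate its zero. Once the periodisation is carried out, the position and simplicity of the zero is the classical Bastiaans computation for the Gaussian; the only care needed is to track the $\ell_B$- and $(\alpha,\beta)$-dependent normalisations, which are strictly positive and hence do not affect the conclusion $\min\sigma(S)=0$.
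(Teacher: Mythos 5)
Your argument is correct, but you should know that the paper itself offers no proof of this proposition: it is stated bare and immediately followed by a remark pointing to Daubechies' discussion of the ``unstable case'', with the underlying facts delegated to the cited literature. Within the paper's own logic the shortest argument is exactly your first paragraph for injectivity (completeness at $\Delta_{\alpha,\beta}=2\pi$ from Theorem~\ref{th:LLL01}(i) combined with the quadratic-form identity of Remark~\ref{Rk_inj_S}(i)), together with the observation that $\min\sigma(S)=\inf_{\|\phi\|=1}\langle\phi,S\phi\rangle$ is the optimal lower frame bound, which must vanish because the threshold family is complete but not a frame (asserted in Section~\ref{sec:MagGabor}, case (2), again with references) while Proposition~\ref{prop:bess_fr} supplies the upper bound. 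What your Zak-transform computation adds is a self-contained proof of that non-frame assertion: diagonalising the mutually commuting magnetic translations of the von Neumann lattice turns $S$ on $\s{H}_0$ into multiplication by a positive multiple of the squared modulus of a theta function, whose single zero per fundamental cell forces the essential infimum to vanish while positivity almost everywhere re-derives injectivity and exhibits $S^{-1}$ as an unbounded multiplication operator. This is the classical Bastiaans--Daubechies--Janssen argument and is essentially the content of the references the paper cites; carried out in full it would make the proposition self-contained, at the cost of setting up the magnetic Zak transform and verifying that the lowest-Landau-level fibers are one-dimensional at critical flux, which is precisely what makes $S$ a scalar rather than matrix-valued multiplication operator. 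No step of your outline is wrong; the only genuinely nontrivial point left implicit is that one-dimensionality of the fibers, which you should state and prove explicitly if you write this up.
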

 
\noindent In particular, one can still define a decomposition of the type
 \[
 \phi\;\simeq\;\sum_{\gamma\in\Gamma_{\vartheta}}\langle S^{-1}\chi_\gamma,\phi \rangle\;\chi_\gamma\;.
 \]
 for $\phi\in\s{H}_0$. However, the sign $\simeq$ means that the convergence is not meant in the sense of the $\ell^2$-convergence of the frame coefficients $\langle S^{-1}\chi_\gamma,\phi \rangle$ in view of the fact that $S^{-1}$ is unbounded. This is known as the unstable case, see for example~\cite[Section II - C - Examples i) ]{daubechies-90} and references therein.

\subsection{Magnetic Gabor frames for the higer Landau levels}
The Landau level of order $r\in\N_0$ (rLL) is defined by
\[
\s{H}_r = \overline{{\rm span}\;\{\psi_{(r,m)}\;|\; m\in\N_0\}}\;.
\]
Since from \eqref{eq:herm2} one has that
\begin{equation}\label{eq;psi_r}
\psi_{(r,m)}(x) = \frac{1}{\sqrt{r!}}(\rr{a}^+)^r\psi_{(0,m)}
\end{equation}
one gets from the expansion \eqref{eq;chi_002} that
\begin{equation}\label{eq;chi_002-XX}
\chi_{(r,\gamma)} := \frac{1}{\sqrt{r!}}(\rr{a}^+)^r\chi_\gamma = \expo{-\frac{|\gamma|^2}{4\ell_B^2}}\sum_{m\in\N_0}\frac{1}{\sqrt{m!}}\left(\frac{\gamma_1+\ii \gamma_2}{\ell_B\sqrt{2}}\right)^m\psi_{(r,m)}
\end{equation}
with the notation  $\chi_{\gamma}\equiv\chi_{(0,\gamma)}$.
Therefore $\chi_{(r,\gamma)}\in \s{H}_r$ for every $\gamma\in\R^2$. 
Equation \eqref{eq;psi_r} immediately implies that the map $(r!)^{-1/2}(\rr{a}^+)^r$ is a unitary equivalence from $\s{H}_0$ to $\s{H}_r$. Therefore equation \eqref{eq;chi_002-XX} implies that, for any $r\in\bbN$, the family $\{\chi_{(r,\gamma)}\}_{\gamma\in\Gamma_{(\alpha,\beta)}}$ shares the same frame property with respect to $\s{H}_r$ as the family $\{\chi_{\gamma}\}_{\gamma\in\Gamma_{(\alpha,\beta)}}$ with respect to $\s{H}_0$; In particular, (\ref{eq:gaus_orth}) holds uniformly for all Landau levels.

\subsection{Interacting dynamics}
The results of this section, together with Theorem~\ref{teo:frame<} and Proposition~\ref{prop:01} prove that if $\Delta_{\alpha,\beta}<2\pi$, then the family $\{\chi_{(r,\gamma)}\}_{r\in\mathbb{N}_0,\gamma\in\Gamma_{(\alpha,\beta)}}$ is a bonafide lattice-localized frame in the sense of Section~\ref{sec: Gabor frames}. In fact, we have a concrete realization of the setting of Example~\ref{Rk_exp_gab_2}. Hence the results of Section~\ref{sec: LRB} apply and allow for the definition of good interacting quantum dynamics. The allowed interactions are \emph{local in space} and can \emph{mix of the various Landau levels}, although here again only in a local-in-energy fashion.


\section{Additional proofs}\label{sec:add_proof}

\subsection{The proof of Proposition~\ref{Prop: localization of S^-p}}
In this section we provide a complete proof of Proposition~\ref{Prop: localization of S^-p} which turns out to be somehow technical. The proof is inspired by \cite{jaffard-90,pilipovic-prangoski-zigic-19}: We first obtain estimates on the matrix elements of $S^p$ with $p\in\N$, and use them and the Neuman series to prove the main bound~(\ref{matrix elements of S-p}). For any $\gamma,\gamma'\in\Gamma$, let
\[
\bb{S}_{\gamma,\gamma'}^{(p)} := \langle\chi_\gamma, S^p\chi_{\gamma'} \rangle\;,
\]
and we shall consider the associated infinite matrix
\begin{equation}\label{def scr Sp}
\bb{S}^{(p)} := \left\{\bb{S}_{\gamma,\gamma'}^{(p)}\right\}_{\gamma,\gamma'\in\Gamma}\;.
\end{equation}
\begin{defn}
Let  $\rr{M}_\lambda$ be the  space of the matrices $\bb{A}=\left\{\bb{A}_{\gamma,\gamma'}\right\}_{\gamma,\gamma'\in\Gamma}$, indexed by $\Gamma\times\Gamma$, such that for all $0<\delta<\lambda$ there exists a $c_\delta>0$ such that
\[
|\bb{A}_{\gamma,\gamma'}|\;\leqslant\;c_\delta\;\expo{-(\lambda-\delta) d(\gamma,\gamma')}\;,\qquad \forall\;\gamma,\gamma'\in\Gamma\;.
\]
\end{defn}
\begin{rem}\label{rk_sat_lamb}
If $\lambda'<\lambda$, then $\rr{M}_\lambda\subset \rr{M}_{\lambda'}$.
\hfill $\blacktriangleleft$
\end{rem}
%


Let $\bb{A}=\left\{\bb{A}_{\gamma,\gamma'}\right\}_{\gamma,\gamma'\in\Gamma}$ and $\bb{B}=\left\{\bb{B}_{\gamma,\gamma'}\right\}_{\gamma,\gamma'\in\Gamma}$ two matrices. We will denote by $\bb{A}\cdot\bb{B}$ the matrix with entries given by
 \[
 (\bb{A}\cdot\bb{B})_{\gamma,\gamma'} := \sum_{\xi\in\Gamma}\bb{A}_{\gamma,\xi}\bb{B}_{\xi,\gamma'}\;.
 \]

\begin{lemma}\label{lemma:algeb_M}
The space $\rr{M}_\lambda$ is an algebra over $\C$.
\end{lemma}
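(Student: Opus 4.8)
The plan is to verify the two defining properties of an (associative) algebra: that $\rr{M}_\lambda$ is a $\C$-linear subspace of the space of all $\Gamma\times\Gamma$ matrices, and that it is closed under the product $\bb{A}\cdot\bb{B}$. The linear structure is immediate: if $\bb{A},\bb{B}\in\rr{M}_\lambda$ and $z\in\C$, then for a fixed $0<\delta<\lambda$ the triangle inequality for absolute values gives $|(\bb{A}+z\bb{B})_{\gamma,\gamma'}|\leqslant |\bb{A}_{\gamma,\gamma'}|+|z||\bb{B}_{\gamma,\gamma'}|\leqslant (c_\delta+|z|c_\delta')\expo{-(\lambda-\delta) d(\gamma,\gamma')}$, so the required decay bound holds with the constant $c_\delta+|z|c_\delta'$. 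All the content is therefore in the closure under multiplication.

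For the product, I would fix $\bb{A},\bb{B}\in\rr{M}_\lambda$ and an arbitrary $0<\delta<\lambda$, and apply the membership bounds for $\bb{A}$ and $\bb{B}$ with the smaller parameter $\delta':=\delta/2$, producing constants $c_{\delta'},c_{\delta'}'$ such that $|\bb{A}_{\gamma,\xi}|\leqslant c_{\delta'}\expo{-(\lambda-\delta') d(\gamma,\xi)}$ and $|\bb{B}_{\xi,\gamma'}|\leqslant c_{\delta'}'\expo{-(\lambda-\delta') d(\xi,\gamma')}$. The key algebraic step is to split each decay exponent into a \emph{transport} part and a \emph{summability} part: writing $\eta:=\delta/2$ one has $\lambda-\delta'-\eta=\lambda-\delta$, and the triangle inequality $d(\gamma,\xi)+d(\xi,\gamma')\geqslant d(\gamma,\gamma')$ yields
\[
\expo{-(\lambda-\delta') d(\gamma,\xi)}\expo{-(\lambda-\delta') d(\xi,\gamma')}\;\leqslant\;\expo{-(\lambda-\delta) d(\gamma,\gamma')}\,\expo{-\eta d(\gamma,\xi)}\,\expo{-\eta d(\xi,\gamma')}\;.
\]
Bounding $\expo{-\eta d(\xi,\gamma')}\leqslant 1$ and summing over $\xi$ then gives
\[
|(\bb{A}\cdot\bb{B})_{\gamma,\gamma'}|\;\leqslant\;c_{\delta'}c_{\delta'}'\,\expo{-(\lambda-\delta) d(\gamma,\gamma')}\sum_{\xi\in\Gamma}\expo{-\eta d(\gamma,\xi)}\;\leqslant\;c_{\delta'}c_{\delta'}'\,m_\eta\,\expo{-(\lambda-\delta) d(\gamma,\gamma')}\;,
\]
where $m_\eta<\infty$ by~\eqref{eq:bound_cond}. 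Since $\delta\in(0,\lambda)$ was arbitrary, this shows $\bb{A}\cdot\bb{B}\in\rr{M}_\lambda$, with $c_\delta''=c_{\delta/2}c_{\delta/2}'\,m_{\delta/2}$.

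The main obstacle, and the reason the argument needs more than a naive triangle inequality, is that the product is an infinite convolution-type sum over $\xi$, so one must simultaneously extract the full decay rate $\lambda-\delta$ in $d(\gamma,\gamma')$ and retain a genuinely summable remainder in $\xi$. The splitting above resolves exactly this tension: working with the slightly reduced exponent $\lambda-\delta'$ for each factor leaves a margin $\eta>0$ that is spent on the summability hypothesis~\eqref{eq:bound_cond}, which is the structural assumption on $(\Gamma,d)$ that makes the whole theory run. The same computation also shows that the sums defining $\bb{A}\cdot\bb{B}$ converge absolutely and uniformly, so bilinearity and associativity of $\cdot$ follow from Fubini's theorem for absolutely convergent double series; I would state this briefly to conclude that $(\rr{M}_\lambda,+,\cdot)$ is indeed a $\C$-algebra.
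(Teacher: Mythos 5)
Your proof is correct and follows essentially the same route as the paper: bound each factor with a slightly better decay rate than the target, use the triangle inequality to extract $\expo{-(\lambda-\delta)d(\gamma,\gamma')}$, and spend the leftover exponential margin on the summability hypothesis~\eqref{eq:bound_cond} to control the sum over $\xi$. The only (immaterial) difference is that you split the margin symmetrically between the two factors and discard one piece, whereas the paper places it entirely on the first factor via the asymmetric choice $\delta'<\delta$.
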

\begin{proof}
Let $\bb{A},\bb{B}\in\rr{M}_\lambda$. The fact that $a\bb{A}+b\bb{B}\in\rr{M}_\lambda$ for every $a,b\in\C$ is  straightforward. Let  $0<\delta'<\delta<\lambda$. There is $C>0$ such that
observe that
\begin{align*}
\left|\sum_{\xi\in\Gamma}\bb{A}_{\gamma,\xi}\bb{B}_{\xi,\gamma'}\right|\;&\leqslant\;C\sum_{\xi\in\Gamma}
\expo{-(\lambda-\delta') d(\gamma,\xi)} \expo{-(\lambda-\delta) d(\xi,\gamma')}\\
&=\;C\sum_{\xi\in\Gamma}
\expo{-(\delta-\delta')d(\gamma,\xi)}\expo{- (\lambda-\delta)[ d(\gamma,\xi)+ d(\xi,\gamma')]} \leqslant\; c_{\delta',\delta} \expo{- (\lambda-\delta) d(\gamma,\gamma')}
\end{align*}
where
\begin{equation}\label{eq:const_cp}
c_{\delta',\delta} := C\bigg(\sup_{\gamma\in\Gamma}\sum_{\xi\in\Gamma}\expo{-(\delta-\delta')d(\gamma,\xi)}\bigg)
\end{equation}
is  finite in view of the assumed bound~\eqref{eq:bound_cond}. Since $0<\delta<\lambda$ is arbitrary, this shows that $\bb{A}\cdot\bb{B}$ is a well-defined element of $\rr{M}_\lambda$ indeed. 
\end{proof}

\begin{lemma}\label{prop:estiSp0}
Let $\bb{S}^{(p)}$ be defined by~(\ref{def scr Sp}). Then $\bb{S}^{(p)}\in \rr{M}_{\lambda}$ for every $p\in\N_0$ and  
one has that
\begin{equation}\label{eq:best_ineq_S_P}
|\langle\chi_\gamma, S^p\chi_{\gamma'} \rangle|\;\leqslant\; Gc^p_\epsilon
\expo{- (\lambda-\delta) d(\gamma,\gamma')}\;,\qquad \forall\;\gamma,\gamma'\in\Gamma\;.
\end{equation}
for every $0<\epsilon <\delta<\lambda$ 
where $
c_\epsilon := G m_\epsilon$ 
and $m_\epsilon$ is given by \eqref{eq:bound_cond}.
\end{lemma}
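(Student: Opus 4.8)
The plan is to express the matrix $\bb{S}^{(p)}$ as an iterated matrix product of the Gram matrix $\bb{S}^{(0)}$ and then propagate the localization bound by induction on $p$. The starting point is the algebraic identity obtained by inserting the definition~(\ref{eq:fr_op}) of $S$ and using its self-adjointness: writing $S^p=SS^{p-1}$ and moving one factor $S$ onto the left argument, one finds the recursion
\[
\bb{S}^{(p)}_{\gamma,\gamma'}=\langle\chi_\gamma,S^p\chi_{\gamma'}\rangle=\sum_{\xi\in\Gamma}\langle\chi_\gamma,\chi_\xi\rangle\,\langle\chi_\xi,S^{p-1}\chi_{\gamma'}\rangle=\sum_{\xi\in\Gamma}\bb{S}^{(0)}_{\gamma,\xi}\,\bb{S}^{(p-1)}_{\xi,\gamma'},
\]
so that $\bb{S}^{(p)}=\bb{S}^{(0)}\cdot\bb{S}^{(p-1)}$ in the notation of Lemma~\ref{lemma:algeb_M}. (The interchange of summation and inner product is legitimate because $S$ is bounded and the series~(\ref{eq:fr_op}) converges in $\s{H}$.) For $p=0$ the operator $S^0=\mathbf 1$ gives $\bb{S}^{(0)}_{\gamma,\gamma'}=\langle\chi_\gamma,\chi_{\gamma'}\rangle$, and the localization hypothesis~(\ref{eq:eq:ineq_asum}) yields directly $|\bb{S}^{(0)}_{\gamma,\gamma'}|\leqslant G\expo{-\lambda d(\gamma,\gamma')}\leqslant G\expo{-(\lambda-\delta)d(\gamma,\gamma')}$, which is~(\ref{eq:best_ineq_S_P}) with $c_\epsilon^0=1$.

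For the inductive step I would fix $0<\epsilon<\delta<\lambda$, assume the bound~(\ref{eq:best_ineq_S_P}) for $p-1$, and insert it together with~(\ref{eq:eq:ineq_asum}) into the recursion. The essential trick is a two-scale splitting of the Gram decay,
\[
\expo{-\lambda d(\gamma,\xi)}=\expo{-\epsilon d(\gamma,\xi)}\,\expo{-(\lambda-\epsilon)d(\gamma,\xi)}\leqslant\expo{-\epsilon d(\gamma,\xi)}\,\expo{-(\lambda-\delta)d(\gamma,\xi)},
\]
where $\epsilon\leqslant\delta$ is used to degrade the rate $\lambda-\epsilon$ into $\lambda-\delta$. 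One then factors out $\expo{-(\lambda-\delta)[d(\gamma,\xi)+d(\xi,\gamma')]}\leqslant\expo{-(\lambda-\delta)d(\gamma,\gamma')}$ by the triangle inequality, leaving the residual sum $\sum_{\xi}\expo{-\epsilon d(\gamma,\xi)}\leqslant m_\epsilon$ controlled by~(\ref{eq:bound_cond}). Collecting factors gives
\[
|\bb{S}^{(p)}_{\gamma,\gamma'}|\leqslant G\cdot Gc_\epsilon^{p-1}\cdot m_\epsilon\,\expo{-(\lambda-\delta)d(\gamma,\gamma')},
\]
and the bookkeeping $G^2m_\epsilon=G\,(Gm_\epsilon)=Gc_\epsilon$ turns the prefactor into $Gc_\epsilon^{p}$, closing the induction.

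Finally, since $0<\delta<\lambda$ is arbitrary (any $\epsilon\in(0,\delta)$ furnishing the constant $c_\delta:=Gc_\epsilon^p$), the estimate~(\ref{eq:best_ineq_S_P}) is precisely the statement that $\bb{S}^{(p)}\in\rr{M}_\lambda$; alternatively this membership follows at once from $\bb{S}^{(0)}\in\rr{M}_\lambda$ together with the fact that $\rr{M}_\lambda$ is an algebra (Lemma~\ref{lemma:algeb_M}), but the explicit geometric constant $c_\epsilon^p$ requires tracking the bound through the recursion as above. I expect the only delicate point to be precisely this constant bookkeeping: one must keep the single summable scale $\epsilon$ separate from the transmitted decay rate $\lambda-\delta$ at every step, so that the accumulating factor is exactly $c_\epsilon=Gm_\epsilon$ per application of $S$ while the decay rate is never eroded below $\lambda-\delta$.
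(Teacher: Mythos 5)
Your proposal is correct and follows essentially the same route as the paper: the recursion $\bb{S}^{(p)}=\bb{S}^{(0)}\cdot\bb{S}^{(p-1)}$, induction on $p$, and the splitting of the Gram decay into a summable factor $\expo{-\epsilon d(\gamma,\xi)}$ (controlled by $m_\epsilon$) times a transmitted rate $\lambda-\delta$ combined with the triangle inequality. The paper merely packages this splitting as an invocation of the algebra property of $\rr{M}_\lambda$ (Lemma~\ref{lemma:algeb_M} with $C=G^2$ and $\delta'=\delta-\epsilon$), whereas you carry it out inline; the constant bookkeeping $G^2 m_\epsilon c_\epsilon^{p-1}=Gc_\epsilon^p$ is identical.
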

\begin{proof}
From Definition \ref{def:loc_fr} (and Remark \ref{rk_sat_lamb}) it follows that $\bb{S}^{(0)}\in \rr{M}_{\lambda}$ and \eqref{eq:best_ineq_S_P} holds for $p=0$. By definition of the frame operator,  
\[
\langle\chi_\gamma, S\chi_{\gamma'} \rangle = 
\sum_{\xi\in\Gamma}\langle\chi_\xi,\chi_{\gamma'} \rangle\;\langle\chi_\gamma, \chi_\xi\rangle\;
\]
which can be phrased as $\bb{S}_{\gamma,\gamma'}^{(1)}=(\bb{S}^{(0)}\cdot\bb{S}^{(0)})_{\gamma,\gamma'}$. In view of Lemma \ref{lemma:algeb_M} this implies that also $\bb{S}^{(1)}\in \rr{M}_{\lambda}$. The proof of Lemma \ref{lemma:algeb_M} with $C=G^2$ and $\delta':=\delta-\epsilon>0$ one obtains that the bound~\eqref{eq:best_ineq_S_P} is satisfied also for $p=1$. 
The general case follows similarly by induction. First of all,
\begin{equation}\label{eq:matSp}
\langle\chi_\gamma, S^p\chi_{\gamma'} \rangle = 
\sum_{\xi\in\Gamma_{(\alpha,\beta)}}\langle\chi_\xi,S^{p-1}\chi_{\gamma'} \rangle\;\langle\chi_\gamma, \chi_\xi\rangle\;
\end{equation}
implies that $\bb{S}_{\gamma,\gamma'}^{(p)} = (\bb{S}^{(0)}\cdot\bb{S}^{(p-1)})_{\gamma,\gamma'}$. Therefore, by applying inductively  Lemma \ref{lemma:algeb_M}
one gets that $\bb{S}^{(p)}\in \rr{M}_{\lambda}$. Moreover, by the induction hypothesis, one has that
\[
\left|\bb{S}_{\gamma,\gamma'}^{(p)}\right|\;\leqslant\;G c_\epsilon^{p-1}
\sum_{\xi\in\Gamma}
\expo{-(\lambda-\delta') d(\gamma,\xi)} \expo{-(\lambda-\delta) d(\xi,\gamma')}
\]
with $0<\delta'<\delta<\lambda$. We conclude as above.
\end{proof}

\begin{rem}
With the notation $\bb{Z}:=\bb{S}^{(0)}$, we have that $\bb{S}^{(p)}=\bb{Z}\cdot\ldots\cdot\bb{Z}=\bb{Z}^{p+1}
$ for all $p\in\N_0$.
On the other hand, from
\[
\psi = S(S^{-1}\psi) = \sum_{\xi\in\Gamma}\langle\chi_\xi,S^{-1}\psi\rangle\chi_\xi
\]
one gets
\[
\langle\chi_\gamma,\chi_{\gamma'}\rangle = \sum_{\xi\in\Gamma}\langle\chi_\xi,S^{-1}\chi_{\gamma'}\rangle\langle\chi_\gamma,\chi_{\xi}\rangle\;
\]
and similarly from 
\[
\psi = S^{-1}(S\psi) = \sum_{\xi\in\Gamma}\langle\chi_\xi,\psi\rangle S^{-1}\chi_\xi
\]
one gets
\[
\langle\chi_\gamma,\chi_{\gamma'}\rangle = \sum_{\xi\in\Gamma}\langle\chi_\xi,\chi_{\gamma'}\rangle\langle\chi_\gamma,S^{-1}\chi_{\xi}\rangle\;.
\]
In summary one obtains
\[
\bb{Z} = \bb{Z}\cdot\bb{S}^{(-1)} = \bb{S}^{(-1)}\cdot\bb{Z}\;.
\]
By induction one can prove that
\[
\bb{Z} = \bb{Z}^p\cdot\bb{S}^{(-p)} = \bb{S}^{(-p)}\cdot\bb{Z}^p\;,\qquad p\in\N\;.
\]
We also note that $\bb{S}^{(-p)}\neq(\bb{S}^{(p)})^{-1}$, namely  the matrix $\bb{S}^{(-p)}$ is of course not the inverse of the matrix $\bb{S}^{(p)}$.
\hfill $\blacktriangleleft$
\end{rem}

We are now in position to provide the proof of Proposition~\ref{Prop: localization of S^-p}.

\begin{proof}[Proof of Proposition~\ref{Prop: localization of S^-p}.]
Let $R_p:={\bf 1}- S^p\|S\|^{-p}$.
Since $S$ is positive and invertible, one has that
\[
r_p := \|R_p\| = \left\|{\bf 1}-\frac{S^p}{\|S\|^p}\right\| = 1-\left(\frac{s}{\|S\|}\right)^p\;<\;1\;
\] 
 with $s:=\min \sigma(S)>0$.
Therefore,
\[
S^{-p} = (S^p)^{-1} = \frac{1}{\|S\|^p}\sum_{k=0}^{+\infty}R_p^k\;.
\]
We expand $R_p^k$ in a binomial sum and Lemma~\ref{prop:estiSp0} yields
\begin{align*}
|\langle\chi_\gamma, R_p^k\chi_{\gamma'} \rangle|\;
&\leqslant\;G\sum_{j=0}^k\binom{k}{j}
\left(\frac{c_\epsilon^p}{\|S\|^p}\right)^j\expo{- (\lambda-\delta) d(\gamma,\gamma')}
\\
&=\;G\left(1+\frac{c_\epsilon^p}{\|S\|^p}\right)^k\expo{- (\lambda-\delta) d(\gamma,\gamma')}\;
\end{align*}
for every $0<\epsilon <\delta<\lambda$. On the other hand, it is also true that
\[
|\langle\chi_\gamma, R_p^k\chi_{\gamma'} \rangle|\;\leqslant\;r_p^k\;<\;1
\]
for all $k$ and so
\[
|\langle\chi_\gamma, R_p^k\chi_{\gamma'} \rangle|\;\leqslant\;\min\left\{r_p^k, d_{p,\epsilon}^k \expo{- (\lambda-\delta) d(\gamma,\gamma')}\right\}
\]
where $d_{p,\epsilon}:=G(1+\frac{c_\epsilon^p}{\|S\|^p})>1$, since $L\geqslant1$. As a consequence
\[
|\langle\chi_\gamma, S^{-p}\chi_{\gamma'} \rangle|\;\leqslant\;\frac{1}{\|S\|^p}
\sum_{k=0}^{+\infty}\min\left\{r_p^k, d_{p,\epsilon}^k \expo{- (\lambda-\delta) d(\gamma,\gamma')}\right\}\;.
\]
We shall now find an optimal $k_*$ above which $r_p^k$ will be used over the second factor. Let us fix a $0<\theta<\lambda-\delta$ and
for each $\gamma,\gamma'\in\Gamma$, let $k_\ast\equiv k_\ast(\gamma,\gamma',\theta)\in\N$ be the largest integer such that
\[
\left(\frac{d_{p,\epsilon}}{r_p}\right)^k\;\leqslant\;\expo{ (\lambda-\delta-\theta) d(\gamma,\gamma')}\;<\;\expo{ (\lambda-\delta) d(\gamma,\gamma')}\;,\qquad \forall\; k\leqslant k_\ast\;.
\]
Such a $k_\ast$ is explicitly given by $k_\ast = \lfloor E_{p,\theta} d(\gamma,\gamma') \rfloor$ where $E_{p,\theta} := \frac{(\lambda-\delta-\theta) }{\ln\left(\frac{d_{p,\epsilon}}{r_p}\right)}$. Since $E_{p,\theta}d(\gamma,\gamma')\leqslant k_\ast+1$ one concludes that
\[
r_p^{k_\ast+1}\;\leqslant\;r_p^{E_{p,\theta}d(\gamma,\gamma')} = \expo{\ln(r_p)[E_{p,\theta}d(\gamma,\gamma')]}\;.
\]
From this, one infers that 
\[
\begin{aligned}
|\langle\chi_\gamma, S^{-p}\chi_{\gamma'} \rangle|\;&\leqslant\;\frac{1}{\|S\|^p}\left(\sum_{k=0}^{k_\ast}d_{p,\epsilon}^k \expo{- (\lambda-\delta) d(\gamma,\gamma')}\;+\;\sum_{k=k_\ast+1}^{\infty}r_p^k\right)\\
&=\;\frac{1}{\|S\|^p}\left(\expo{- \theta d(\gamma,\gamma')}\sum_{k=0}^{k_\ast}d_{p,\epsilon}^k \expo{- (\lambda-\delta-\theta) d(\gamma,\gamma')}\;+\;r_p^{k_\ast+1}\sum_{k=0}^{\infty}r_p^k\right)\\
&\leqslant\;\frac{1}{\|S\|^p}\left(\expo{- \theta d(\gamma,\gamma')}\sum_{k=0}^{k_\ast}r_p^k\;+\;\expo{\ln(r_p)E_{p,\theta}d(\gamma,\gamma')}\sum_{k=0}^{\infty}r_p^k\right)\\
&\leqslant\;\frac{\expo{- \theta d(\gamma,\gamma')}+\expo{\ln(r_p)E_{p,\theta}d(\gamma,\gamma')}}{\|S\|^p(1-r_p)}\;.
\end{aligned}
\]
The claim follows with the choices
\[
\lambda_p := \min\left\{\theta, \ln\left(\frac{1}{r_p}\right)E_{p,\theta}\right\}\quad\text{and}\quad a_p := \frac{2}{\|S\|^p(1-r_p)}.
\]
Note that the choice of $k_\ast$ depended on $\gamma,\gamma'$, but none of the constants above. 
\end{proof}


The following is just a restatement of Proposition~\ref{Prop: localization of S^-p}.

\begin{cor}
For every $p\in\N$ there exists a $0<\lambda_p<\lambda$   such that $\bb{S}^{(-p)}\in\rr{M}_{\lambda_p}$.
\end{cor}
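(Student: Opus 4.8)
The plan is to read off the claim directly from Proposition~\ref{Prop: localization of S^-p}, which is the substantive result; the corollary merely repackages its pointwise bound as a statement about membership in the matrix algebra $\rr{M}_{\lambda_p}$. First I would fix $p\in\N$ and, in analogy with~(\ref{def scr Sp}), identify the entries of the matrix $\bb{S}^{(-p)}$ as $\bb{S}^{(-p)}_{\gamma,\gamma'}=\langle\chi_\gamma,S^{-p}\chi_{\gamma'}\rangle$, which is well-defined since $S$ is bounded, positive and invertible, so that $S^{-p}$ is a bounded operator. By definition, membership $\bb{S}^{(-p)}\in\rr{M}_{\lambda_p}$ requires that for every $0<\delta<\lambda_p$ there be a constant $c_\delta>0$ with $|\bb{S}^{(-p)}_{\gamma,\gamma'}|\leqslant c_\delta\,\expo{-(\lambda_p-\delta)d(\gamma,\gamma')}$ for all $\gamma,\gamma'\in\Gamma$.

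The deduction is then immediate. Proposition~\ref{Prop: localization of S^-p} supplies constants $a_p>0$ and $0<\lambda_p<\lambda$ with $|\bb{S}^{(-p)}_{\gamma,\gamma'}|\leqslant a_p\,\expo{-\lambda_p d(\gamma,\gamma')}$. Since $d(\gamma,\gamma')\geqslant0$ and $\delta>0$, one has $\expo{-\lambda_p d(\gamma,\gamma')}\leqslant\expo{-(\lambda_p-\delta)d(\gamma,\gamma')}$, so the very same estimate yields the required bound with the uniform choice $c_\delta=a_p$ for every $\delta\in(0,\lambda_p)$. Hence $\bb{S}^{(-p)}\in\rr{M}_{\lambda_p}$. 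The only point worth flagging is that one cannot assert membership in $\rr{M}_\lambda$: the interpolation argument in the proof of the proposition genuinely loses decay, giving only $\lambda_p<\lambda$, which by Remark~\ref{rk_sat_lamb} is consistent with the inclusion $\rr{M}_\lambda\subset\rr{M}_{\lambda_p}$.

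Since the corollary is formally trivial, the real obstacle lives in the proposition it restates, and it is worth recalling where. The strategy there is to realise $\bb{S}^{(p)}=\bb{Z}^{p+1}$ inside the algebra $\rr{M}_\lambda$ (with $\bb{Z}=\bb{S}^{(0)}$), giving good control on the positive powers, and then to reach the negative powers through the Neumann series $S^{-p}=\|S\|^{-p}\sum_k R_p^k$ with $R_p=\mathbf{1}-S^p\|S\|^{-p}$. The difficulty is that the algebra bound on $\langle\chi_\gamma,R_p^k\chi_{\gamma'}\rangle$ grows geometrically in $k$ while carrying spatial decay $\expo{-(\lambda-\delta)d(\gamma,\gamma')}$, whereas the crude norm bound $r_p^k$ decays in $k$ but sees no spatial separation. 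The crux, and the main obstacle, is the interpolation between these two estimates: choosing, for each pair $(\gamma,\gamma')$, a threshold $k_\ast$ proportional to $d(\gamma,\gamma')$ and summing the algebra bound below it and the norm bound above it. Optimising $k_\ast$ is precisely what produces a positive but strictly reduced rate $\lambda_p$, and it is the only place where a nontrivial balance must be struck.
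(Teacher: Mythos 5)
Your proposal is correct and follows the paper's own route: the paper explicitly presents this corollary as ``just a restatement'' of Proposition~\ref{Prop: localization of S^-p}, and your deduction that the uniform bound $|\langle\chi_\gamma,S^{-p}\chi_{\gamma'}\rangle|\leqslant a_p\expo{-\lambda_p d(\gamma,\gamma')}$ implies membership in $\rr{M}_{\lambda_p}$ (with $c_\delta=a_p$ for every $0<\delta<\lambda_p$) is exactly the intended, essentially trivial step. Your summary of where the real work lies --- the Neumann series for $S^{-p}$ and the interpolation at the threshold $k_\ast\propto d(\gamma,\gamma')$ between the geometrically growing algebra bound and the spatially blind norm bound $r_p^k$ --- also accurately reflects the paper's proof of the proposition.
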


\begin{ex}\label{Rk_exp_gab_exp_cos}
We provide explicit formulas for the constants $\lambda_p$ and $a_p$ for the lattice of coherent states described in Section \ref{sec:MagGabor} and Example \ref{Rk_exp_gab}.
In this situation  by adapting the computation of Lemma \ref{lemma:algeb_M} with $\delta'=0$ and $\delta=\frac{\lambda}{2}$ one gets
\[
|\langle\chi_\gamma, S\chi_{\gamma'} \rangle|\;\leqslant\; \left(\sum_{\xi\in\Gamma}\expo{- \frac{\lambda}{2}d(\xi,\gamma')}\right)
\expo{- \frac{\lambda}{2}d(\gamma,\gamma')}\;,\qquad \forall\;\gamma,\gamma'\in\Gamma\;.
\]
Since
\[
c_\ast := \left(\frac{2}{1-\expo{-\frac{\varpi}{2}\alpha_\ast^3}}\right)^d\;\geqslant\;\sum_{\xi\in\Gamma}\expo{- \frac{\lambda}{2}d(\xi,\gamma')}
\]
as computed in Example \ref{Rk_exp_gab}, one obtains that 
\[
|\langle\chi_\gamma, S^p\chi_{\gamma'} \rangle|\;\leqslant\; c_\ast^p
\expo{- \frac{\lambda}{2}d(\gamma,\gamma')}\;,\qquad \forall\;\gamma,\gamma'\in\Gamma\;
\]
by mimicking the same argument  of Lemma~\ref{prop:estiSp0}.
Now let us define $d_{p}:=1+(\frac{c_\ast}{\|S\|})^p$ (in this case $G=1$) and $r_p:=1-(\frac{s}{\|S\|})^p$ as in the proof of  Proposition~\ref{Prop: localization of S^-p}. 
Since $\frac{d_p}{r_p}>\frac{1}{r_p}>1$, it follows that 
\[
 0\;<\;\frac{\ln\left(\frac{1}{r_p}\right)}{\ln\left(\frac{d_p}{r_p}\right)}\;<\;1\;.
\]
By choosing $\theta_m:=\frac{\lambda}{2m}$ for some $m\in\N$, let us define
\[
E_p := \frac{(\frac{\lambda}{2}-\theta_m) }{\ln\left(\frac{d_{p}}{r_p}\right)} = 
\frac{\lambda}{2}\left(1-\frac{1}{m}\right)\frac{1 }{\ln\left(\frac{d_{p}}{r_p}\right)}\;.
\]
The condition $\theta_m\leqslant -\ln(r_p)E_p$
translates in 
\[
\frac{1}{m}\;\leqslant\;\left(1-\frac{1}{m}\right)\frac{\ln\left(\frac{1}{r_p}\right) }{\ln\left(\frac{d_{p}}{r_p}\right)}
\]
which implies
\[
m\;\geqslant\;1+\frac{\ln\left(\frac{d_{p}}{r_p}\right)}{\ln\left(\frac{1}{r_p}\right) }\;>\;2\;.
\]
By introducing
\[
m_p := 2+\left\lfloor\frac{\ln\left(\frac{d_{p}}{r_p}\right)}{\ln\left(\frac{1}{r_p}\right) }\right	\rfloor
\]
one can pick
\[
\lambda_p\::=\:\frac{\lambda}{2m_p}\;<\;\frac{\lambda}{4}
\]
for the exponent in Proposition~\ref{Prop: localization of S^-p}.\end{ex}

\section*{Conflicts of interest statement}

The authors certify that they have NO affiliations with or involvement in any organization or entity with any financial interest (such as honoraria; educational grants; participation in speakers’ bureaus; membership, employment, consultancies, stock ownership, or other equity interest; and expert testimony or patent-licensing arrangements), or non-financial interest (such as personal or professional relationships, affiliations, knowledge or beliefs) in the subject matter or materials discussed in this manuscript.

\end{document}